\tikzstyle{vertex}=[circle,draw=black,minimum size=8pt,inner sep=1pt]
\tikzstyle{vertex2}=[circle,draw=black,minimum size=15pt,inner sep=2pt]
\tikzstyle{edge}=[]
\tikzstyle{ypath}=[ultra thick]
\tikzstyle{dottedEdge}=[dotted,thick]
\tikzstyle{small-vertex}=[circle,draw=black,minimum size=6pt,inner sep=0pt,fill=white]
\tikzstyle{thinedges}=[draw=gray!30]
 \tikzstyle{boxes}=[draw,thick, rounded corners=3mm,text width=2.7cm,align=center,text opacity=1,fill opacity=1,fill=white]
\tikzstyle{unk}=[fill=gray!25!white]
\spnewtheorem{observation}{Observation}{\bfseries}{\itshape}
\spnewtheorem{myclaim}{Claim}{\bfseries}{\itshape}
\newcommand{\plur}{\textnormal{\textsc{Plurality}}}
\newcommand{\cccav}{\textnormal{\textsc{C-Cons-Add}}\xspace}
\newcommand{\cdcav}{\textnormal{\textsc{C-Des-Add}}\xspace}
\newcommand{\cccdv}{\textnormal{\textsc{C-Cons-Del}}\xspace}
\newcommand{\cdcdv}{\textnormal{\textsc{C-Des-Del}}\xspace}
\newcommand{\ccccav}{\textnormal{Condorcet-\textsc{C-Cons-Add}}}
\newcommand{\domset}{\textnormal{\textsc{Dominating Set}}}
\newcommand{\indset}{\textnormal{\textsc{Independent Set}}}
\newcommand{\clique}{\textnormal{\textsc{Clique}}}
\newcommand{\fpt}{\textnormal{\textsf{FPT}}}
\newcommand{\Wone}{\textnormal{\textsf{W[1]}}}
\newcommand{\Wtwo}{\textnormal{\textsf{W[2]}}}
\newcommand{\cccauv}{\textnormal{\textsc{C-Cons-Add-Unlim}}}
\newcommand{\cdcauv}{\textnormal{\textsc{C-Des-Add-Unlim}}}
\newcommand{\cccduv}{\textnormal{\textsc{C-Cons-Del-Unlim}}}
\newcommand{\cdcduv}{\textnormal{\textsc{C-Des-Del-Unlim}}}
\newcommand{\mcccav}{\textnormal{\textsc{Min-C-Cons-Add}}}
\newcommand{\mcdcdv}{\textnormal{\textsc{Min-C-Des-Del}}}
\newcommand{\probTTSAT}{\textsc{(2-2)-3SAT}\xspace}
\def\Url@twoslashes{\mathchar`\/\@ifnextchar/{\kern-.2em}{}}
\g@addto@macro\UrlSpecials{\do\/{\Url@twoslashes}}
\g@addto@macro{\UrlBreaks}{\UrlOrds}
\newcommand{\probDef}[3]{
  \begin{quote}
   #1\\
  \textbf{Input:} #2\\
  \textbf{Question:} #3
  \end{quote}
}
\newcommand{\inlinetodo}[1]{}
\newcommand{\nosemic}{\renewcommand{\@endalgocfline}{\relax}}
\crefname{subsection}{Subsection}{Subsections}
\crefname{section}{Section}{Sections}
\crefname{table}{Table}{Tables}
\crefname{figure}{Figure}{Figures}
\crefname{algorithm}{Algorithm}{Algorithms}
\crefname{theorem}{Theorem}{Theorems}
\crefname{conjecture}{Conjecture}{Conjectures}
\crefname{definition}{Definition}{Definitions}
\crefname{corollary}{Corollary}{Corollary}
\crefname{proposition}{Proposition}{Propositions}
\crefname{observation}{Observation}{Observations}
\crefname{lemma}{Lemma}{Lemmas}
\crefname{example}{Example}{Examples}
\crefname{reduction}{Reduction}{Reductions}
\crefname{algorithm}{Algorithm}{Algorithms}
\crefname{appendix}{Appendix}{Appendices}
\crefname{claim}{Claim}{Claims}
\crefname{line}{line}{lines}
\crefname{myclaim}{Claim}{Claims}
\DeclareFontFamily{U}{MnSymbolC}{}
\DeclareSymbolFont{MnSyC}{U}{MnSymbolC}{m}{n}
\DeclareFontShape{U}{MnSymbolC}{m}{n}{
    <-6>  MnSymbolC5
   <6-7>  MnSymbolC6
   <7-8>  MnSymbolC7
   <8-9>  MnSymbolC8
   <9-10> MnSymbolC9
  <10-12> MnSymbolC10
  <12->   MnSymbolC12%
}{}
\DeclareMathSymbol{\powerset}{\mathord}{MnSyC}{180}
\newcommand{\N}{\mathbb{N}}
\newcommand{\NP}{\mathsf{NP}}
\newcommand{\PP}{\mathsf{P}}
\newcommand{\ON}{O}
\newcommand{\aq}{\Leftrightarrow}
\newcommand{\gap}{\textsf{gap}\xspace}
\DeclarePairedDelimiterX\Set[2]{\lbrace}{\rbrace}%
{\   #1 \  \,\delimsize| \ \,\mathopen{} #2 \ }
\DeclarePairedDelimiterX\Seta[2]{\langle}{\rangle}%
{\   #1 \  \,\delimsize| \ \,\mathopen{} #2 \ }
\DeclareMathOperator*{\argmax}{arg\,max}
\DeclareMathOperator*{\argmin}{arg\,min}
\newcommand{\myparagraph}[1]{\smallskip \noindent \textbf{#1}}
\newcommand{\appendixproof}[2]{%
  \gappto{\appendixProofText}{\subsection{Proof of \cref{#1}}\label{proof:#1}#2}
}
\newcommand{\appendixproofwtext}[3]{%
  \gappto{\appendixProofText}{\subsection{Proof of \cref{#1}}\label{proof:#1}
  \noindent \textbf{\cref{#1}.}
  #2
  #3
  }
}
\newcommand{\appendixproofwithoutstatement}[2]{
  \gappto{\appendixProofText}{
    \label{proof:#1}
    #2
  }
}
\title{On the Computational Complexity of Variants of Combinatorial Voter Control in Elections}
\author{
Leon Kellerhals
\and
Viatcheslav Korenwein
\and
Philipp Zschoche\thanks{PZ was supported by the Stiftung Begabtenf\"orderung berufliche Bildung (SBB).}
\and
Robert~Bredereck\thanks{RB was from September 2016 to September 2017 on postdoctoral leave at the University of Oxford (GB), supported by the DFG fellowship BR 5207/2.}
\and
Jiehua Chen
}
\institute{
	TU Berlin, Germany \\
	\texttt{\{leon.kellerhals, viatcheslav.korenwein, zschoche\}@campus.tu-berlin.de}\\
        \texttt{\{robert.bredereck, jiehua.chen\}@tu-berlin.de}
}
\begin{document}

\maketitle

\begin{abstract}
	Voter control problems model situations in which an external agent tries to 
	affect the result of an election by adding or deleting the fewest number of voters. 
	The goal of the agent is to make a specific candidate either win (\emph{constructive} control) or lose (\emph{destructive} control) the election. 
        We study the constructive and destructive voter control problems when
        adding and deleting voters have a \emph{combinatorial flavor}:
        If we add (resp.\ delete) a voter~$v$,
        we also add (resp.\ delete) a bundle~$\kappa(v) $ of voters that are associated with~$v$.
        While the bundle~$\kappa(v)$ may have more than one voter,
        a voter may also be associated with more than one voter.
	We analyze the computational complexity of the four voter control problems for the Plurality rule.

        We obtain that, in general, making a candidate lose is computationally easier than making her win.
        In particular, if the bundling relation is symmetric (i.e.\ $\forall w\colon w \in \kappa(v) \Leftrightarrow v \in \kappa(w) $),
	and if each voter has at most two voters associated with him,
 	then destructive control is polynomial-time solvable
	while the constructive variant remains $\NP$-hard.
        Even if the bundles are disjoint
	(i.e.\ $\forall w\colon w \in \kappa(v) \Leftrightarrow \kappa(v) = \kappa(w) $),
        the constructive problem variants remain intractable. 
	Finally, the minimization variant of constructive control by adding voters does not admit 
        an efficient approximation algorithm, unless $\PP = \NP$.
\end{abstract}

\section{Introduction}

Since the seminal paper by \citet{bartholdi1992hard} on controlling 
an election by adding or deleting the fewest number of voters or candidates 
with the goal of making a specific candidate to win (\emph{constructive control}),
a lot of research has been devoted to the study of control for different voting rules~\cite{FalHemHemRot2009,ErdFelRotSch2015,fen-liu-lua-zhu:j:parameterized-control,liu-zhu:j:maximin,bet-uhl:j:parameterized-complexity-candidate-control,HemFavMent2016},
on different control modes~\cite{FalHemHem2011,FalHemHem2015},
or even on other controlling goals~(e.g. aiming at several candidates' victory or a specific candidate's defeat) \cite{hemaspaandra2007anyone,ProRosZoh2007}.
Recently, \citet{bulteau2015combinatorial} introduced combinatorial structures to constructive control by adding voters: When a voter is added, a bundle of other voters is added as well.
A combinatorial structure of the voter set allows us to model situations
where an external agent hires speakers to convince whole groups of people
to participate in (or abstain from) an election.
In such a scenario, convincing a whole group of voters comes at the fixed cost of paying a speaker.
\citet{bulteau2015combinatorial} model this by defining a bundle of associated voters for each voter which will be convinced to vote ``for free'' when this voter is added or deleted.
Moreover, the bundles of different voters could overlap.
For instance, convincing two bundles of two voters each to participate in the election could result in adding a total of four, three or even just two voters.

We extend the work of \citet{bulteau2015combinatorial} 
and investigate the cases where the agent wants to make a specific candidate win or lose 
by adding (resp.\ deleting) the fewest number of bundles.
We study one of the simplest voting rules, the Plurality rule, 
where each voter gives one point to his favorite candidate, 
and the candidate with most points becomes a winner.
Accordingly, an election consists of a set~$C$ of candidates and a set~$V$ of voters
who each have a favorite candidate.
Since real world elections typically contain only a small number of candidates, 
and a bundle of voters may correspond to a family with just a few members,
we are especially interested in situations where the election has only few candidates
and the bundle of each voter is small.
Our goal is to ensure that a specific candidate $ p $ becomes a winner 
(or a loser) of a given election, by convincing as few voters from 
an unregistered voter set~$W$ as possible (or as few 
voters from $ V $ as possible), together with the voters in their bundles,
to participate (or not to participate) in the election. 
We study the combinatorial voter control problems from both the classical and the parameterized complexity point of view.
We confirm \citeauthor{bulteau2015combinatorial}'s conjecture~\cite{bulteau2015combinatorial} that 
%
for the Plurality rule, 
the three problem variants: combinatorial constructive control by deleting voters and
combinatorial destructive control by adding (resp.\ deleting) voters,
behave similarly in complexity to the results of 
combinatorial constructive control by adding voters:
They are $\NP$-hard and intractable even for very restricted cases.
We can also identify several special cases, where the complexity of the four problems behave differently.
For instance, we find that constructive control tends to be computationally harder than destructive control.
We summarize our results in \cref{tab:results}.

\myparagraph{Related Work.}
\citet{bartholdi1992hard} introduced the complexity study of election control problems 
and showed that for the Plurality rule, 
the non-combinatorial variant of the voter control problems
can be solved in linear time by a using simple greedy strategy.
We refer the readers to the work by \citet{RotSch2013,FalRot2016} for general expositions on election control problems.

In the original election control setting, 
a unit modification of the election concerns usually a single voter or candidate.
The idea of adding combinatorial structure to election voter control was initiated by \citet{bulteau2015combinatorial}:
Instead of adding a voter at each time, one adds a ``bundle'' of voters to the election, 
and the bundles added to the election could intersect with each other. 
They showed that combinatorial constructive control by adding the fewest number of bundles becomes notorious hard, even for the Plurality rule and for only two candidates.
\citet{Chen2015} mentioned that even if each bundle has only two voters and the underlying bundling graph is acyclic, the problem still remains $\NP$-hard.
\citet{bulteau2015combinatorial} and \citet{Chen2015} conjectured that 
\begin{quote}
  ``the combinatorial addition of voters for destructive control, and combinatorial deletion of voters for either constructive or destructive control behave similarly to combinatorial addition of voters
for constructive control.''
\end{quote}

The combinatorial structure notion for voter control has also been extended to candidate control~\cite{CheFalNieTal2015} and electoral shift bribery~\cite{BreFalNieTal2016a}.

\myparagraph{Paper Outline.}
In \cref{sec:preliminaries}, we introduce the notation used throughout the paper. 
In \cref{sec:central_problem} we formally define the four problem variants, 
summarize our contributions, present results in which the four problem variants
(constructive or destructive, adding voters or deleting voters) behave similarly,
and provide reductions between the problem variants. 
\cref{sec:des_vs_cons,sec:disjoint,sec:inapproximability} present our main results on
three special cases (1) when
the bundles and the number of candidates are small, 
(2) when the bundles are disjoint,
and (3) when the solution size could be unlimited.
We conclude in \cref{sec:conclusion} with several future research directions.

\section{Preliminaries}
\label{sec:preliminaries}
The notation we use in this paper is based on \citet{bulteau2015combinatorial}.
We assume familiarity with standard notions regarding algorithms and complexity theory.
For each $ z \in \N $ we denote by $ [z] $ the set~$ \{ 1, \dots, z \} $.

\myparagraph{Elections.}
An \emph{election} $ E = (C, V) $ consists of a set $ C $ of $ m $ \emph{candidates} 
and a set $ V $ of \emph{voters}. 
Each voter $ v \in V $ has a favorite candidate~$c$ and we call voter~$v$ a $c$-voter.
Note that since we focus on the Plurality rule, 
we simplify the notion of the preferences of voters in an election to the favorite candidate of each voter.
For each candidate~$c\in C$ and each subset $V'\subseteq V$ of voters, 
her \emph{(Plurality) score~$s_c(V')$} is defined as the number of voters from $V'$ that have her as favorite candidate.
We say that a candidate~$c$ is a \emph{winner} of election~$(C,V)$ 
if $c$ has the highest score~$s_c(V)$.
For the sake of convenience, for each $ C' \subseteq C $, a $ C' $-voter 
denotes a voter whose favorite candidate belongs to $C' $.



\myparagraph{Combinatorial Bundling Functions.}
Given a voter set $ X $, a \emph{combinatorial bundling function} $ \kappa\colon X \to 2^X $ 
(abbreviated as \emph{bundling function}) is a function that assigns a set of voters
to each voter; we require that $x\in \kappa(x)$. 
For the sake of convenience, for each subset $ X' \subseteq X $, we define 
$ \kappa(X') = \bigcup_{x \in X'} \kappa(x) $. 
For a voter $ x \in X $, $ \kappa(x) $ is named \emph{$ x $'s bundle}; 
$ x $ is called the \emph{leader} of the bundle.
We let $ b $ denote the \emph{maximum bundle size} of a given $ \kappa $. 
Formally, $ b = \max_{x \in X} | \kappa(x) | $.
One can think of the bundling function as subsets of voters that can be added at a unit cost
(e.g. $ \kappa(x) $ is a group of voters influenced by $ x $). 

\myparagraph{Bundling graphs.}
The \emph{bundling graph} of an election is a model of how the voter's bundle functions
interact with each other. 

Let $ \kappa\colon X \to 2^X $ be a bundling function. 
The \emph{bundling graph} $ G_\kappa = (V(G_\kappa), E(G_\kappa)) $ 
is a simple, directed graph:
For each voter $ x \in X $ there is a vertex $ u_x \in V(G_\kappa) $.
For each two distinct voters $  y, z \in X $
with $ z \in \kappa(y) $, there is an arc $ (u_y, u_z) \in E(G_\kappa) $.

We consider three special cases of bundling functions/graphs which we think are natural in real world.
We say that a bundling function~$\kappa$ is \emph{symmetric} if for each two distinct voters $ x, y \in X $, 
it holds that $ y \in \kappa(x) $ if and only if $ x \in \kappa(y) $. 
The bundling graph for a symmetric bundling function always has two directed arcs connecting each two vertices.
Thus, we can assume the graph to be undirected.

We say that $ \kappa $ is \emph{disjoint} if for each two distinct voters~$x,y\in X$,
it holds that either $\kappa(x)=\kappa(y)$ or $\kappa(x)\cap \kappa(y)=\emptyset$.
It is an easy exercise to verify that disjoint bundling functions are symmetric
and the corresponding undirected bundling graphs consist only of disjoint complete subgraphs.

We say that $ \kappa $ is \emph{anonymous} if for each two distinct voters $ x $ and $ y $ with 
the same favorite candidate, 
it holds that $ \kappa(x) = \kappa(y) $, 
and that for all other voters~$z$ we have $ x \in \kappa(z) $ if and only if $  y \in \kappa(z) $.

\myparagraph{Parameterized Complexity.}
An instance $ (I, r) $ of a \emph{parameterized problem} consists of the actual 
instance $ I $ and of an integer $ r $ referred to as the \emph{parameter} 
\cite{downey2013fundamentals,flum2006parameterized,niedermeier2006invitation}.
A parameterized problem is called \emph{fixed-parameter tractable} (in \fpt{})
if there is an algorithm that solves each instance $ (I, r) $ in $ f(r) \cdot |I|^{\ON(1)} $ time,
where $ f $ is a computable function depending only on the parameter $ r $.

There is also a hierarchy of hardness classes for parameterized problems,
of which the most important ones are \Wone{} and \Wtwo{}. 
One can show that a parameterized problem $ L $ is (presumably) not in \fpt{} by
devising a \emph{parameterized reduction} from a \Wone{}-hard or a \Wtwo{}-hard 
problem to $ L $.
A parameterized reduction from a parameterized problem $ L $ to another 
parameterized problem $ L' $ is a function that acts as follows:
For two computable functions $ f $ and $ g $, given an instance $ (I, r) $ of problem~$L$, 
it computes in $ f(r) \cdot |I|^{\ON(1)} $ time an instance $ (I', r') $ of problem $ L' $ so that
$ r' \le g(r) $ and that $ (I, r) \in L $ if and only if $ (I', r') \in L' $. 
For a survey of research on parameterized complexity in computational social choice, we refer to \citet{betzler2012studies} and \citet{bredereck2014parameterized}.

\section{Central Problem}
\label{sec:central_problem}
We consider the problem of \emph{combinatorial voter control} in four variants. 
The variants differ in whether they are \emph{constructive} or \emph{destructive}, meaning
that the goal is to make one selected candidate win or lose the election. 
This goal can be achieved by either \emph{adding} voters to or \emph{deleting} voters from the given 
election. 
Due to space constraints, we only provide the definition of constructive control. Destructive control is defined analogously.

\medskip
\noindent \textbf{\textsc{Combinatorial Constructive Control by Adding}}\\
(resp.~\textbf{\textsc{Deleting}})~\textbf{\textsc{Voters}} [\textbf{\textsc{C-Cons-Add}} (resp.\ \textbf{\textsc{C-Cons-Del}})]

\noindent  \begin{tabular}{@{}lp{.87\linewidth}@{}}
  	\textbf{Input:}& An election $ E = (C, V) $, a set $ W $ of unregistered voters with 
	$ V \cap W = \emptyset $, a bundling function $ \kappa\colon W \to 2^W $ (resp.\ $\kappa\colon V\to 2^V$), a preferred winner 
	$ p \in C $, and an integer~$ k \in \N $. \\
  	\textbf{Quest.:}& Is there a size-at-most~$k$ subset $ W' \subseteq W $ (resp.\ $V'\subseteq V$) of voters such that $ p $ wins the election $(C, V \cup \kappa(W')) $ (resp.\ $(C, V \setminus \kappa(V'))$)?
\end{tabular}
\medskip

Throughout this work, when speaking of the \emph{``adding''} or \emph{``deleting''} variants, 
we mean those variants in which voters are added or, respectively, deleted.
In similar fashion, we speak of the \emph{constructive} and \emph{destructive} (abbr.\ by ``\textsc{Cons}'' and  by ``\textsc{Des}'', respectively) problem variants. 
Further, we refer to the set $ W' $ of voters as the solution for the ``adding'' variants 
(the set $ V' $ of voters for the ``deleting'' variants, respectively) and denote $ k $
as the solution size.

\myparagraph{Our Contributions.}
We study both the classical and the parameterized complexity of the four voter control variants.
We are particularly interested in the real-world setting where the given election has a small number of candidates and where only a few voters are associated to a voter.
On the one hand, we were able to confirm the conjecture given by \citet{bulteau2015combinatorial} and \citet{Chen2015}
that when parameterized by the solution size, 
\cccdv, \cdcav, and \cdcdv{} are all intractable even for just two candidates
or for bundle sizes of at most three, 
and that when parameterized by the number of candidates, 
they are fixed-parameter tractable for anonymous bundling functions.
On the other hand, we identify interesting special cases where the four problems differ in their computational complexity. 
We conclude that in general, destructive control tends to be easier than constructive control:
For symmetric bundles with at most three voters,
\cccav{} is known to be $\NP$-hard, 
while both destructive problem variants are polynomial-time solvable.
For disjoint bundles, constructive control is parameterized intractable (for the parameter ``solution size~$k$''), while destructive control is polynomial-time solvable.
Unlike for \cccdv{}, a polynomial-time approximation algorithm for \cccav{} does not exist, unless $ \PP = \NP$.
Our results are gathered in \cref{tab:results}.

\begin{table}[t]	
  \centering
  \caption{Computational complexity of the four combinatorial voter control variants with the Plurality rule.
	The parameters are 
	``the solution size $ k $'', 
	``the number $ m $ of candidates'' and
	``the maximum bundle size $ b $''.
	We refer to $|I|$ as the instance size.
	The rows distinguish between different maximum bundle sizes
	$ b $ and the number~$m$ of candidates. 
        All parameterized intractability results are for the parameter ``solution size~$k$''.
        \textsf{ILP-FPT} means \fpt{} based on a formulation as an integer linear program and the result is for the parameter~``number~$m$ of candidates''. 
      }\label{tab:results}
  \resizebox{\textwidth}{!}{
  \begin{tabular}{@{}l@{\hspace{1ex}}l@{\hspace{1ex}}l@{\hspace{2ex}}ll@{\hspace{1ex}}l@{\hspace{2ex}}l@{}}
    \toprule
    & \cccav & \cccdv && \cdcav & \cdcdv & References\\\midrule
    \underline{$\kappa$ symmetric} \\
    \; $b=2$& $O(|I|)$ & $\PP$ && $O(m|I|)$ & $O(m|I|)$ & Obs~\ref{thm:b2symm}, Thm~\ref{thm:b2symm-del}\\
    &&&&&&Thm~\ref{thm:symmdes}\\
    \; $b=3$ &\\
    \;\;\; $m=2$ & $O(|I|^5)$ & $O(|I|^5)$ && $O(|I|^5)$  & $O(|I|^5)$ & Thm~\ref{thm:m2symm},\\
    &&&&&& Cor~\ref{thm:cons_des_poly}+\ref{thm:des_poly}\\
    \;\;\; $m$ unbounded & $\NP$-h & $\NP$-h && $O(m|I|^5)$ & $O(m|I|^5)$ & Obs~\ref{thm:cons_hard}, Prop~\ref{prop:des_hard},\\
    &&&&&& Cor~\ref{thm:des_poly}\\
     \; $b$ unbounded & \\
    \;\;\; $m=2$ & \Wtwo-h & \Wtwo-h && \Wtwo-h & \Wtwo-h & \cite{bulteau2015combinatorial}, Thm~\ref{thm:similar_summary}\\
    \;\;\; $m$ unbounded and \\
    \;\;\;\;\; $\kappa$ disjoint& \Wone-h & \Wtwo-h && $O(m|I|)$ & $O(m|I|)$ & Thm~\ref{thm:disjoint_hard}+\ref{thm:symmdes}\\
    &&&&&\\[-1ex]
    \underline{$\kappa$ anonymous} & \textsf{ILP-FPT} & \textsf{ILP-FPT} && \textsf{ILP-FPT} & \textsf{ILP-FPT} & Thm~\ref{thm:similar_summary}\\
    &&&&&\\[-1ex]
    \underline{$\kappa$ arbitrary}\\
    \; $b=3$, $m=2$ & \Wone-h & \Wone-h& & \Wone-h& \Wone-h & Thm~\ref{thm:similar_summary}\\\bottomrule
  \end{tabular}
  }

\end{table}

The following theorem summarizes the conjecture given by \citet{bulteau2015combinatorial} and \citet{Chen2015}. The proofs are deferred to \cref{app:w2,app:w1,app:ilp}.
\begin{theorem}
	\label{thm:similar_summary}
	All four combinatorial voter control variants are 
	\begin{compactenum}[(i)]
		\item \Wtwo{}-hard with respect to the solution size~$k$, 
                even for only two candidates and for symmetric bundling functions~$\kappa$ 
		\item \Wone{}-hard with respect to the solution size~$k$, even 
                for only two candidates and for bundle sizes of at most three.
		\item fixed-parameter tractable with respect to the number~$m$ of candidates 
                if the bundling function $ \kappa $ is anonymous.
	\end{compactenum}
\end{theorem}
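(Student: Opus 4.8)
The plan is to handle the three parts with three different techniques: a reduction from \domset{} (equivalently \textsc{Set Cover}) for the $\Wtwo$-hardness in~(i), a reduction from \textsc{Multicolored} \clique{} for the $\Wone$-hardness in~(ii), and an integer-linear-program formulation solved via Lenstra's algorithm for the fixed-parameter tractability in~(iii). Throughout, I would prove hardness for one ``base'' variant and then transport it to the remaining three using the inter-variant reductions of \cref{sec:central_problem}, checking each time that they preserve the relevant restrictions (two candidates, symmetry, bundle size).

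For part~(i), I would start from \cccav. Given a \domset{} instance $(G,k)$ with $G=(V_G,E_G)$ and $n=|V_G|$, I create two candidates $p,a$, one voter $x_v$ for every vertex $v\in V_G$, and set $\kappa(x_v)=\{x_w : w\in N[v]\}$, i.e.\ the closed neighbourhood. Since $w\in N[v]\iff v\in N[w]$, this bundling is symmetric, and its bundle size (maximum degree plus one) is unbounded, matching the hypotheses. Making all $x_v$ unregistered $p$-voters and fixing $s_p(V)=0$, $s_a(V)=n$, adding the leaders $x_{v_1},\dots,x_{v_j}$ raises $p$'s score to $\bigl|\bigcup_i N[v_i]\bigr|$; hence $p$ wins with at most $k$ bundles if and only if $\{v_1,\dots,v_j\}$ is a dominating set of size at most~$k$. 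The same neighbourhood gadget, with the roles of $p$ and $a$ and of ``adding/deleting'' suitably exchanged and the base scores shifted by one, yields the analogous reductions for \cdcav, \cccdv, and \cdcdv; the key point each time is that overlapping bundles shrink the union, so covering all $n$ vertex-voters with few bundles is exactly the dominating-set condition. As \domset{} is $\Wtwo$-hard with respect to $k$, this proves~(i).

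For part~(ii), the two candidates force every requirement to be read off a single score comparison, and the bundle-size bound of three forbids the large neighbourhood bundles used above, so I would reduce from \textsc{Multicolored} \clique. For a target clique size $h$, I introduce an $a$-voter $x_v$ for every vertex $v$ and, for every edge $e=\{u,v\}$, a $p$-voter leader $\ell_e$ with the size-three bundle $\kappa(\ell_e)=\{\ell_e,x_u,x_v\}$. Selecting $\binom{h}{2}$ edge-leaders (one per pair of colour classes) adds $\binom{h}{2}$ to $s_p$ and adds to $s_a$ exactly the number of distinct endpoint-voters pulled in; this number is at least $h$ and equals $h$ precisely when the chosen edges are pairwise endpoint-consistent, i.e.\ form a multicoloured clique. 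Choosing the base scores so that $p$ wins if and only if at most $h$ endpoint-voters are added, and setting $k=\binom{h}{2}$, gives a parameterized reduction (as $k$ is bounded by a function of $h$); the other three variants follow by the same inter-variant reductions as in~(i). The main obstacle is exactly this step: enforcing the global ``all pairs are edges'' consistency condition through one Plurality score comparison while every bundle may touch only two endpoint-voters, which is why one-vertex-per-colour-class selectors and tight threshold arithmetic are needed.

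For part~(iii), I would exploit anonymity to collapse the instance to few ``types''. Under an anonymous $\kappa$, any two voters sharing a favourite candidate have identical bundles and identical membership in all other bundles, so there are at most $m$ distinct bundles, and the relevant state of a solution is determined by which of the at most $m$ candidate-types are activated and how many leaders of each are taken. I would write an integer linear program whose variables count, for each of the $O(m)$ types, how many of its voters are added (resp.\ deleted), express each candidate's final score through inclusion--exclusion over the (at most $2^m$, precomputable) overlaps of these bundles, and add the constraint $\sum \text{(leaders)}\le k$ together with the winning condition $s_p\ge s_d$ for all $d$ (resp.\ the losing condition $s_a>s_p$). Since the number of variables is bounded by a function of $m$ only, Lenstra's theorem decides feasibility in \fpt{} time with respect to $m$, giving~(iii) uniformly for all four variants.
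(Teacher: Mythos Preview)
Your three core techniques coincide with the paper's: a \domset{} reduction via closed-neighbourhood bundles for~(i), a \clique{}-type reduction via size-three edge bundles $\{\ell_e,x_u,x_v\}$ for~(ii), and a Lenstra-style ILP for~(iii). The direct role-swapping you sketch in~(i)---making the neighbourhood voters $p$- or $a$-voters and shifting the base scores---is exactly how the paper handles the four variants, so the substance of your argument matches.

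One genuine gap, however, is your opening plan to prove hardness for a single ``base'' variant and then transport it via the inter-variant reductions of \cref{sec:central_problem}. Those reductions go the wrong way for this purpose: \cref{thm:tur} reduces \textsc{C-Des-$X$} \emph{to} \textsc{C-Cons-$X$} and \textsc{C-$Y$-Del} \emph{to} \textsc{C-$Y$-Add}, so hardness only propagates \emph{towards} \cccav, not away from it. Starting from \cccav{} (as you do in both~(i) and~(ii)) therefore tells you nothing about \cccdv, \cdcav, or \cdcdv{} via \cref{thm:tur}. You must carry out the per-variant constructions directly---which the paper does, and which your ``roles exchanged'' remark in~(i) already gestures at---so make sure your write-up for~(ii) is explicit about this rather than deferring to \cref{thm:tur}.

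For~(iii), the paper's ILP is simpler than your sketch suggests: no inclusion--exclusion is needed. Anonymity forces that selecting any leader of type~$i$ pulls in \emph{all} voters of every type in a fixed set~$S_i$, so a single binary indicator~$y_j$ (``type~$j$ is covered by some selected bundle'') together with the linear linking constraints $\sum_{i:j\in S_i} x_i \le m!\cdot y_j$ and $\sum_{i:j\in S_i} x_i \ge y_j$ already makes every candidate's final score linear in the~$y_j$. The paper indexes types by the $m!$ full preference orders so that the same program covers Condorcet; for Plurality alone your~$m$ favourite-candidate types suffice.
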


\myparagraph{Relations between the four problem variants.}
We provide some reductions between the problem variants. 
They are used in several sections of this paper. 
The key idea for the reduction from destructive control to constructive control
is to guess the candidate that will defeat the distinguished candidate 
and ask whether one can make this candidate win the election.
The key idea for the reduction from the ``deleting'' to the ``adding'' problem variants
is to build the ``complement'' of the registered voter set. 
\newcommand{\thmturAtext}{
For each $X\in $ \textsc{\{Add, Del\}}, \textsc{C-Des-$X$} with $m$ candidates 
is Turing reducible to \textsc{C-Cons-$X$} with two candidates.
}
\newcommand{\thmturBtext}{
For each $Y\in $ \textsc{\{Cons, Des\}}, \textsc{C-$Y$-Del} with two candidates 
is many-one reducible to \textsc{C-$Y$-Add} with two candidates.
}
\begin{proposition}\label{thm:tur}
		\thmturAtext
		\thmturBtext
	All these reductions
	preserve the property of symmetry of the bundling functions.
\end{proposition}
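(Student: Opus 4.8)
The plan is to treat the two reductions separately, each by a direct construction. For the first one (destructive to constructive) I will use the ``guess the winner'' idea: a destructive instance with candidate set $C$ and distinguished candidate $p$ is a yes-instance exactly when, for some $c \in C \setminus \{p\}$, the agent can force $s_c > s_p$ in the final election. Since there are only $m-1$ choices of $c$, a polynomial-time Turing reduction may query, for each such $c$, a two-candidate constructive instance $I_c$ that asks whether $c$ can be made to win against $p$, and answer yes iff some query does. The whole content lies in constructing $I_c$ so that it faithfully reproduces the margin $s_c - s_p$ under every admissible modification, using only the two candidates $c$ and $p$ (and the same budget $k$).

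First I would build $I_c$ by keeping $c$ and $p$ as the only candidates and retaining every $c$- and $p$-voter with its favourite unchanged, since these alone determine $s_c - s_p$. The difficulty is the voters whose favourite is some third candidate $d$: they do not affect $s_c - s_p$, yet, because every voter leads its own bundle, such a $d$-voter may be the unique cheap way to add (resp.\ delete) a valuable block of $c$- and $p$-voters, so it cannot simply be discarded, and relabelling it as a $c$- or $p$-voter would distort the margin. I would neutralise each such voter $w$ by a \emph{twinning} gadget: replace $w$ by a pair $(w^c,w^p)$ of a $c$-voter and a $p$-voter that always occur together, where each lies in the other's bundle and the pair inherits all of $w$'s incidences in both directions. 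Selecting this bundle still costs one and still adds or deletes exactly the $c$- and $p$-voters that $w$ did, but the pair contributes $+1$ and $-1$ to $s_c$ and $s_p$ and hence $0$ to the margin, matching the original; because the incidences are mirrored in both directions, symmetry is preserved. Finally, since the destructive goal needs the strict inequality $s_c > s_p$ while two-candidate constructive control only guarantees $s_c \ge s_p$, I would shift the threshold by one dedicated $p$-voter in the registered part; this is immediate for the adding variant, where registered voters are never removed, and requires additional care for the deleting variant.

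For the second reduction (deleting to adding, with candidates $\{p,q\}$) I would realise the ``complementation'' at the level of voters and labels. Given a deletion instance I create an addition instance whose unregistered voters are a copy of $V$ carrying the same bundling function $\kappa$ but with the two favourites swapped, and whose fixed registered part consists of $|s_q(V)-s_p(V)|$ phantom voters chosen so that the base margin equals $s_q(V)-s_p(V)$; the budget is unchanged. Under the index-preserving correspondence $V' \leftrightarrow W'$, selecting the same leaders deletes $\kappa(V')$ on one side and adds its label-swapped image on the other, so the number of $q$-voters removed equals the number of $p$-voters added and vice versa; a short calculation shows the rearranged target inequality is then identical on both sides, simultaneously for the constructive goal ($s_p \ge s_q$) and the destructive goal ($s_q > s_p$), which is why one construction serves both choices of $Y$ (and no strict/non-strict gap arises, as the control mode is the same on both sides). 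Since $\kappa$ is transported verbatim and the phantoms are not bundle leaders, symmetry is again preserved.

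I expect the main obstacle to be the first reduction's demand that $I_c$ reproduce $s_c - s_p$ exactly: the combinatorial bundles entangle the irrelevant $d$-voters with the relevant ones, so neither dropping nor relabelling them is correct, and the twinning gadget is what rescues the projection to two candidates while keeping the bundling symmetric. The accompanying strict-versus-nonstrict correction is routine for the adding variant but genuinely delicate for deletion, because any voter added to enforce strictness can itself be removed by the agent, so it cannot serve as a permanent handicap; I would resolve this by coupling the extra $p$-voter to the bundle structure so that its removal is never beneficial, and verify that this coupling does not alter the agent's relevant moves.
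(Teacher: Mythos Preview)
Your second reduction (deleting to adding via label-swapping) is essentially the paper's: the paper also takes $W=\bar V$ with the same $\kappa$ and swapped favourites; it simply reuses $V$ itself as the registered part rather than introducing phantom voters, which gives the same base margin. Both versions are correct.

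For the first reduction your route genuinely differs from the paper's, and for good reason. The paper does \emph{not} twin third-candidate voters; it just restricts to $C_i=\{p,g_i\}$, keeps only the $p$- and $g_i$-voters, and sets $\kappa_i(w)=\kappa(w)\cap W_i$. Your worry that a discarded $d$-voter ``may be the unique cheap way to add a valuable block of $c$- and $p$-voters'' is exactly what goes wrong with that simpler construction. Concretely: take $C=\{p,g_1,g_2\}$, a single registered $p$-voter, $W=\{w_1,w_2,w_3\}$ with $w_1$ a $g_1$-voter and $w_2,w_3$ $g_2$-voters, symmetric bundles $\kappa(w_1)=\{w_1,w_2,w_3\}$, $\kappa(w_2)=\{w_1,w_2\}$, $\kappa(w_3)=\{w_1,w_3\}$, and $k=1$. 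Selecting $w_1$ makes $g_2$ beat $p$ (scores $2$ vs.\ $1$), so the destructive instance is a yes. But in the paper's $J_2$ the leader $w_1$ is discarded and $\kappa_2(w_2)=\{w_2\}$, $\kappa_2(w_3)=\{w_3\}$, so with budget $1$ one reaches at most $s_{g_2}=1<2=s_p$; $J_1$ fails similarly. Your twinning gadget repairs precisely this defect for the Add variant, and it does preserve symmetry and the budget as you argue.

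Where your proposal still has a gap is the Del variant of the first reduction. You correctly note that the extra $p$-voter $v_d$ introduced for the strict/non-strict shift can itself be deleted, and you propose to ``couple $v_d$ to the bundle structure so that its removal is never beneficial.'' But any block of fresh voters you add with net contribution $+1$ to $s_p$ will, when deleted as a whole, yield a net margin gain of $+1$ for $g_i$ at unit cost---exactly as attractive as the lone singleton $v_d$---so the obvious couplings (bundling $v_d$ with one or more fresh $g_i$-voters, or with copies of itself) either neutralise the threshold shift or leave deletion just as beneficial. This step needs a concrete construction (or a different device altogether, e.g.\ several adaptive oracle calls), not just the stated intention; as written it is not yet a proof.
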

\appendixproof{thm:tur}{
\begin{proposition}
 \thmturAtext
\end{proposition}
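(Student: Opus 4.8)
The plan is to prove the two statements separately: a candidate-guessing Turing reduction for the destructive-to-constructive direction, and a label-swapping many-one reduction for the deleting-to-adding direction.

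For the first statement I start from the observation that, under Plurality, the distinguished candidate~$p$ fails to be a winner exactly when some other candidate~$c$ strictly outscores~$p$; neither the scores of the remaining candidates nor how the voters split among them matter for the comparison of $s_c$ against $s_p$. This gives the Turing structure: for each $c\in C\setminus\{p\}$ I build one \cccav (resp.\ \cccdv) instance on the two candidates $\{c,p\}$ that asks whether $c$ can be made to beat~$p$, and I answer the destructive instance positively iff at least one of these $m-1$ queries succeeds. The technical core is to realize the ``$c$ versus $p$'' subcompetition faithfully on only two candidates while keeping the combinatorial (bundling) structure and the budget intact. To this end I would neutralize every voter whose favorite is neither $c$ nor $p$ by a doubling gadget: replace such a voter~$x$ by a $c$-voter~$x^{+}$ and a $p$-voter~$x^{-}$, so that $x$'s presence contributes $+1$ to both scores and hence nothing to $s_c-s_p$. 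In the bundling graph I join $x^{+}$ and $x^{-}$ and connect whole gadgets exactly where the original voters were connected, so that selecting any representative of a gadget costs one unit and pulls in precisely the images of the original bundle; a $\{c,p\}$-voter is copied once and its bundle is mirrored. Because I only ever add undirected edges, symmetry of~$\kappa$ is preserved, and the budget is matched in both directions.

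It remains to calibrate the strict comparison $s_c>s_p$ against the (weak) winner condition that the \textsc{Cons} oracle actually tests; this I would do with a single dummy voter that shifts the threshold by one. In the adding variant the dummy is a permanent registered voter, present in every final election, so the calibration is immediate. In the deleting variant the dummy must be attached to the bundling graph so that no budget-respecting optimal deletion ever removes it; getting this off-by-one right under deletion, so that a tie is never mistaken for a strict win (or vice versa), is the step I expect to be the main obstacle. With the dummy in place, ``$c$ wins the two-candidate election'' is equivalent to ``$c$ strictly beats~$p$ in the original'', and taking the disjunction over the $m-1$ candidates yields the claimed reduction.

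For the second statement I would use the complement/label-swap idea on the two candidates $\{p,a\}$. Deleting an $a$-voter and adding a $p$-voter have the same effect on the margin $s_p-s_a$, and symmetrically for the other label; so I build an \textsc{Add} instance whose unregistered voter set is a copy of~$V$ with the two candidate labels swapped and with the identical bundling graph (hence symmetry is again preserved), together with a fixed set of permanent registered voters chosen to reproduce the original margin $s_p-s_a$ of~$V$. Selecting the same index set for deletion and for addition then makes the two final margins coincide, so $p$ wins (resp.\ loses) after deletion iff $p$ wins (resp.\ loses) after the corresponding addition, with the budget~$k$ unchanged; this settles both $Y\in\{\textsc{Cons},\textsc{Des}\}$ at once. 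Throughout, I would verify that every gadget introduces only undirected edges, which is exactly what guarantees that all of these reductions send symmetric bundling functions to symmetric ones.
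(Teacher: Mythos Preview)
Your overall strategy for the Turing reduction---guess a potential defeater~$c\in C\setminus\{p\}$, build a two-candidate constructive instance for the pair~$\{c,p\}$, calibrate the strict-versus-weak winner condition with one extra $p$-voter, and answer ``yes'' iff some guess succeeds---is exactly the paper's approach. Where you differ is in how you handle the voters whose favorite is neither~$c$ nor~$p$. You propose a doubling gadget (replace each such voter~$x$ by a $c$-voter~$x^{+}$ and a $p$-voter~$x^{-}$, wire the gadgets together). The paper does something far simpler: it \emph{discards} those voters entirely and restricts the bundling function by intersection, setting $\kappa_i(w)\coloneqq\kappa(w)\cap W_i$ on the surviving $\{c,p\}$-voters. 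This is correct because only the $c$-score and the $p$-score matter, it trivially preserves symmetry (intersecting a symmetric relation with a subset on both sides keeps it symmetric), and it avoids blowing up bundle sizes. Your gadget also works and preserves symmetry, but it is needlessly heavy machinery here.

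Two smaller points. First, the statement you are asked to prove is only the destructive-to-constructive Turing reduction; the entire second half of your proposal (the label-swap many-one reduction from \textsc{Del} to \textsc{Add}) belongs to a separate proposition and should be dropped. Second, for the deletion variant you flag the dummy-voter calibration as ``the main obstacle'' but do not resolve it. The paper's fix is simply to give the dummy $p$-voter~$v_d$ the singleton bundle $\kappa_i(v_d)=\{v_d\}$; since $v_d$ is not in the original~$V$, one passes back to the destructive instance via $V'_i\setminus\{v_d\}$, and the extra $p$-point carried by~$v_d$ turns the weak inequality in~$J_i$ into the strict inequality needed in~$I$.
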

\begin{proof}
	First, we provide a Turing reduction from \cdcav{} to \cccav{}.
	Let $ I = (E = (C, V), W, \kappa, p, k)$ be a \cdcav{} instance with
	candidate set~$C = \{p, $ $ g_1, \dots, g_{m-1} \}$.
	We compose $m-1$ instances of the problem \cccav{} 
	$ J_i = (E_i = (C_i, V_i), W', \kappa_i, g_i, k)$,
	where 
	\begin{itemize}
	\item $ C_i \coloneqq \{ p, g_i \}$, 
	\item $ V_i \coloneqq \{v \in V\mid v\text{'s more preferred candidate is either } p \text{ or } g_i\}
		\cup \{ v_d \} $, where $ v_d $ is an additional $ p $-voter,
	\item $ W_i \coloneqq \{w \in W\mid w\text{'s more preferred candidate is either } p \text{ or } g_i\}$, and
	\item $ \kappa_i \colon W_i \to 2^{W_i}\text{ with } \kappa_i(w) \coloneqq \kappa(w) \cap W_i $.
	\end{itemize}
	We show that $ I $ has a solution of size at most $ k $ if and only if at least
	one of the instances $ J_i $ has a solution of size at most $ k $.

	For the ``only if'' part, let $ W'_i \subseteq W_i $ be a solution for $ J_i $. 
	Since $J_i$ has only two candidates~$p$ and $g_j$, this implies $ s_p(V_i \cup \kappa_i(W'_i)) \leq s_{g_i}(V_i \cup \kappa_i(W'_i)) $
	in $ J_i $. 
	Since we added an additional $ p $-voter to $ J_i $, 
	$ s_p(V \cup \kappa(W'_i)) < s_{g_i}(V \cup \kappa(W'_i))$ in $ I $.
	Thus $ p$ loses election~$(C, V \cup \kappa(W'_i)) $ and $ W'_i \subseteq W $
	is a solution for $ I $.	

	For the ``if'' part, let $ W' \subseteq W $ be a solution for $ I $, meaning that $p$ loses election~$(V \cup \kappa(W'))$.
	Thus, there exists a $ g_i \in C $ with 
	$ s_{g_i}(V \cup \kappa(W')) > s_p(V \cup \kappa(W')) $.
	Let $ W'_i \coloneqq W' \cap W_i $.
	Since $ g_i $ and $ p $ are the only candidates in $ C_i $, 
	and since $ V_i $ has one additional $ p $-voter compared to $ V $,
	it follows that $ s_{g_i}(V_i \cup \kappa_i(W'_i)) \ge s_p(V_i \cup \kappa_i(W'_i)) $.
	Thus, $ g_i$ wins election~$(C_i, V_i \cup \kappa_i(W'_i)) $, 
	and $ W'_i $ is a solution for $ J_i $.

	For the Turing reduction from \cdcdv{} to \cccdv{}, the construction of the 
	reduction is similar to the one from \cdcav{} to \cccav{}.
	Set the bundling functions 
	$ \kappa_i \colon V_i \to 2^{V_i} $ with $\kappa_i(v) \coloneqq \kappa(v) \cap V_i $
        and $\kappa_i (v_d) = \{v_d\}$.
\qed\end{proof}

\begin{proposition}	
	\thmturBtext
\end{proposition}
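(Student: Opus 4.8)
The plan is to give a single many-one reduction that works verbatim for both modes $Y \in \{\textsc{Cons},\textsc{Des}\}$, exploiting that with two candidates the outcome depends only on the \emph{score gap} between the two candidates, and that this gap is an affine function of the chosen bundle. Write the candidate set as $C = \{p, g\}$. Given a ``deleting'' instance $I = (E = (C,V), \kappa, p, k)$ of \textsc{C-$Y$-Del}, I would build an ``adding'' instance $I'$ over the same two candidates as follows: keep the registered voter set equal to $V$, and introduce an unregistered set $W = \{\, \bar v \mid v \in V \,\}$ consisting of one \emph{complementary} copy $\bar v$ of each $v \in V$, where $\bar v$ is a $g$-voter whenever $v$ is a $p$-voter and a $p$-voter whenever $v$ is a $g$-voter. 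I would mirror the bundling structure by setting $\kappa'(\bar v) = \{\, \bar u \mid u \in \kappa(v)\,\}$; note that $\bar v \in \kappa'(\bar v)$ holds since $v \in \kappa(v)$, and that $V \cap W = \emptyset$. The preferred candidate $p$ and the budget $k$ are left unchanged, and the construction is clearly computable in polynomial time.

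The crux is to verify that the reduction is gap-preserving under the natural size-preserving bijection $V' \mapsto W' := \{\, \bar v \mid v \in V'\,\}$ between candidate solutions, for which $\kappa'(W') = \{\, \bar u \mid u \in \kappa(V')\,\}$. Deleting $\kappa(V')$ changes the gap of the surviving election by $-\big(s_p(\kappa(V')) - s_g(\kappa(V'))\big)$, so the deletion gap equals $\big(s_p(V) - s_g(V)\big) - \big(s_p(\kappa(V')) - s_g(\kappa(V'))\big)$. On the adding side, because the copies have swapped favorite candidates, adding $\kappa'(W')$ contributes $s_g(\kappa(V'))$ new $p$-voters and $s_p(\kappa(V'))$ new $g$-voters, so the addition gap equals $\big(s_p(V) - s_g(V)\big) + \big(s_g(\kappa(V')) - s_p(\kappa(V'))\big)$. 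These two expressions are identical, so the score gaps coincide for corresponding solutions of equal size.

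Since with two candidates $p$ is a winner exactly when the gap is nonnegative (the constructive target) and $p$ is a loser exactly when the gap is strictly negative (the destructive target), and the gap is preserved term for term, both win/lose conditions transfer simultaneously. Hence $I$ is a yes-instance of \textsc{C-$Y$-Del} if and only if $I'$ is a yes-instance of \textsc{C-$Y$-Add}, for either choice of $Y$; this single affine identity is exactly what makes one construction suffice for both modes. Finally, since $G_{\kappa'}$ is isomorphic to $G_\kappa$ via the arc-preserving bijection $v \mapsto \bar v$, symmetry of $\kappa$ is inherited by $\kappa'$, establishing the claimed preservation of symmetry.

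I expect the only delicate point to be the bookkeeping in the gap identity---in particular, correctly tracking that a deleted $g$-voter behaves like an added $p$-voter and vice versa---together with fixing the tie-breaking convention so that a single threshold on the gap captures ``$p$ wins'' and its negation captures ``$p$ loses''. Once the affine gap identity is in place, correctness for both modes follows immediately, and there is no genuine combinatorial obstacle beyond this verification.
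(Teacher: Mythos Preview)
Your proposal is correct and follows essentially the same approach as the paper: build the ``complement'' unregistered voter set by swapping $p$- and $g$-voters, mirror the bundling structure, and verify that the resulting score gap after adding $\kappa'(W')$ equals the score gap after deleting $\kappa(V')$. Your presentation is arguably cleaner in that you make the mirrored bundling function $\kappa'$ explicit, whereas the paper overloads the symbol $\kappa$, but the underlying argument is identical.
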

\begin{proof}
	First, we provide a polynomial-time reduction from \cccdv{} to \cccav{}. 
	Let $I = (E = (C, V), \kappa, p, k)$ be a \cccdv{} instance with
	$C=\{p,g\}$.
	We define $ \overline V $ to be the \emph{complement voter set}, 
	that is, it contains the same 
	voters as $ V $ but all $ p $-voters become $ g $-voters and 
	all $ g $-voters become $ p $-voters. 
	We construct a \cccav{} instance 
	$ J = (E = (C, V), W, \kappa, p, k) $,
	where $ W \coloneqq \overline V $.
	Clearly, the construction of $ J $ can be implemented in polynomial time.
	We now claim that $ V' $ is a size-$k$ solution for $ I $ if and only if $ W' \coloneqq \overline{V'} $
	is a size-$k$ solution for $ J $.

	First, we observe that for $ V' $ to be a solution for $ I $, it must hold that 
	$ s_p(V) - s_p(\kappa(V')) \ge s_g(V) - s_g(\kappa(V')) $. 
	Similarly for $ W' $ to be a solution for $ I $, it must hold that 
	$ s_p(V) + s_p(\kappa(W')) \ge s_g(V) + s_g(\kappa(W')) $. 
	As per definition of the complement voter set and because of $ \overline{V'} = W' $, 
	we know that 
		\begin{align*}
                  s_p(\kappa(V')) = s_g(\kappa(W')) \text{ and } s_g(\kappa(V')) = s_{p} (\kappa(W')).
                \end{align*}
	Thus, it holds that 
	\begin{alignat*}{2}
          & s_p(V) - s_p(\kappa(V')) &~\ge~ & s_g(V) - s_g(\kappa(V')) \\
          \text{ if and only if }~&          s_p(V) + s_g(\kappa(V')) &~\ge~& s_g(V) + s_p(\kappa(V'))\\
          \text{ if and only if }~&       s_p(V) +  s_p(\kappa(W')) &~\ge~& s_g(V) + s_g(\kappa(W')),
        \end{alignat*} 
	implying that $ V' $ is a solution for $ I $ if and only if $ W' $ is a solution
	for $ J $.

	The reduction from \cdcdv{} to \cdcav{} works analogously. 
\qed\end{proof}
}



\section{Controlling Voters with Symmetric and Small Bundles}
\label{sec:des_vs_cons}
In this section, we study combinatorial voter control when the voter bundles are symmetric and small.
This could be the case when a voter's bundle models his close friends (including himself), close relatives, or office mates.
Typically, this kind of relations is symmetric, and the number of friends, relatives, or office mates is small. 
We show that for symmetric bundles and for bundles size at most three, 
both destructive problem variants become polynomial-time solvable, 
while both constructive variants remain $\NP$-hard.
However, if there are only two candidates,
then we can use dynamic programming to also solve the constructive control variants in polynomial time.
If we restrict the bundle size to be at most two, 
then all four problem variants can be solved in polynomial time via simple greedy algorithms.


As already observed in \cref{sec:preliminaries}, 
we only need to consider the undirected version of the bundling graph for symmetric bundles.
Moreover, if the bundle size is at most two, 
then the resulting bundling graph consists of only cycles and trees.
However, \citet{bulteau2015combinatorial}
already observed that \cccav{} is $\NP$-hard even if the resulting bundling graph solely consists of cycles, 
and \citet{Chen2015} observed that \cccav{} remains $\NP$-hard even if the resulting bundling graph consists of only directed trees of depth at most three.

\newcommand{\thmconshardtext}
{\cccav{} is $\NP$-hard even for symmetric bundling functions with maximum bundle size $b=3$.}
\begin{observation}
	\label{thm:cons_hard}
	\thmconshardtext
\end{observation}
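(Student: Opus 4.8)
The plan is to give a polynomial-time many-one reduction from \probTTSAT, the variant of 3SAT in which every variable occurs exactly twice positively and twice negatively; this is the natural source problem because the occurrence bound of two per polarity is exactly what is needed to keep the bundling graph at maximum degree two, which is precisely the shape that symmetry together with $b=3$ forces. Let the formula $\phi$ have variables $x_1,\dots,x_n$ and clauses $c_1,\dots,c_m$. I introduce the preferred candidate $p$ together with one clause candidate $\gamma_j$ for each $c_j$, so the number of candidates is unbounded (matching the regime of the statement), and I choose the registered voter set $V$ so that $p$ starts exactly $n-2$ points behind every $\gamma_j$. The budget is set to $k=n$. Since \probTTSAT is $\NP$-hard and the construction is clearly polynomial, correctness of the reduction gives $\NP$-hardness of \cccav{} for symmetric $\kappa$ with $b=3$.

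For the unregistered voters I build, for each variable $x_i$, a constant-size symmetric gadget whose leaders correspond to the two truth values of $x_i$. The intended behaviour is that selecting the \emph{true}-leader of $x_i$ adds one $p$-voter (a reward) and, unavoidably, the clause-candidate voters for the (at most two) clauses in which $x_i$ occurs negatively, i.e.\ a penalty for every clause that this choice falsifies; the \emph{false}-leader symmetrically penalises the clauses in which $x_i$ occurs positively. Because each literal occurs in at most two clauses, each leader's bundle consists of the leader plus at most two penalty voters, so $|\kappa(\cdot)|\le 3$ and the associated undirected bundling graph has maximum degree two, as required for a symmetric $\kappa$ with $b=3$. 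If one selects exactly one leader per variable, corresponding to a consistent assignment, then $p$ gains exactly $n$ points while each $\gamma_j$ gains one point per falsified literal of $c_j$; hence $s_p \ge s_{\gamma_j}$ holds simultaneously for all $j$ exactly when every clause has at most two false literals, that is, is satisfied. This yields the ``$\phi$ satisfiable $\Rightarrow$ $p$ can be made a winner'' direction.

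The two genuine difficulties, which are where the careful part of the argument lies, are both consequences of the degree-two restriction. First, a clause has three literals but no vertex may be adjacent to three others, so a clause candidate cannot simply ``hang off'' its literals; the reduction must instead route the three possible penalties of each clause through three distinct degree-$\le 2$ penalty voters, and the scoring (starting $n-2$ behind, with the threshold at two false literals) is chosen so that this routing is faithful. Second, and more seriously, I must rule out degenerate solutions: the solver could try to raise $p$'s score more cheaply by selecting penalty voters directly as bundle leaders, or could set a variable to both values, or set fewer than $n$ variables. The key device for preventing this is to orient the roles inside each gadget so that the honest one-leader-per-variable selection is the unique cheapest way to bring $p$ up to the required $n$ points; for instance by making the penalty voters the bundle leaders and the $p$-voters their neighbours (and, if necessary, closing each gadget into a short cycle), so that any shortcut incurs at least as many clause-candidate points as it gains for $p$. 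Establishing this inseparability, and thereby the converse ``$p$ winnable $\Rightarrow$ $\phi$ satisfiable'' direction, is the main obstacle and the step that will require the most care.
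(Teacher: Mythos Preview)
The paper's proof of this observation is a one-line citation: \citet[Theorem~7]{bulteau2015combinatorial} already prove $\NP$-hardness for \emph{full-$d$} bundling functions with $b=3$, and full-$d$ functions are symmetric, so the statement follows immediately. You are attempting something much harder, namely to rebuild the reduction from scratch, and the sketch has a genuine gap precisely at the point you yourself flag as ``the main obstacle''.

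Concretely, with your path-shaped gadget $a-T_i-a'$ (penalty voter, $p$-voter, penalty voter) and symmetric $\kappa$, selecting the penalty voter~$a$ as a leader yields $\kappa(a)=\{a,T_i\}$: one point for $p$ and only one clause penalty. Selecting $T_i$ yields $\kappa(T_i)=\{T_i,a,a'\}$: the same one point for $p$ but \emph{two} clause penalties. So selecting a penalty voter strictly dominates selecting the intended leader. With budget $n$, the solver can pick one penalty voter per variable, gain exactly $n$ points for $p$, and distribute only $n$ clause penalties over the $m\approx 4n/3$ clauses---one penalty per variable, chosen freely among that variable's four incident clauses. This is far more flexible than a truth assignment (which forces two coupled penalties per variable), and for typical instances it lets $p$ win even when $\phi$ is unsatisfiable. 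Your suggested fix of ``closing each gadget into a short cycle'' does not help: in the $3$-cycle $T_i-a-a'-T_i$ every vertex has the same bundle, so leaders and penalty voters become indistinguishable but still nothing couples the $T_i$-gadget to the $F_i$-gadget, and nothing enforces picking exactly one per variable.

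The missing idea, which is what the Bulteau et~al.\ construction actually does (and what the paper reuses verbatim for \cref{prop:des_hard}), is to put \emph{both} truth values of a variable on a \emph{single} $8$-cycle, alternating clause voters and reward voters so that the only way to collect two rewards at cost two from the cycle is to take two ``opposite'' clause voters, which then corresponds to a consistent truth value. Without some coupling device of this kind, the converse direction of your reduction fails.
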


\appendixproofwtext{thm:cons_hard}{\thmconshardtext}{

\begin{proof}
	\citet[Theorem 7]{bulteau2015combinatorial} have shown $\NP$-hardness of \cccav{} for full-$d$ bundling 
	functions\footnote{\label{ft:fulld}Full-$d$ bundling functions are defined by \citeauthor{bulteau2015combinatorial}~\citep[Section 2]{bulteau2015combinatorial}.} 
	and maximum bundle size $b=3$.
	Since a full-$d$ bundling function is symmetric~\cite[Observation 1]{bulteau2015combinatorial}. 
	Thus, $\NP$-hardness of \cccav{} for symmetric bundling functions and $b \leq 3$ follows.
\qed\end{proof}
}

It turns out that the reduction used by \citet{bulteau2015combinatorial} to show 
\cref{thm:cons_hard} can be adapted to show $\NP$-hardness for the deleting case. 

\newcommand{\propdeshardtext}
{\cccdv{} is $\NP$-hard even for symmetric bundling functions with maximum bundle size $b=3$.}

\begin{proposition}
  \label{prop:des_hard}
  \propdeshardtext
\end{proposition}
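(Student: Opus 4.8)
The plan is to reuse the reduction that \citet{bulteau2015combinatorial} designed for \cref{thm:cons_hard} and to turn its ``adding'' semantics into ``deleting'' semantics. Recall that their reduction, starting from an $\NP$-hard source problem, produces a \cccav{} instance with preferred candidate $p$, an unregistered voter set $W$ carrying a symmetric bundling function of maximum bundle size $b=3$, and a budget $k$; a bundle is selected by adding its leader, and the source instance is a yes-instance if and only if at most $k$ bundles can be added so that $p$ becomes a winner. The only thing that needs to change for the deletion variant is the direction in which scores move: adding a bundle raises the score of $p$ relative to its opponents by exactly the amount by which deleting a suitable ``mirror'' bundle would lower the opponents' scores relative to $p$.

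First I would build the registered election of the \cccdv{} instance directly from the gadgets of the adding reduction, rather than via a black-box transformation. Every voter that the adding reduction places in the unregistered set $W$, together with its bundle, is instead placed into the registered set $V$, while the bundling graph $G_\kappa$ --- its vertices, its edges, and hence both its symmetry and the bound $b=3$ --- is copied verbatim. I would then complement the favorite candidate carried by each such voter, in the spirit of the complement-voter-set construction used for \cref{thm:tur}: wherever selecting a bundle would have added a point to $p$ in the adding instance, deleting the corresponding bundle should remove a point from the matching opponent in the deletion instance, and symmetrically for opponent points. To reconcile the two bookkeepings I would seed $V$ with suitably many additional $p$-voters and opponent-voters, chosen so that, with all bundles still present, each pairwise gap $s_p(V)-s_c(V)$ equals the target gap that the adding reduction must reach against the opponent $c$.

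The correctness argument would mirror the proof of \cref{thm:tur}. For any set $S$ of selected bundle leaders I would show that deleting $\kappa(S)$ in the new instance shifts each gap $s_p - s_c$ by exactly the amount by which adding $\kappa(S)$ shifts it in the original instance, using complement-type identities relating the $p$-scores and opponent-scores of $\kappa(S)$ across the two instances together with the matching of the initial gaps. Because the bundling graph is untouched, the family of selectable bundles and the budget $k$ coincide on both sides, so a choice of at most $k$ bundles makes $p$ a winner after deletion if and only if the same choice makes $p$ a winner after addition, which in turn holds if and only if the source instance is a yes-instance. Since the construction only relabels favorite candidates and relocates voters between $V$ and $W$, but never alters $G_\kappa$, both symmetry and $b=3$ are preserved.

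I expect the main obstacle to be the multi-candidate bookkeeping. The two-candidate complement trick of \cref{thm:tur} does not lift directly, because with an unbounded number of opponents there is no single complement: one deleted bundle touches several opponents at once, so I must choose the seed scores and the per-voter candidate relabeling so that deletion reproduces the original addition's effect on the $p$-versus-$c$ gap \emph{simultaneously} for every opponent $c$, while still guaranteeing that no bundle deletion can help $p$ through an unintended side effect and that the budget continues to force exactly the combinatorial choice encoded by the source problem.
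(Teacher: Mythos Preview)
Your high-level direction coincides with the paper's --- both adapt the cycle gadgets of \citet{bulteau2015combinatorial} to the deletion setting --- but the concrete mechanism you propose has a genuine gap. The complement construction behind \cref{thm:tur} works because with two candidates the identity $s_p(\kappa(V'))=s_g(\kappa(W'))$ makes addition and deletion literally dual. With unboundedly many clause candidates that duality collapses: a size-three bundle carrying two $p$-voters and one $c_j$-voter shifts \emph{every} gap $s_p-s_{c_{j'}}$ (for $j'\neq j$) by $+2$ under addition, and no size-three relabeled bundle can, under deletion, move all those unbounded-in-number gaps by $+2$ while moving $s_p-s_{c_j}$ by only $+1$. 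You flag this obstacle yourself in the last paragraph but do not resolve it; without a concrete relabeling and a fresh verification that the SAT-forcing argument survives, the proof does not go through.

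The paper does not try to complement. It reduces again directly from \probTTSAT, introduces a fresh dominant opponent~$d$, turns the cycle's $p$-voters into $d$-voters while \emph{leaving the clause voters as $c_j$-voters}, and sets the target scores from scratch: $s_d(V)-s_p(V)=4n$ and $s_{c_j}(V)-s_p(V)=1$. The correctness argument is then new rather than a mirror of the adding proof: since every bundle contains at most two $d$-voters and the budget is $k=2n$, any solution must select $2n$ bundles whose $d$-voters are pairwise disjoint, which on each eight-cycle forces a coherent literal choice; the unit $c_j$-gaps then enforce that each clause is hit. The paper also handles a point your sketch omits: the seed $p$-voters now live in $V$ and could themselves be deleted, so each is bundled with a $d$-voter, making their deletion useless for closing the tight $d$-gap and hence excludable from any optimal solution.
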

\appendixproofwtext{prop:des_hard}{\propdeshardtext}{
\begin{proof}
	\citet[Theorem 7]{bulteau2015combinatorial} have shown $\NP$-hardness of \cccav{} for full-$d$ bundling
	functions\footnote{See \cref{ft:fulld}.}
	and maximum bundle size $b \leq 3$.
        Their idea was to construct a cycle for each variable~$x$: 
        The cycle contains vertices that correspond to the clauses containing either $x$ or $\overline{x}$ and are connected through some $p$-voters in such a way 
        that one must take the vertices corresponding to the clauses with the same literal, that is, either $x$ or $\overline{x}$.
        
        We utilize this construction to show the hardness for the deleting case by reducing from the following $\NP$-complete \textsc{3SAT} problem~\cite{bulteau2015combinatorial}.

        \probDef
        {\probTTSAT}
        {A collection~$\mathcal{F}$ of size-two-or-three clauses over the variable set~$\mathcal{X} = \{x_1, \ldots, x_n\}$, such that each clause has either two or three literals, and each variable appears exactly four times, twice as a positive literal and twice as a negative literal.}
        {Is there a truth assignment that satisfies all the clauses in $\mathcal{F}$?}

       Let $I=(\mathcal{F}, \mathcal{X})$ be a \probTTSAT instance.
       Now, we construct an instance for \cccdv as follows:
       Let $p$ be the candidate whose victory we want to ensure 
       and let $d$ be the winner of the original election.
       For each clause~$f_j=(\ell^{1}_j\vee \ell^{2}_j \vee \ell^{3}_j)\in \mathcal{F}$, 
       we introduce a candidate~$c_j$.
       Thus, the candidate set is~$C=\{p,d\}\cup \{c_j \mid f_j \in \mathcal{F}\}$.
       
       The voter set is composed of three groups:
       \begin{enumerate}
         \item For each variable~$x_i$,
       for each clause~$f_j$ that contains $x_i$ as a literal (that is, either positive or negated variable), 
       we introduce a $d$-voter, denoted as $v^{j}_i$ (we call him a \emph{variable voter}), 
       and a $c_j$-voter, denoted as $u^{j}_i$ (we call him a \emph{clause voter}).
       Now, let $f_j$, $f_r$, $j< r$, 
       be the clauses containing~$x_i$, 
       and let $f_s$, $f_t$, $s<t$ be the two clauses containing $\overline{x}_i$. 
       We construct the bundles of these the clause voters corresponding to $f_j, f_r, f_s, f_t$ and the variable voters 
       such that the bundling graph forms a cycle with the $d$-voters between each two clause voters.
       More precisely, 
       let 
       \begin{alignat*}{4}
         &\kappa(u^j_i) &&\coloneqq \{u^{j}_i, v^j_i, v^{s}_i\}, & \quad \kappa(v^{s}_i) &&\coloneqq \{v^{s}_i, u^{j}_i, u_i^{s}\}\\
         &\kappa(u^s_i) &&\coloneqq \{u^{s}_i, v^s_i, v^{r}_i\}, & \quad \kappa(v^{r}_i) &&\coloneqq \{v^{r}_i, u^{s}_i, u_i^{r}\}\\
         &\kappa(u^r_i) &&\coloneqq \{u^{r}_i, v^r_i, v^{t}_i\}, & \quad \kappa(v^{t}_i) &&\coloneqq \{v^{t}_i, u^{r}_i, u_i^{t}\}\\
         &\kappa(u^t_i) &&\coloneqq \{u^{t}_i, v^t_i, v^{j}_i\}, & \quad \kappa(v^{j}_i) &&\coloneqq \{v^{j}_i, u^{s}_i, u_i^{j}\}.
       \end{alignat*}
       
       \item For each clause~$C_j$ that contains only two literals, we introduce a $c_j$-voter~$u^j_c$ and set his bundle to be the
             singleton~$\kappa(u^j_c) = \{u^j_c\}$. 

       \item We introduce two $p$-voters~$w^{+}_p, w^{-}_p$ and two $d$-voters $v^{+}_d, v^{-}_d$ with the following bundles:
      \begin{alignat*}{4}
         &\kappa(w^{+}_p) &&\coloneqq \{w^{1}_p, v^{+}_d\}, & \quad \kappa(w^{-}_p) &&\coloneqq \{w^{-}_p, v^{-}_d\},\\
         & \kappa(v^{+}_d) &&\coloneqq \{v^{+}_d, w^{+}_p\},  &\quad \kappa(v^{-}_d) && \coloneqq \{v^{-}_d,  w^{-}_p\}.
       \end{alignat*}
      
       \end{enumerate}
       The bundling graph for the first group of voters is depicted in the left figure and 
       the bundling graph for the second group of voters is depicted in the middle figure, and
       the bundling graph for the last group of voters is depicted in the right figure.

       {\centering
       \begin{tikzpicture}[every node/.style={draw=black,thick,circle,inner sep=0pt}]
        \def \n {8}
        \def \radius {2cm}
        \def \margin {13}
        \def \nodesize {0.85cm}
        
        \def \s {1} \node[draw, circle,fill=black!30][minimum size=\nodesize] at ({360/\n * (\s - 1)}:\radius) {$u_i^{j}$};
        \draw[-, >=latex] ({360/\n * (\s - 1)+\margin}:\radius) 
        arc ({360/\n * (\s - 1)+\margin}:{360/\n * (\s)-\margin}:\radius);

        \def \s {2}
        \node[draw, circle][minimum size=\nodesize] at ({360/\n * (\s - 1)}:\radius) {$v^{s}_i$};
        \draw[-, >=latex] ({360/\n * (\s - 1)+\margin}:\radius) 
        arc ({360/\n * (\s - 1)+\margin}:{360/\n * (\s)-\margin}:\radius);

        \def \s {3}
        \node[draw, circle,fill=black!30][minimum size=\nodesize] at ({360/\n * (\s - 1)}:\radius) {$u_i^{s}$};
        \draw[-, >=latex] ({360/\n * (\s - 1)+\margin}:\radius) 
        arc ({360/\n * (\s - 1)+\margin}:{360/\n * (\s)-\margin}:\radius);

        \def \s {4}
        \node[draw, circle][minimum size=\nodesize] at ({360/\n * (\s - 1)}:\radius) {$v^{r}_i$};
        \draw[-, >=latex] ({360/\n * (\s - 1)+\margin}:\radius) 
        arc ({360/\n * (\s - 1)+\margin}:{360/\n * (\s)-\margin}:\radius);
        
        \def \s {5}
        \node[draw, circle,fill=black!30][minimum size=\nodesize] at ({360/\n * (\s - 1)}:\radius) {$u_i^{r}$};
        \draw[-, >=latex] ({360/\n * (\s - 1)+\margin}:\radius) 
        arc ({360/\n * (\s - 1)+\margin}:{360/\n * (\s)-\margin}:\radius);

        \def \s {6}
        \node[draw, circle][minimum size=\nodesize] at ({360/\n * (\s - 1)}:\radius) {$v^{t}_i$};
        \draw[-, >=latex] ({360/\n * (\s - 1)+\margin}:\radius) 
        arc ({360/\n * (\s - 1)+\margin}:{360/\n * (\s)-\margin}:\radius);
        
        \def \s {7}
        \node[draw, circle,fill=black!30][minimum size=\nodesize] at ({360/\n * (\s - 1)}:\radius) {$u_i^{t}$};
        \draw[-, >=latex] ({360/\n * (\s - 1)+\margin}:\radius) 
        arc ({360/\n * (\s - 1)+\margin}:{360/\n * (\s)-\margin}:\radius);

        \def \s {8}
        \node[draw, circle][minimum size=\nodesize] at ({360/\n * (\s - 1)}:\radius) {$v_i^{j}$};
        \draw[-, >=latex] ({360/\n * (\s - 1)+\margin}:\radius) 
        arc ({360/\n * (\s - 1)+\margin}:{360/\n * (\s)-\margin}:\radius);

        \def \xsh {25}
        \def \ysh {0}
        \node[draw, circle,fill=black!30, xshift=\xsh ex, yshift = \ysh ex][minimum size=\nodesize] at (0,0) (c) {$u^{j}_c$};


        \def \xsh {40}
        \def \ysh {0}
        \node[draw, circle,xshift=\xsh ex][minimum size=\nodesize, yshift= \ysh ex] at (0,0) (p1) {$w^{+}_p$};
        \node[draw, circle, right = 4ex of p1][minimum size=\nodesize] (d1) {$v^{+}_d$};
        \node[draw, circle, below = 4ex of p1][minimum size=\nodesize]  (p2) {$w^{-}_p$};
        \node[draw, circle, right = 4ex of p2][minimum size=\nodesize] (d2) {$v^{-}_d$};
        
        \foreach \i in {1,2} {\draw[-] (p\i) -- (d\i);}
        
        

        
      \end{tikzpicture}      
      \par      
    }

    This completes the construction of the voters and their bundles.
    It is straight forward that the construction runs in polynomial time.
    
    By simple calculation, 
    we have that the score difference between candidate~$d$ and $p$ is $4\cdot n$ 
    and the score difference between candidate $c_j$ and $p$ is one.

    Now, we show that the given $(\mathcal{F}, \mathcal{X})$ is a yes instance for \probTTSAT if and only if the constructed instance~$(C, V, \kappa, p, k=2n)$ is a yes instance for \cccdv.

    For the ``only if'' part, let $\sigma\colon \mathcal{X} \to \{0,1\}$ be a satisfying assignment for~$I$.
    Consider some variable~$x_i$.
    Let $f_j$, $f_r$, $j< r$, 
    be the clauses containing~$x_i$ 
    and let $f_s$, $f_t$, $s<t$ be the two clauses containing $\overline{x}_i$.
    Now, if $\sigma(x_i) = 1$, then we add to~$V'$ the two clause voters~$u^j_i$ and $u^r_i$; otherwise we add to~$V'$
    the two clause voters~$u^s_i$ and $u^t_i$.
    We do this for each variable~$x_i$.
    Since each of the bundles of $u^j_i$ and $u^r_i$ contains two $p$-voters 
    and $\sigma$ is a satisfying assignment, 
    one can verify that $|V'| = 2$ and, after deleting the bundles corresponding to $V'$, 
    only two $d$-voters and for each clause~$C_j$ at most two $c_j$-voters remain.
    Since there are in total two $p$-voters, $p$ will co-win with $d$ (and with some clause candidates).

    For the ``if'' part, let $V'$ be set of at most~$2\cdot n$ voters, such that $p$ wins election~$(C, V\setminus \kappa(V'))$.
    First of all, we observe that the score difference between $d$ and $p$ is $4\cdot n$
    and deleting voters will not increase the score of any candidate.
    Thus, in order to make $p$ win, $d$ has to lose at least $4\cdot n$ points.
    Since each bundle contains at most two $d$-voters, 
    it follows that $|V'|=2n$ and for each voter~$v\in V'$, it must be that $\kappa(v)$ contains two $d$-voters.
    This precludes including any voter from the last two groups to $V'$.
    Moreover, 
    for each two voters~$u,v\in V'$ that are on the same cycle (of the bundling graph),
    by the construction of the bundles,
    we must have that $\kappa(u)\cap \kappa(v) = \emptyset$ since otherwise we will not delete enough $d$-voters.
    This implies that, for each variable~$x$, 
    $V'$ contains the two clause voters that correspond to the clauses with either $x$ or $\overline{x}$.
    Now, if it is the case for $x$, then we set $\sigma(x)=1$; otherwise $\sigma(x)=0$.
    The assignment~$\sigma$ is obviously valid.
    It remains to show that $\sigma$ satisfies $\mathcal{F}$.
    Now, observe that each clause candidate~$c_j$ has to lose at least one point because of the original score difference.
    This implies that there must be some literal~$\ell_i$ from $C_j$ such that $u^j_i\in V'$,
    by our assignment we also set $\ell_i$ to $1$, satisfying $C_j$.
    Thus, our constructed~$\sigma$ satisfies all clauses.
\end{proof}
}

If, in addition to the bundles being symmetric and of size at most three, we have only two candidates,
then we can solve \cccav in polynomial time.
First of all, due to these constraints, 
we can assume that the bundling graph~$G_{\kappa}$ is undirected and consists of only cycles and paths.
Then, it is easy to verify that we can consider each cycle and each path separately. 
Finally, we devise a dynamic program for the case when the bundling graph is a path or a cycle,
maximizing the score difference between our preferred candidate~$p$ and the other candidate.
The crucial idea behind the dynamic program is that 
the bundles of a minimum-size solution induce a subgraph where each connected component is small.

\newcommand{\lempathtext}{
  Let $(E=(C, V), W, \kappa, p, k)$ be a \cccav{} instance such that $C=\{p,g\}$, 
and $\kappa$ is symmetric with $G_\kappa$ being a path.
Then, finding a size-at-most-$k$ subset~$W'\subseteq W$ of voters such that the score difference between $p$ and $q$
  in $\kappa(W')$ is maximum
  can be solved in $\ON(|W|^5)$ time, where $|W|$ is the size of the unregistered voter set~$W$.
}
\begin{lemma}
	\label{lemma:path}
	\lempathtext
\end{lemma}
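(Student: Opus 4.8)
The plan is to recast the task as a labeled covering problem on the path and then solve it by a left-to-right dynamic program. Since $\kappa$ is symmetric, $G_\kappa$ is the undirected path on the vertices $w_1,\dots,w_n$ (with $n=|W|$), and each bundle is a closed neighborhood in this path: $\kappa(w_i)=\{w_{i-1},w_i,w_{i+1}\}\cap W$. Hence for any $W'\subseteq W$ the set $\kappa(W')$ is exactly the closed neighborhood $N[W']$ of the chosen leaders. I would label every vertex by $\ell(w_i)=+1$ if $w_i$ is a $p$-voter and $\ell(w_i)=-1$ if $w_i$ is a $g$-voter, so that $s_p(\kappa(W'))-s_g(\kappa(W'))=\sum_{u\in N[W']}\ell(u)$. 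The task then reads: choose at most $k$ vertices of the path to maximize the total label of their closed neighborhood. The key locality observation is that whether $w_i\in N[W']$ depends only on the three selection bits of $w_{i-1},w_i,w_{i+1}$; this bounded window is what makes a small-state program possible and is the concrete meaning of the components of $\kappa(W')$ being ``locally determined'' (the paper's ``small component'' intuition).

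First I would set up a table $\mathrm{dp}[i][j][a][b]$ scanning the path from left to right, where $i\in\{1,\dots,n\}$ is the current index, $j\in\{0,\dots,k\}$ counts the leaders already chosen among $w_1,\dots,w_i$, and $a,b\in\{0,1\}$ record whether $w_{i-1}$ and $w_i$ are leaders. The stored value is the maximum attainable label sum over $w_1,\dots,w_{i-1}$, all of whose coverage is by then fully determined. The transition from $i$ to $i+1$ picks the selection bit $c\in\{0,1\}$ of $w_{i+1}$; at that moment the triple $(a,b,c)$ of bits for $w_{i-1},w_i,w_{i+1}$ is known, so $w_i$ can be settled: it lies in $N[W']$ iff $a\vee b\vee c$, and in that case one adds $\ell(w_i)$. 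The new state is $\mathrm{dp}[i+1][j+c][b][c]$. Initialization handles $w_1$ (no left neighbor, treated as a phantom unchosen vertex), and a final clean-up step settles $w_n$ (no right neighbor); the answer is the maximum over final states with $j\le k$.

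Correctness follows because the objective decomposes additively over vertices, each vertex is settled exactly once (when its right neighbor's selection is fixed), the coverage test uses precisely the window of three consecutive bits retained in the state, and the budget is enforced by $j$. The running time is $\ON(n)$ indices times $\ON(k)$ budget values times the four combinations of $(a,b)$ with $\ON(1)$ work per transition, i.e.\ $\ON(n\cdot k)\subseteq \ON(|W|^2)$, comfortably within the claimed $\ON(|W|^5)$ bound. The main obstacle I anticipate is the boundary bookkeeping: correctly settling the two endpoints $w_1,w_n$ (whose bundles have size two) and guaranteeing that every covered vertex is counted once and only once across overlapping bundles. This is exactly where the locality of the closed-neighborhood structure must be invoked carefully, but once the ``settle-on-right-neighbor'' convention is fixed, both the no-double-counting property and the budget constraint become immediate, and the maximization over all admissible selections is realized by the table.
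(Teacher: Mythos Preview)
Your proposal is correct and takes a genuinely different, more elementary route than the paper. The paper builds a three–index table $T[r,s,t]$ storing, for every subinterval $[s,t]$ of the path and every budget $r$, a maximum ``$(s,t)$-proper'' subset (one whose bundles stay inside the interval); to fill an entry it enumerates all ways to split the interval in two, and correctness rests on a separate structural claim (their Claim~1) that any optimal solution covering more than nine consecutive vertices can be cut into two independent pieces. This yields the stated $\ON(|W|^5)$ bound. Your approach instead performs a single left-to-right sweep with a constant-size window state $(a,b)$ recording the selection bits of the last two vertices, settling each vertex precisely when its right neighbor's bit becomes known. This avoids the splitting lemma entirely, gives a better $\ON(|W|\cdot k)\subseteq\ON(|W|^2)$ running time, and as a by-product produces the optimum gap for every budget $j\le k$ simultaneously---which is exactly what the enclosing theorem needs when it combines several path and cycle components. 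The paper's interval table is nominally more general (it answers queries for arbitrary subintervals), but that extra power is never used downstream; your sliding-window DP is both simpler and tighter for the stated lemma.
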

\begin{proof}
  Since $G_\kappa$ is a path, each bundle has at most three voters. 
	We denote the path in $G_\kappa$ by $(w_1,w_2,\ldots, w_{|W|})$ and introduce some definitions for this proof.
	The set $W(s,t) \coloneqq \{w_i \in W\mid s \leq i \leq t \}$ contains all voters on a sequence from $w_s$ to $w_t$. 
	For every subset~$W' \subseteq W$ we define $\gap(W') \coloneqq s_p(\kappa(W')) - s_g(\kappa(W'))$ as the score difference between $p$ and $g$.
	One can observe that if $W'$ is a solution for $(E=(C,V),W,\kappa, p,k)$ then $\gap(W') \geq s_g(V) - s_p(V)$; note that we only have two candidates.
	An \emph{$(s,t)$-proper-subset}~$W'$ is a subset of $W(s,t)$ such that $\kappa(W') \subseteq W(s,t)$.
	A \emph{maximum $(s,t)$-proper-subset}~$W'$ additionally requires that each $(s,t)$-proper-subset $W'' \subseteq W$ with $|W''| = |W'|$ has $\gap(W'') \leq \gap(W')$.

	We provide a dynamic program in which a table entry $T[r,s,t]$ contains a maximum $(s,t)$-proper-subset $W'$ of size $r$.
        We first initialize the table entries for the case where $t-s+1\leq 9$ and $r\le 9$ in linear time.


	For $t-s+1> 9$, we compute the table entry~$T[r,s,t]$ by considering every possible partition of $W(s,t)$ into two disjoint parts. 
	\begin{gather*}
		T[r,s,t] \coloneqq T[r-i,s,s+j] \cup T[i,s+j+1,t],\\
		\text{where } i, j = \argmax_{\substack{0 \leq i \leq r \\ 0 \leq j \leq t - s - 2}} \gap(T[r-i,s,s+j]) + \gap(T[i,s+j+1,t]).
	\end{gather*}
	Note that a maximum $(1,|W|)$-proper-subset $W'$ of size $r-1$ \emph{could} have a higher \gap than a maximum $(1,|W|)$-proper-subset $W''$ of size $r$.

        To show the correctness of our program,
        we define the maximization and minimization function on a set of voters $W'$, which return the largest and smallest index of all voters on the path induced by $W'$, respectively:
	\begin{align*}
		\max(W') \coloneqq \argmax_{i\in |W'|}\{ w_i \in (W') \} \text{ and }
		\min(W') \coloneqq \argmin_{i \in |W'|} \{ w_i \in (W') \}\text{.}
	\end{align*}
	First, we use the following claim to show that 
        each maximum $(s,t)$-proper-subset $W'$ can be partitioned into two $(s,t)$-proper-subsets $W_1,W_2$ such that
        the two sets~$\kappa(W_1)$ and $\kappa(W_2)$ are disjoint.	
        (The formal proof of the following claim can be found in the Appendix.)

	\begin{myclaim}
		\label{claim:split}
		Let $W'$ be a maximum $(s,t)$-proper-subset and $(\max\kappa(W') - \min\kappa(W')+1) > 9$.
		Then, there is a $j$ with $s < j < t$ such that there is an $(s,j)$-proper-subset $W_1$ and a $(j+1,t)$-proper-subset $W_2$ with 
		$|W_1| + |W_2| \leq |W'|$ and $\kappa(W_1 \cup W_2) = \kappa(W')$.
	\end{myclaim}
	\appendixproof{claim:split}{
	\textbf{\Cref{claim:split}.\quad}
	Let $W'$ be a maximum $(s,t)$-proper-subset and $(\max\kappa(W') - \min\kappa(W')+1) > 9$.
	Then, there is a $j$ with $s < j < t$, such that there is a $(s,j)$-proper-subset $W_1$ and a $(j+1,t)$-proper-subset $W_2$ with 
	$|W_1| + |W_2| \leq |W'|$ and $\kappa(W_1 \cup W_2) = \kappa(W')$.
	\begin{proof}
		Let $s' \coloneqq \min \kappa(W')$ and $t' \coloneqq \max \kappa(W')$.
		We split this in two cases.

		First, we consider the case in which $ k \leq 3 $.
				Note that $t' - s' + 1 > 9$ and $s \leq s' \leq t' \leq t$.
				Therefore, $|W(s',t')| > 9$.
				Since the maximum bundle size is $3$, we know that $|\kappa(W')|$ is at most $9$.
				This implies that there is a $w_j \in W(s',t') \setminus \kappa(W')$.
				We set $W_1 \coloneqq \Set{ w_i \in W'}{ i \leq j }$ and $W_2 \coloneqq \Set{ w_i \in W'}{ j < i }$.
				Thus, $|W_1| + |W_2| = |W'|$, $\kappa(W_1 \cup W_2) = \kappa(W')$ and $W_1$ is a $(s,j)$-proper-subset and $W_2$ is a $(j+1,t)$-proper-subset.

		Now, let us consider the case in which $ k > 3 $.
				If the set~$W'$ can be partitioned into two disjoint subsets $W_1$ and $W_2$  where $W_1\cup W_2= W'$, $W_1\cap W_2=W'$, $\kappa(W_1) \cap \kappa(W_2) = \emptyset$ and  $\kappa(W') = \kappa(W_1 \cup W_2)$, we are done.
				So assume there is no such partition.
				That means there is no $w_j \in W(s',t') \setminus \kappa(W')$, otherwise we could split $\kappa(W')$ at position $i$ as we did in case $k \leq 3$.
				Therefore, $W(s',t') = \kappa(W')$.
				Furthermore, we can conclude that there are two bundles of size $2$ (the endpoints of the path) and that all other bundles are of size $3$, because of the maximum bundle size and  $G_\kappa$ being a path.
				At least every second voter on the path is in $W'$, otherwise we could split $\kappa(W')$ at position $i$ or $i+1$, where $w_i,w_{i+1} \not \in W'$.
				Therefore, $|W' \setminus \{ w_{s'}, w_{s'+1}, \dots, w_{s'+8} \}| \leq |W'| - 4$.
				One can observe that $\kappa(\{w_{s'+1},w_{s'+4},w_{s'+7}\}) = \{ w_{s'}, w_{s'+1}, \dots, w_{s'+8} \}$.
				For the set $W'' \coloneqq (W' \setminus \{ w_{s'}, w_{s'+1}, \dots, w_{s'+8} \}) \cup \{w_{s'+1},w_{s'+4},w_{s'+7}\}$ it holds that $|W''| \leq |W'|$ and $\kappa(W') = \kappa(W'')$.
				Furthermore, one can observe that  $w_{s'+4}$ is the only element which has $w_{s'+5}$ in its bundle and $w_{s'+7}$ is the only element which has $w_{s'+6}$ in its bundle.
				That means we can split $W''$ at this point.
				We set $j\coloneqq s'+5$, $W_1 \coloneqq \Set{ w_i \in W'}{ i \leq j }$ and $W_2 \coloneqq \Set{ w_i \in W'}{ j < i }$.
				Thus, $|W_1| + |W_2| \leq |W'|$, $\kappa(W_1 \cup W_2) = \kappa(W')$ and $W_1$ is a $(s,j)$-proper-subset and $W_2$ is a $(j,t)$-proper-subset.
	\qed\end{proof}
	}

	Now, we show that the two subsets~$W_1$ and $W_2$ from \cref{claim:split} are indeed \emph{optimal}:
        There is a $j$ such that $W_1$ is a maximum $(s,j)$-proper-subset and $W_2$ is a maximum $(j+1,t)$-proper-subset.


	Assume towards a contradiction that $W_2$ is a $(j+1,t)$-proper-subset but not a maximum $(j+1,t)$-proper-subset.
	Therefore, there exists a maximum $(j+1,t)$-proper-subset $W_2'$ where $|W_2| = |W_2'|$.
	This implies that $\gap(W_1 \cup W_2') > \gap(W_1 \cup W_2)$.
	This is a contradiction to $W' = W_1 \cup W_2$ being a maximum $(s,t)$-proper-subset.
	The case in which $W_1$ is not a maximum $(s,j)$-proper-subset is analogous.
	
        Altogether, we have shown that we can compute $T[k,s,t]$ in constant time if $t - s + 1 \leq 9$, and that
	otherwise there exist an $i$ and a $j$ such that $T[k,s,t] = T[k-i,s,t-j] \cup T[i,t-j+1,t]$.
	The dynamic program considers all possible $i$ and $j$.
	The table entry $T[i, 1, |W|]$ contains a subset $W' \subseteq W$ of size $i$ with maximum \gap such that $ \kappa(W') \subseteq W(1, |W|) $, which is identical to 
	$ \kappa(W') \subseteq W $.

	This completes the correctness proof of our dynamic program. 
        The table has $\ON(k \cdot |W|^2)$  entries. 
	To compute one entry the dynamic program accesses $\ON(k \cdot |W|)$ other table entries.
	Note that the value $\gap(T[i,s,t])$ can be computed and stored after the entry $T[i,s,t]$ is computed.
	This takes at most $\ON(|W|)$ steps. 
	Thus, the dynamic program runs in $\ON(|W|^5)$~time.
\qed\end{proof}
The dynamic program can be used to solve the same problem on cycles.
\newcommand{\lemcycletext}{
    If \cccav{} has only two candidates~$p,q$ and symmetric bundling function~$\kappa$ with bundles of size at most three
    such that the bundling graph $G_\kappa$ is a cycle,
    then finding a size-at-most-$k$ subset~$W'\subseteq W$ of voters such that the score difference between $p$ and $q$
    in $\kappa(W')$ is maximum
    can be solved in $\ON(|W|^5)$ time, where $|W|$ is the size of the set~$W$ of unregistered voters.
}
\appendixproofwithoutstatement{lemma:cycle}{
  The idea for cycles is similar. 
\begin{lemma}
	\label{lemma:cycle}
	\lemcycletext
\end{lemma}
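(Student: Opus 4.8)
The plan is to reduce the cycle case to the path case of \cref{lemma:path}. Write the cycle as $(w_1, \ldots, w_n)$ with $w_n$ adjacent to $w_1$; since $\kappa$ is symmetric of maximum bundle size three, every bundle equals $\kappa(w_i) = \{w_{i-1}, w_i, w_{i+1}\}$ with indices modulo $n$, and cycles with $n \le 9$ are handled by brute force, so assume $n > 9$. As there are only two candidates, the score difference is additive over the covered voter set: writing $\sigma(v) = +1$ if $v$ is a $p$-voter and $\sigma(v) = -1$ if $v$ is a $q$-voter, we have $\gap(\kappa(W')) = \sum_{v \in \kappa(W')} \sigma(v)$, a voter covered by several leaders counting only once.

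First I would cut the cycle open into the path $P = (w_1, \ldots, w_n)$ by deleting the seam edge $\{w_n, w_1\}$, and endow $P$ with the induced path bundling function (so on $P$ the bundle of $w_1$ is $\{w_1, w_2\}$ and that of $w_n$ is $\{w_{n-1}, w_n\}$, all other bundles unchanged). The cycle-coverage and the path-coverage of any $W' \subseteq W$ agree except possibly at $w_1$ and $w_n$: in the cycle a leader $w_1$ additionally covers $w_n$ and a leader $w_n$ additionally covers $w_1$. The whole discrepancy is therefore determined by which of the four seam voters $w_1, w_2, w_{n-1}, w_n$ lie in $W'$, so I would guess $W' \cap \{w_1, w_2, w_{n-1}, w_n\}$ --- at most $2^4$ guesses.

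For a fixed guess, both the path-coverage status of $w_1$ (it is path-covered iff $w_1 \in W'$ or $w_2 \in W'$) and of $w_n$ (iff $w_n \in W'$ or $w_{n-1} \in W'$) are known, and the wrap-around correction is a constant: I add $\sigma(w_n)$ exactly when $w_1 \in W'$ but $w_n$ is not path-covered, and $\sigma(w_1)$ exactly when $w_n \in W'$ but $w_1$ is not path-covered; by construction each such voter is then newly covered, so nothing is double counted. I would then run a constrained version of the program of \cref{lemma:path} on $P$ that fixes the membership of the four seam voters to the guessed values (this only alters the brute-forced base cases at the two ends of $P$), obtaining the best path-gap for this guess. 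Adding the constant wrap-around correction and maximizing over all $2^4$ guesses and all size splits gives the optimum.

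Correctness is the standard two-sided argument: an optimal cycle solution matches exactly one guess on the seam voters, and its full selection is a proper subset of $P$ whose path-gap the constrained program of \cref{lemma:path} computes optimally, so the reported value is at least optimal; conversely every guess with a proper subset of $P$ reconstructs a feasible cycle solution of the same value, so the reported value never exceeds the optimum. The running time is $O(1)$ runs of the $O(|W|^5)$-time path program, i.e.\ $O(|W|^5)$. I expect the only delicate point --- and hence the main obstacle --- to be the seam bookkeeping: getting the ``newly covered'' conditions for the wrap-around of $w_1$ and $w_n$ exactly right so that the score difference on the cycle equals the path-gap plus the constant correction; once the four seam memberships are guessed this becomes a finite case check, and the reduction to \cref{lemma:path} is immediate.
\qed
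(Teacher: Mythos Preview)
Your reduction is correct but takes a different route from the paper's. The paper cuts the cycle at nine consecutive positions $i\in[9]$ (inserting two dummy $g$-voters at each cut so the new endpoints are never worth selecting) and argues via a pigeonhole/restructuring step that some optimal solution avoids two consecutive voters among $w_1,\ldots,w_9$, so that at one of the nine cuts the cycle instance and the corresponding path instance coincide exactly. You instead cut once, enumerate the $2^4$ possible seam patterns $W'\cap\{w_1,w_2,w_{n-1},w_n\}$, and add an explicit wrap-around correction; your correction formula is right, and your two-sided optimality argument is sound. The trade-off: the paper never needs to modify the path DP (it just feeds it nine ordinary path instances), whereas you avoid the restructuring argument but must run a \emph{constrained} version of the DP from \cref{lemma:path}.

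One point to tighten: the sentence ``this only alters the brute-forced base cases at the two ends of $P$'' is not literally true. In the recursion $T[r,1,n]=T[r-i,1,1+j]\cup T[i,j+2,n]$ the left-end constraint on $\{w_1,w_2\}$ must be carried by whichever subinterval contains these voters, and likewise on the right, so you need separate left-constrained and right-constrained tables $T_L[r,1,\cdot]$ and $T_R[r,\cdot,n]$ in addition to the unconstrained $T[r,s,t]$ (with $T_L$ splitting into $T_L\cup T$ and $T_R$ into $T\cup T_R$). This is a constant-factor overhead and does not affect the $O(|W|^5)$ bound, but it is more than just changing the initialisation; you should say so explicitly.
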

\begin{proof}
	Let $I = (E=(C,V),W,\kappa, p,k)$ be a \cccav{} instance, where $b=3$, $\kappa$ is symmetric and $C = \{ p, g \}$.
	Let $w_1\dots w_{|W|}w_1$ be the cycle in $G_\kappa$.
	Suppose that $|W| \leq 9$.
	We compute $s_p(V)$ and $s_g(V)$ in linear time.
	Every $W' \subseteq W$ of size at most $k\leq |W| \leq 9$ is a solution if $\kappa(W')$ contains $s_g(V) - s_p(V)$ more $p$-voter than $g$-voter.
	Since $|W|$ is upper bounded by a constant, the size of the power set of $W$ is also upper bounded by a constant.
	Therefore, we can check in constant time whether one of the subsets of $W$ is a solution.

	Now let us consider the interesting case. 
	Suppose that $|W| > 9$.
	The idea is to break the cycle into a path and solve it with \cref{lemma:path}.
	We create nine \cccav{} instances $I_1,\dots,I_9$ where 
	$I_i \coloneqq (E=(C,V),W \cup \{ w_b, w_e \},\kappa_i, p,k)$, 
	and $w_b$ and $w_e$ are $g$-voters and
	\begin{align*}
		\kappa_i : W \rightarrow 2^W, w \mapsto 
		\begin{cases}
			\{ w_b, w_i \},					& \text{if } w = w_b \\
			\{ w_e, w_{i+1} \}, 				& \text{if } w = w_e \\
			(\kappa(w) \setminus \{ w_{i+1} \}) \cup \{ w_e \}, & \text{if } w = w_i \\
			(\kappa(w) \setminus \{ w_i \}) \cup \{w_b \},	& \text{if } w = w_{i+1} \\
			\kappa(w),					& \text{otherwise,}
		\end{cases}
	\end{align*}
	where $i \in [9]$.
	\begin{figure}[t]
		\centering
\hspace*{-9.5em}
\raisebox{-3em}{
\begin{tikzpicture}[scale=0.8]
	\tikzset{vertex/.style = {shape=circle,draw,minimum size=1.5em}}
	\node[vertex] (1) at (0,0) {$w_1$};
	\node[vertex] (2) at (1,0) {$w_2$};
	\node[vertex] (3) at (2,0) {$w_3$};
	\node[vertex] (4) at (3,0) {$w_4$};
	\node[vertex] (5) at (4,0) {$w_5$};
	\node[vertex] (b) at (5,0) {$w_b$};

	\node[vertex] (e) at (7,0) {$w_e$};
	\node[vertex] (6) at (8,0) {$w_6$};
	\node[vertex] (7) at (9,0) {$w_7$};
	\node[vertex] (8) at (10,0) {$w_8$};
	\node[vertex] (9) at (11,0) {$w_9$};
	\node (dots) at (12,0) {$\dots$};
	\draw (1) to (2);
	\draw (2) to (3);
	\draw (3) to (4);
	\draw (4) to (5);
	\draw (5) to (b);

	\draw (e) to (6);
	\draw (6) to (7);
	\draw (7) to (8);
	\draw (8) to (9);
	\draw (9) to (dots);
	\draw (dots) to [bend left,in=15,out=165](1);

\end{tikzpicture}}

		\caption{Part of the construction used in \cref{lemma:cycle}.
		Specifically, it shows how the cycle is reduced to a path in $I_5$.}
		\label{fig:cycle}
	\end{figure}
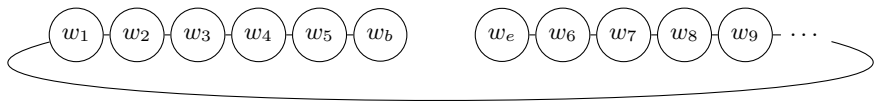
	The bundling function $\kappa_i$ is modified in such a way that the instance $I_i$ becomes a path and has $g$-voter on its endpoints, depicted in \cref{fig:cycle}. 
	Therefore, we can solve all 9 instances $I_i$ in $\ON(|W|^5)$, 
        where $|W|$ is the size of the unregistered voters (see \cref{lemma:path}).

	It remains to be shown that $I$ is a yes-instance if and only there it exists an $i \in [9]$ such that $I_i$ is a yes-instance.

	For the ``only if'' part,
			let $I_i$, for some $i \in [9]$, be a yes-instance.
			Thus, there is a set $W' \subseteq W \cup \{ w_b, w_e \}$ of size at most $k$ such that $p \in$ \plur$(C,V \cup \kappa_i(W'))$.
			One can observe that the bundles of $w_b$ and $w_e$ are useless, because they each have only two voters and at least one of them is a $g$-voter.
			For this reason we assume without loss of generality that $W' \subseteq W$.
			Since we replaced an edge between $w_i$ and $w_{i+1}$ with two $g$-voters $w_b$ and $w_e$ that are not connected, we can conclude that $w_b \in \kappa_i(W') \aq w_i \in W'$ and $w_e \in \kappa_i(W') \aq w_{i+1} \in W'$.
			We now check what happens when $w_i$ or $w_{i+1}$ is in $W'$.
			\begin{compactenum}[(i)]
				\item 
					Assume $w_i,w_{i+1} \in W'$.
					Then $\kappa_i(W') \setminus \{ w_b, w_e \} = \kappa(W')$.
					The set $\kappa(W')$ has $2$ $g$-voter less than $\kappa_i(W')$.
					Thus, $p \in$ \plur$(C,V \cup \kappa(W'))$.
				\item
					Assume $w_i \in W'$ and $w_{i+1} \not \in W'$.
					Then $\kappa(W') = (\kappa_i(W') \setminus \{ w_b \}) \cup \{ w_{i+1} \}$.
					In contrast to $\kappa_i(W')$, $\kappa(W')$ loses the $g$-voter $w_b$ and may win the voter $w_{i+1}$.
					It does not matter whether $w_{i+1}$ is a $p$-voter or $g$-voter, as
					$ s_g(\kappa(W')) \le s_g(\kappa_i(W'))$ 
					and $ s_p(\kappa(W')) \ge s_p(\kappa_i(W'))$.
					Thus, $p \in$ \plur$(C,V \cup \kappa(W'))$.
				\item
					Assume $w_{i+1} \in W'$ and $w_i \not \in W'$.
					This case is analogous to (ii).
			\end{compactenum}
			In the case where neither $w_i$ nor $w_{i+1}$ is in $W'$, we immediately get $\kappa_i(W') = \kappa(W')$ from the definition of $\kappa_i$.
			Therefore, $I$ is a yes-instance.

		For the ``if'' part, 
			let $I$ be a yes-instance.
			Thus, there is a set $W' \subseteq W$ of size at most $k$ such that $p \in$ \plur$(C,V \cup \kappa(W'))$.
			It is sufficient to show that for some $i$, $1 \le i \le 9$ it holds that $w_i,w_{i+1} \not \in W'$,
			because these are the only cases in which $\kappa_i$ differs from $\kappa$.
			Thus, $\kappa(W') = \kappa_i(W')$, which means that $I_i$ is a yes-instance.

			Now, assume that such an $i$ does not exist.
			Then, on the path $w_1\dots w_9$ at least every second voter is in $W'$.
			Therefore, $|W' \setminus \{ w_1, \dots, w_9 \}| \leq |W'| - 4$.
			One can observe that $\kappa(\{w_2,w_5,w_8\}) = \{ w_1, \dots, w_9 \}$.
			For the set $W'' \coloneqq (W' \setminus \{ w_1, \dots, w_9 \}) \cup \{w_2, w_5, w_8\}$ it holds that $|W''| \leq |W'|$ and $\kappa(W') = \kappa(W'')$, where $|W''| \leq |W'|$.
			Furthermore, one can observe that $w_6,w_7 \not \in W''$.
			That means $\kappa(W'') = \kappa_5(W'')$ and hence $p \in$ \plur$(C,V \cup \kappa_5(W''))$.
			This implies that $I_5$ is a yes-instance.
\qed\end{proof}
}
Altogether, we obtain the following.
\begin{theorem}
	\label{thm:m2symm}
	\cccav{} with a symmetric bundling function, maximum bundle size of three, and for two candidates can be solved in $\ON(|W|^5)$ time, 
        where $|W|$ is the size of the unregistered voter set.
	\label{thm:algo1}
\end{theorem}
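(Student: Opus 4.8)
The plan is to reduce a two-candidate instance to the two structural cases already solved in \cref{lemma:path,lemma:cycle} and then glue their solutions together. First I would observe that, since $\kappa$ is symmetric and every bundle contains at most three voters (one being the leader itself), the undirected bundling graph~$G_\kappa$ has maximum degree at most two: each voter~$x$ has at most two \emph{other} voters in~$\kappa(x)$, and symmetry makes this adjacency relation symmetric. A graph of maximum degree at most two decomposes into connected components, each of which is a simple path or a cycle (isolated vertices being degenerate paths). Moreover, every bundle~$\kappa(x)$ lies entirely within the component of~$x$, since all members of~$\kappa(x)$ are neighbours of~$u_x$ in~$G_\kappa$. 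Hence, writing $C_1,\dots,C_t$ for the components, the sets $\kappa(W'\cap C_1),\dots,\kappa(W'\cap C_t)$ are pairwise disjoint for every $W'\subseteq W$, so both the selection size $|W'|$ and the score difference $\gap(W')=s_p(\kappa(W'))-s_g(\kappa(W'))$ decompose additively over the components.

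Since there are only two candidates, $p$ wins $(C,V\cup\kappa(W'))$ exactly when $\gap(W')\ge s_g(V)-s_p(V)$. The task is therefore to choose at most~$k$ leaders maximising $\gap(W')$ and to test whether this maximum reaches the threshold~$s_g(V)-s_p(V)$. I would run the dynamic program of \cref{lemma:path} on each path component and that of \cref{lemma:cycle} on each cycle component to obtain, for each component~$C_\ell$ and each size~$r$, a size-$r$ subset of maximum gap; reading off these values gives a function~$g_\ell(r)$. A component on~$n_\ell$ vertices is itself a valid instance, so the respective lemma runs there in $\ON(n_\ell^5)$ time, and because $\sum_\ell n_\ell=|W|$ we have $\sum_\ell n_\ell^5\le\bigl(\sum_\ell n_\ell\bigr)^5=|W|^5$; thus computing all component tables costs $\ON(|W|^5)$ in total.

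The last step combines the per-component tables by a knapsack-style dynamic program. I would let $D[\ell][b]$ be the maximum total gap achievable using the first~$\ell$ components and at most~$b$ selected leaders, with $D[0][b]=0$ and the recurrence
\begin{equation*}
  D[\ell][b]=\max_{0\le r\le\min\{b,\,n_\ell\}}\bigl(D[\ell-1][b-r]+g_\ell(r)\bigr).
\end{equation*}
The additivity established above yields the correctness of this recurrence: restricting a globally optimal selection to each component must be a maximum-gap subset of the size it uses there, for otherwise swapping in a better same-size subset for one component would strictly increase $\gap(W')$ without changing $|W'|$. The constructed instance is a yes-instance if and only if $D[t][k]\ge s_g(V)-s_p(V)$. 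This combination takes $\ON(|W|\cdot k)=\ON(|W|^2)$ time and is dominated by the cost of building the component tables, giving the claimed $\ON(|W|^5)$ bound.

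The substantive work---the per-component optimisation---is already done in \cref{lemma:path,lemma:cycle}, so the only point needing care is the exchange argument that the per-size component optima assemble into a global optimum; this follows immediately from the disjointness of the bundles across components. I therefore expect no genuine obstacle beyond stating the additivity cleanly and tracking the running time.
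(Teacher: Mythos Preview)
Your proposal is correct and follows essentially the same approach as the paper: decompose~$G_\kappa$ into path and cycle components, invoke \cref{lemma:path} and \cref{lemma:cycle} on each component to get the per-size optimal gaps, and combine via dynamic programming. The only difference is that the paper merges the components with an interval-indexed table~$A[d,s,t]$ over ranges of components, whereas you use the simpler prefix/knapsack table~$D[\ell][b]$; your version is slightly cleaner but not materially different.
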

\begin{proof}
	Let $(E=(C,V),W,\kappa, p,k)$ be a \cccav{} instance, where the maximum bundle size $ b $ is three, $\kappa$ is symmetric, and $C = \{ p, g \}$.
        This means that all connected components $C_1,\dots,C_{\ell}$ of $G_\kappa$ are path or cycles.
	Furthermore, all bundles only contain voters from one connected component.
	We define a dynamic program in which each table entry $A[i,s,t]$ contains a solution~$W' \subseteq W$ of size $i$,
        where $\kappa(W') \subseteq V(C_s) \cup \dots \cup V(C_t)$ and $1 \leq s \leq t \leq \ell$:
	\begin{compactenum}[(i)]
          \item If $s=t=j$, then $A[i,s,t] = T[i,1,|V(C_j)|]$, where $T$ is the dynamic program of $C_j$, depending on whether $C_j$ is a path or cycle. 
          \item Otherwise, we build the table as follows:
          
          {\centering
            $A[d,s,t] = A[d - i, s, s+j] \cup A[i, s+j+1, t], $ where 

            $\displaystyle i,j = \argmax_{\substack{0 \leq i \leq d \\ 1 \leq j \leq t-s-1}} \gap(A[d-i,s,s+j]) + \gap(A[i,s+j+1,t])$.
          \par}
        \end{compactenum}
        \noindent  Each of the table entries~$A[i,j,j]$ can be computed in $\ON(i^2 \cdot |V(C_j)|^3)$ time (see \cref{lemma:path,lemma:cycle})
        and each of the table entries~$A[i,s,t]$ for $s< t$ can be computed in $O(k\cdot \ell)$ time.
        Since we have $k\cdot \ell^2$ entries, the total running time is
        \smallskip

        {\centering
        $\sum_{i=1}^{\ell} \ON(k^2 \cdot |V(C_j)|^3) = \ON(k^2) \sum_{i=1}^{\ell} \ON(|V(C_i)|) = \ON(k^2 \cdot |W|^3)\text{.}$\hfill\qed
      \par}
\end{proof}

From the polynomial-time solvability of \cref{thm:m2symm} and by \cref{thm:tur}, 
we obtain the following results:
\begin{corollary}
	\label{thm:cons_des_poly}
	\cccdv{} with a symmetric bundling function, 
        a maximum bundle size of three and two candidates can be solved in $ \ON(|V|^5) $ time, where $|V|$ is the size of the voters.
\end{corollary}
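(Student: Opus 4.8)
The plan is to chain the many-one reduction from \cccdv{} to \cccav{} (for two candidates) supplied by \cref{thm:tur} with the polynomial-time algorithm of \cref{thm:m2symm}. Concretely, given a \cccdv{} instance $I = (E = (C, V), \kappa, p, k)$ with $C = \{p, g\}$, symmetric $\kappa$, and maximum bundle size at most three, I would first apply \cref{thm:tur} with $Y = \textsc{Cons}$ to obtain in linear time an equivalent \cccav{} instance $J$. Recall that this reduction merely builds the complement voter set $\overline{V}$ (swapping the roles of $p$ and $g$) and reuses the bundling function $\kappa$ on the relabelled voters, so that the bundling graph of $J$ is isomorphic to that of $I$.

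The key observation is that this structural isomorphism carries over every property that \cref{thm:m2symm} requires: $J$ still has exactly two candidates, its bundling function is still symmetric (this is precisely the symmetry-preservation clause of \cref{thm:tur}), its maximum bundle size is still at most three, and its unregistered voter set $W = \overline{V}$ has size $|W| = |V|$. Hence $J$ falls exactly into the tractable class of \cref{thm:m2symm}, and I would solve $J$ in $\ON(|W|^5) = \ON(|V|^5)$ time; by correctness of the reduction, the answer computed for $J$ is the answer for the original instance $I$.

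The only point that needs care---and the natural place for an oversight to slip in---is confirming that the reduction really preserves both the symmetry and the bound $b \le 3$ on the bundle size, since \cref{thm:m2symm} fails without them. Both follow because the reduction does not touch the bundling relation at all: it keeps $\kappa$ intact and only repaints the favourite candidates of the voters, leaving the underlying (undirected) bundling graph unchanged. Since the reduction itself runs in linear time and $|W| = |V|$, the total running time is dominated by the $\ON(|V|^5)$ cost of the algorithm invoked from \cref{thm:m2symm}, which yields the claimed bound.
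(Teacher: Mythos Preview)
Your proposal is correct and follows exactly the approach the paper intends: the paper derives this corollary directly ``from the polynomial-time solvability of \cref{thm:m2symm} and by \cref{thm:tur}'', i.e., apply the symmetry-preserving many-one reduction from \cccdv{} to \cccav{} for two candidates and then invoke \cref{thm:m2symm}. Your additional care in verifying that the reduction leaves the bundling relation untouched (hence preserving both symmetry and the bound $b\le 3$) and that $|W|=|V|$ is precisely what is needed to justify the stated $\ON(|V|^5)$ running time.
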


\newcommand{\cordespolytext}{%
  	\cdcav{} and \cdcdv{} with a symmetric bundling function and maximum bundle size three can be solved in time~$ \ON(m\cdot |W|^5) $ and $\ON(m\cdot |V|^5)$, respectively, where $ m $ is the number of candidates,
       and $|W|$ and $|V|$ are the sizes of the unregistered and registered voter set, respectively.%
}
\begin{corollary}
	\label{thm:des_poly}
	\cordespolytext
\end{corollary}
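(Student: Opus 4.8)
The plan is to combine the two polynomial-time algorithms for the two-candidate case---\cref{thm:m2symm} for the adding variant and \cref{thm:cons_des_poly} for the deleting variant---with the Turing reductions from destructive to constructive control supplied by \cref{thm:tur}. Recall that the first part of \cref{thm:tur} yields Turing reductions that collapse the number of candidates to two while preserving symmetry of the bundling function: \cdcav{} (resp.\ \cdcdv{}) with $m$ candidates is Turing reducible to \cccav{} (resp.\ \cccdv{}) with two candidates.

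First I would treat \cdcav{}. Given an instance with a symmetric bundling function $\kappa$ of maximum bundle size three and $m$ candidates, the reduction of \cref{thm:tur} produces $m-1$ instances $J_1,\dots,J_{m-1}$ of \cccav{}, each with only the two candidates $p$ and $g_i$, such that the original instance is a yes-instance if and only if at least one $J_i$ is. The key observation is that every reduced bundling function arises by restriction, namely $\kappa_i(w) = \kappa(w) \cap W_i$, so that $\kappa_i(w) \subseteq \kappa(w)$; hence the maximum bundle size of $\kappa_i$ is still at most three, and (as already asserted in \cref{thm:tur}) symmetry is retained. Consequently each $J_i$ satisfies exactly the hypotheses of \cref{thm:m2symm} and is solvable in $\ON(|W_i|^5) = \ON(|W|^5)$ time. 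Since the reduction creates $m-1$ such instances and building each of them costs time dominated by the solving step, the overall running time is $\ON(m \cdot |W|^5)$.

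The deleting variant \cdcdv{} is handled identically: the reduction of \cref{thm:tur} produces $m-1$ two-candidate instances of \cccdv{} whose bundling functions are again restrictions of $\kappa$ (with the single auxiliary $p$-voter $v_d$ assigned the singleton bundle $\{v_d\}$), so they remain symmetric with maximum bundle size three. Each such instance is solved by \cref{thm:cons_des_poly} in $\ON(|V|^5)$ time, yielding the claimed bound $\ON(m \cdot |V|^5)$.

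I do not expect a genuine obstacle here: the argument is essentially bookkeeping, and the only point requiring care is to confirm that the two invariants needed by \cref{thm:m2symm,thm:cons_des_poly}---symmetry and maximum bundle size at most three---survive the reduction. Symmetry is guaranteed by \cref{thm:tur}, while the bundle-size bound is immediate because every reduced bundle $\kappa_i(w)$ is a subset of the original bundle $\kappa(w)$ (and, in the deleting case, the added voter $v_d$ forms a singleton bundle). Once these facts are checked, the running-time bounds follow by summing the per-instance cost over the $m-1$ candidate guesses.
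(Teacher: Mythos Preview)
Your proposal is correct and follows essentially the same approach as the paper: apply the Turing reductions of \cref{thm:tur} to obtain $m-1$ two-candidate instances, then solve each with \cref{thm:m2symm} (for \cdcav{}) or \cref{thm:cons_des_poly} (for \cdcdv{}). Your write-up is in fact slightly more careful than the paper's, since you explicitly verify that the maximum bundle size is preserved under restriction to $W_i$ (resp.\ $V_i$), a point the paper leaves implicit.
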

\appendixproofwtext{thm:des_poly}{\cordespolytext}{
\begin{proof}
	To obtain the result for \cdcav{}, use the Turing reduction from \cdcav{} to \cccav{} (\cref{thm:tur}).
	Note that the instances produced by the Turing reduction only have two candidates.
	Thus, we can use \cref{thm:m2symm}.

	Analogously, we can use \cref{thm:cons_des_poly} to obtain the result for \cdcdv{}.

	As in both Turing reductions we create $O(m)$ instances, 
        we need to solve $ \ON(m) $ instances, resulting in a running time of $ \ON(m\cdot n^5)$ in both cases, where $n$ is the number of either the unregistered (for \cdcav) or the registered (for \cdcdv) voters .
\qed\end{proof}
}

\section{Controlling Voters with Disjoint Bundles} 
\label{sec:disjoint}
We have seen in \cref{sec:des_vs_cons} that 
the interaction between the bundles influences the computational complexity of our combinatorial voter control problems.
For instance, adding a voter~$v$ to the election may lead to adding another voter~$v'$ with $v\in \kappa(v)$.
This is crucial for the reductions used to prove \cref{thm:similar_summary} and \cref{thm:cons_hard}.
Thus, it would be interesting to know whether the problem becomes tractable if it is not necessary to add two bundles that share some voter(s).
More specifically, we are interested in the case where the bundles are disjoint,
meaning that we do not need to consider every single voter, but only the bundles as a whole, 
as it does not matter which voters of a bundle we select.


First, we consider disjoint bundles of size at most two.
This is the case for voters who have a partner. 
If a voter is convinced to participate in or leaves the election, 
then the partner is convinced to do the same.
Note that this is equivalent to having symmetric bundles of size at most two.
\citet[Theorem 6]{bulteau2015combinatorial} constructed a linear-time algorithm for \cccav{} if the maximum bundle size is two and $ \kappa $ is 
a full-$ d $ bundling function (which implies symmetry). 
We can verify that their algorithm actually works for disjoint bundles of size at most two.
Thus, we obtain the following.

\newcommand{\btwosymm}{%
  \cccav{} with a symmetric bundling function and with bundles of size at most two
  can be solved in $ \ON(|I|) $ time, where $|I|$ is the input size. %
}
\begin{observation}\label{thm:b2symm}
  \btwosymm
\end{observation}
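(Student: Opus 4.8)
The plan is to show that \cccav{} with a symmetric bundling function and bundles of size at most two is solvable in linear time by a greedy strategy. Since $\kappa$ is symmetric with $b\le 2$, the bundling graph $G_\kappa$ is undirected with maximum degree one, so its connected components are either isolated vertices (bundles of size one) or single edges (bundles of size two). Crucially, for any edge $\{w,w'\}$ we have $\kappa(w)=\kappa(w')=\{w,w'\}$, so the two endpoints are interchangeable: selecting either leader adds exactly the same pair of voters at unit cost. This means we need only reason about \emph{which bundles} to add, not which individual voters, and each distinct bundle can be added once for a cost of one.

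First I would classify the available bundles by their net effect on the score difference between $p$ and the current leading opponent. For two candidates the situation reduces to maximizing $\gap$; for more candidates one would track, for each opponent, how much the gap improves. Each bundle is one of a few types: a singleton $p$-voter (gain $+1$ for $p$), a singleton non-$p$-voter (useless or harmful), a pair of two $p$-voters ($+2$), a mixed pair, or a pair of two non-$p$-voters. The key observation is that since every bundle costs exactly one regardless of size, the greedy rule is simply to add, in order of decreasing benefit to $p$, the bundles that help $p$ most, stopping once $p$ wins or the budget $k$ is exhausted. I would argue a standard exchange lemma: an optimal solution can always be transformed, without increasing its size or worsening $p$'s position, into one that picks the most beneficial bundles first, because bundles are independent (disjoint by symmetry with $b\le 2$) and interchangeable within a pair.

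The main obstacle—and the reason this is only an \emph{observation} citing prior work rather than a fresh proof—is handling multiple candidates correctly, since a bundle that raises $p$'s margin against one opponent may simultaneously raise a rival's score and create a new threat. The cleanest route is therefore to invoke the existing algorithm: \citet[Theorem 6]{bulteau2015combinatorial} already give a linear-time algorithm for the full-$d$ (hence symmetric) case with $b=2$, and I would verify that their procedure never relies on any property of full-$d$ bundling functions beyond what symmetry with $b\le 2$ guarantees, namely that the components of $G_\kappa$ are isolated vertices and single edges with interchangeable endpoints. Concretely, I would check that each step of their greedy selection reads only the bundle sizes and the favorite candidates of the (at most two) voters in each bundle, quantities that are identical under the weaker disjointness/symmetry hypothesis.

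Hence the argument is: the symmetric $b\le 2$ instances have exactly the structural features (disjoint interchangeable bundles of size one or two) that the \citet{bulteau2015combinatorial} algorithm exploits, so that algorithm applies verbatim and runs in $\ON(|I|)$ time on the instance size $|I|$, yielding the claim. I would state this reduction to their result explicitly rather than re-deriving the full greedy correctness, noting only the one-line justification that symmetry with $b\le 2$ forces the bundling graph into the required form.
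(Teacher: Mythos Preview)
Your proposal is correct and takes essentially the same approach as the paper: both observe that symmetry together with $b\le 2$ forces the bundles to be disjoint (equivalently, $G_\kappa$ has maximum degree one), and then invoke \citet[Theorem~6]{bulteau2015combinatorial} after noting that their algorithm only uses this structural property rather than the stronger full-$d$ assumption. The paper's justification is in fact even terser than yours---it is a one-line remark preceding the observation---so your explicit verification that $\kappa(w)=\kappa(w')$ for each edge $\{w,w'\}$ is a welcome elaboration rather than a deviation.
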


If we want to delete instead of add voter bundles, 
the problem reduces to finding a special variant of the \textsc{$f$-Factor} problem, 
which is a generalization of the well-known matching problem and can still be solved in polynomial time~\cite{Anstee1985,Anstee1993}.

\newcommand{\btwosymmdel}{%
  \cccdv{} with a symmetric bundling function and with bundles of size at most two
  can be solved in polynomial time.
}
\begin{theorem}\label{thm:b2symm-del}
  \btwosymmdel
\end{theorem}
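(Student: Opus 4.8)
The plan is to use the restricted structure of symmetric size-$\le 2$ bundles to turn \cccdv{} into a minimum-cardinality degree-constrained subgraph problem, and then invoke the polynomial-time \textsc{$f$-Factor} machinery of \citet{Anstee1985,Anstee1993}. Since $\kappa$ is symmetric with $b\le 2$, the undirected graph~$G_\kappa$ has maximum degree one, so it is a matching and the voters of~$V$ are partitioned into \emph{singletons}~$\{v\}$ and \emph{pairs}~$\{v,w\}=\kappa(v)=\kappa(w)$; deleting any voter of a pair deletes exactly that pair, so a minimum solution~$V'$ picks at most one voter per bundle and $|V'|$ counts the selected bundles. I would first argue that an optimal solution never deletes a bundle containing a $p$-voter: dropping such a bundle from~$V'$ re-adds its voters, which cannot decrease any margin $s_p(\cdot)-s_{c'}(\cdot)$ (a re-added $p$-voter raises only $s_p$, and a re-added $p$–$c$ pair raises $s_p$ and $s_c$ equally) while lowering the cost. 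Hence only \emph{competitor bundles} (those avoiding $p$-voters) are ever deleted.

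Setting $d_c\coloneqq\max\{0,\,s_c(V)-s_p(V)\}$ for each competitor~$c\ne p$, candidate~$p$ wins $(C,V\setminus\kappa(V'))$ exactly when the deleted competitor bundles contain at least $d_c$ voters of~$c$ for every~$c$, since deleting surplus competitor voters can only help~$p$. So the instance is a yes-instance iff at most~$k$ competitor bundles meeting all demands~$d_c$ can be chosen. I would encode this as a multigraph~$H$ on the competitor candidates plus a dummy vertex~$\bot$: turn every competitor pair of a $c_1$- and a $c_2$-voter into an edge $c_1c_2$ (a loop if $c_1=c_2$), and every competitor singleton $c$-voter into an edge $c\bot$. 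Choosing bundles corresponds to choosing an edge set~$F\subseteq E(H)$, and the number of deleted $c$-voters equals $\deg_F(c)$ (loops counted twice). The task becomes a \emph{minimum $b$-edge-cover}: find a smallest~$F$ with $\deg_F(c)\ge d_c$ for every competitor~$c$.

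To solve this I would complement within~$E(H)$: minimizing $|F|$ under $\deg_F(c)\ge d_c$ equals maximizing $|E(H)\setminus F|$ under $\deg_{E(H)\setminus F}(c)\le \deg_H(c)-d_c$, i.e.\ a maximum degree-bounded subgraph (a $b$-matching, hence an \textsc{$f$-Factor} instance). The caps are nonnegative: each pair using a $p$-voter blocks a distinct $p$-voter, so at most $s_p(V)$ voters of~$c$ are lost to mixed pairs, giving $\deg_H(c)\ge s_c(V)-s_p(V)\ge d_c$. Loops arising from self-pairs and the unconstrained vertex~$\bot$ are handled by the standard subdivision gadgets of the $f$-factor framework. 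The optimum number of bundles then equals $|E(H)|$ minus this maximum, which \citet{Anstee1985,Anstee1993} compute in polynomial time, and I would answer yes iff this value is at most~$k$.

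The step I expect to be most delicate is the correctness of the covering reformulation together with its complementation: I must verify that ignoring $p$-voter bundles and arbitrary surplus deletions is genuinely without loss of generality, that~$H$ faithfully records which voters are jointly deletable (so that unreachable voters paired with $p$-voters are correctly excluded), and that the upper bounds $\deg_H(c)-d_c$ are nonnegative so the complement is a legal $f$-factor instance. Once these equivalences are nailed down, the polynomial running time follows directly from the cited algorithms.
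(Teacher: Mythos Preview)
Your proposal is correct and follows essentially the same route as the paper: both observe that bundles containing a $p$-voter may be discarded without loss, fix the per-candidate deficits $d_c$, encode the remaining size-two bundles as edges of a multigraph on the non-$p$ candidates, and reduce to a minimum degree-constrained subgraph problem solved by \citet{Anstee1993}. The only cosmetic differences are that the paper absorbs each singleton via its own fresh dummy candidate (rather than your single sink~$\bot$) and cites Anstee directly for the lower-bounded version instead of making the complementation to a $b$-matching explicit.
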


\appendixproofwtext{thm:b2symm-del}{\btwosymmdel}{
  \begin{proof}
    Let $I=((C,V), \kappa, p, k)$ be a \cccdv instance, 
    where the bundling function~$\kappa$ is symmetric and each bundle has at most two voters.
    We first consider the case where each bundle has size exactly two.
    Since deleting bundles never increases any candidate's score,
    we can assume that we do not delete any bundle that contains a $p$-voter.
    Under this assumption, the original score of $p$ equals her score in the final election.
    Furthermore, if we know the score of $p$ in the final election, 
    then we know for each remaining candidate~$c_i\neq p$ how many $c_i$-voters we need to delete to make her have no more score than $p$; let $d_i$ be the score difference between $c_i$ and $p$.
    Due to this observation, 
    we can construct a multigraph (with loops),
    which contains a vertex~$u_i$ for each candidate~$c_i$ except $p$,
    and where for each bundle with two voters supporting $c_i$ and $c_j$
    there is an edge incident to the vertices~$v_i$ and $v_j$
    (note that if $c_i=c_j$, then the edge is a loop).
    Now, deleting minimum bundles to make each candidate~$c_i$ lose at least $d_i$ points is equivalent to 
    finding a subgraph with minimum number of edges where each vertex has degree~$d_i$.
   \citet{Anstee1993} showed that the latter problem can be solved in polynomial time.

   To handle the case where some bundle has only one voter,
   we first observe that if our preferred candidate~$p$ has zero points, 
   then we need to delete all bundles.
   Note that the underlying bundling graph consist of only disjoint edges and isolated vertices.
   The minimum number of bundles to be deleted equals the number of edges plus the number of isolated vertices.
   
   Now, we consider the case where our preferred candidate~$p$ has at least one point.
   For each bundle with only one voter~$v_i$, 
   we introduce a dummy candidate~$d_i$ and a dummy $d_i$-voter,
   and bundle the voter with $v_i$ together.
   Since each of the dummy candidates has exactly one point,
   the original instance (with $p$ having at least one point) is a yes instance if and only if the modified instance is a yes instance.
   In the modified instance, every bundle has size exactly two.
   The first case applies.
  \end{proof}
}

If we drop the restriction on the bundle sizes but still require the bundles to be disjoint, 
then \cccav{} and \cccdv{} become parameterized intractable with respect to the solution size. 

\begin{theorem}
	\label{thm:disjoint_hard}
	Parameterized by the solution size~$k$,
        \cccav{} and \cccdv{} are \Wone{}-hard and \Wtwo{}-hard respectively, even for disjoint bundles.
\end{theorem}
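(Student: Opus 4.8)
Parameterized by the solution size $k$, \cccav{} and \cccdv{} are \Wone{}-hard and \Wtwo{}-hard respectively, even for disjoint bundles.

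\medskip

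The plan is to give two separate parameterized reductions, one for each claimed hardness, exploiting the fact that for disjoint bundles each bundle acts as an indivisible unit: choosing any voter of a bundle forces the whole bundle, so a solution is really a selection of at most $k$ bundles, and the parameter $k$ counts bundles rather than voters. This matches the combinatorial shape of the standard $W[1]$- and $W[2]$-complete source problems, where one selects at most $k$ objects (vertices) subject to covering/independence constraints.

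For the \Wone{}-hardness of \cccav{}, I would reduce from \textsc{Independent Set} (or equivalently \clique{}) parameterized by solution size, which is $\Wone$-complete. Given a graph $H=(V_H,E_H)$ and target $k$, I would build one bundle per vertex of $H$; adding the bundle for a vertex $v$ should contribute points to $p$ while the bundles must be arranged so that selecting two bundles whose vertices are adjacent creates a ``conflict'' that prevents $p$ from becoming a winner. The natural device is to introduce, for each edge $\{u,v\}\in E_H$, an auxiliary candidate (or a pool of spoiler voters) whose score jumps past $p$ exactly when both incident bundles are chosen; since bundles are disjoint I can make each edge's penalty depend cleanly on the pair of endpoint bundles by placing the appropriate spoiler $c$-voters inside the endpoint bundles. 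Ensuring $p$ wins then forces the chosen $k$ bundles to correspond to an independent set of size $k$, and conversely an independent set yields a valid selection. The main obstacle here is engineering the scores so that (a) $p$ gains enough from selecting exactly $k$ bundles to catch the leader, yet (b) any adjacent pair of selected bundles pushes some spoiler candidate strictly ahead of $p$; with disjoint bundles I cannot share voters across bundles, so the edge-conflict gadget must be encoded entirely within the two endpoint bundles, which constrains how penalties accumulate and must be balanced against the gain $p$ receives.

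For the \Wtwo{}-hardness of \cccdv{}, I would reduce from \domset{} parameterized by solution size, the canonical $\Wtwo$-complete problem, mirroring the structure already exploited in \cite{bulteau2015combinatorial} for the unbounded symmetric case (\cref{thm:similar_summary}(i)). Given $H=(V_H,E_H)$ and target $k$, I introduce one bundle per vertex of $H$; deleting the bundle of $v$ should simultaneously remove voters associated with $v$ and with all of $v$'s closed neighbourhood, so that making $p$ win forces every vertex-gadget to be ``covered'' by at least one deleted bundle. Concretely, for each vertex $w\in V_H$ I would create a candidate $c_w$ whose voters sit inside precisely the bundles of the vertices in $N[w]$; since bundles must be disjoint, I would place a fresh private copy of a $c_w$-voter inside each bundle of $N[w]$ rather than sharing voters, and calibrate the initial scores so that $c_w$ ties or beats $p$ unless at least one bundle from $N[w]$ is deleted. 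Then $p$ wins after deleting at most $k$ bundles iff the corresponding $k$ vertices form a dominating set. The main difficulty in this direction is respecting disjointness while still encoding the domination (coverage) relation: because I may not let two bundles share a voter, each covering relationship needs its own private voter, and I must verify that the resulting per-candidate score arithmetic still yields the clean ``covered iff some incident bundle deleted'' equivalence and that deleting non-minimal selections cannot accidentally make $p$ win through unrelated score drops. In both reductions, correctness reduces to a careful but routine score count, and the parameter is preserved exactly ($k'=k$), so the reductions are parameterized reductions as required.
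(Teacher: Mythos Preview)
Your plan is correct and matches the paper's approach essentially line for line: the paper reduces \indset{} to \cccav{} using one bundle per vertex that contains a single $p$-voter together with one $g_j$-voter for every incident edge~$e_j$ (with each edge-candidate $g_j$ starting at score $h-1$), and reduces \domset{} to \cccdv{} using one bundle per vertex $u_i$ containing a private $g_j$-voter for every $u_j\in N[u_i]$, padded by a single dummy bundle so that every $g_i$ sits exactly one point above $p$. The only details you left implicit---the exact initial score offsets and, in the deletion case, gathering all the padding voters (including the $p$-voters) into one inseparable bundle so that deleting it never helps---are precisely the calibrations the paper supplies.
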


\begin{proof}[with only the construction for the \Wone{}-hardness proof of \cccav{}]
	We provide a parameterized reduction from the \Wone{}-complete problem
	\indset{} (parameterized by the ``solution size'') which, given an undirected graph $ G = (V(G), E(G)) $ 
	and a natural number $ h \in \N $, asks whether $G$ admits a size-$ h $ \emph{independent set}~$U\subseteq V(G)$, 
	that is, all vertices in $U$ are pairwise non-adjacent.
	Let $ (G, h) $ be an \indset{} instance with $ E(G) = \{ e_1, \dots,$ $e_{m-1} \} $ and $V(G) = \{ u_1, \dots, u_n \} $. 
	Without loss of generality, we assume that  $ G $ is connected and $ h \ge 3 $.
        We construct an election $ E = (C, V) $
	with candidate set~$C \coloneqq \{ p \} \cup \{g_j \mid e_j \in E(G)\}$.
	For each edge~$e_j\in E$, we construct $h-1$ registered voters that all have $g_j$ as their favorite candidate.
        In total, $V$ consists of $(h-1)\cdot (m-1)$ voters.
       
	The unregistered voter set $ W $ is constructed as follows:
	For each vertex~$u_i \in V(G)$, add a $ p $-voter $ p_i $, 
	and for each edge~$e_j$ incident with $ u_i $, add a $ g_j $-voter $ a_j^{(i)} $.
	The voters constructed for each vertex $ u_i $ are bundled by the bundling function~$\kappa$.
	More formally, for each~$u_i \in V(G)$ and each $e_j\in E(G)$ with $u_i \in e_j$, it holds that 

        {\centering
		$  \kappa(p_i) =  \kappa(a_j^{(i)}) \coloneqq \{ p_i \} \cup \{a_{j'}^{(i)}\mid e_{j'} \in E(G) \land u_i \in e_{j'}\}.$
	\par}
      
      \noindent To finalize the construction, we set $ k \coloneqq h $.
	The construction is both a polynomial-time and a parameterized reduction, 
	and all bundles are disjoint. 
        To show the correctness, 
        we note that $p$ can only win if only if her score can be increased to at least $h$ without giving any other candidate more than one more point.
        The solution corresponds to exactly to a subset of $h$ vertices that are pairwise non-adjacent.
	The detailed correctness proof and the remaining proof for the \Wtwo-hardness result can be found in the Appendix. \qed
 \end{proof}

\appendixproof{thm:disjoint_hard}{
In order to prove \cref{thm:disjoint_hard},
we first show \Wone{}-hardness of \cccav{} (\cref{lem:disjoint_hard_cccav})
and then \Wtwo{}-hardness of \cccdv{} (\cref{lem:disjoint_hard_cccdv}).

\begin{lemma}
\label{lem:disjoint_hard_cccav}
	\cccav{} is \Wone{}-hard, even for disjoint bundling function.
\end{lemma}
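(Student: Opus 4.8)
The plan is to turn the reduction already sketched for \cref{thm:disjoint_hard} into a complete proof by verifying both directions of the equivalence. Since \indset{} parameterized by the solution size~$h$ is \Wone{}-hard and the map $h \mapsto k = h$ is computable in polynomial time and bounds the new parameter by $g(h)=h$, it suffices to establish that the \indset{} instance $(G,h)$ is a yes-instance if and only if the constructed \cccav{} instance is. First I would record the scores induced by the construction: in the registered election~$(C,V)$ every candidate~$g_j$ has score exactly $h-1$ while $p$ has score~$0$. Because the bundles $\kappa(p_i)$ depend only on the vertex~$u_i$ and are pairwise disjoint, any chosen set~$W'\subseteq W$ touches a well-defined set of vertices; thus, without loss of generality, $W'$ contains exactly one member per selected bundle, and $|W'|$ equals the number of selected vertices.

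For the forward direction, given an independent set~$U$ of size~$h$, I would take $W' = \{p_i \mid u_i \in U\}$. Adding $\kappa(W')$ raises $p$'s score to~$h$ and raises each~$g_j$ by the number of endpoints of~$e_j$ lying in~$U$. Independence guarantees this number is at most one, so every~$g_j$ ends with score at most $h = s_p$; hence $p$ is a winner. As $|W'| = h = k$, this is a valid solution.

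For the backward direction, suppose some $W'$ with $|W'| \le k = h$ makes $p$ win, and let $t$ be the number of distinct vertices it selects, so $s_p = t$ and $t \le |W'| \le h$. Since deleting is not involved, each~$g_j$ retains score at least $h-1$, so $p$ winning forces $t = s_p \ge h-1$. Here the connectivity assumption does the work: every selected vertex is non-isolated, so it raises at least one candidate~$g_j$ from $h-1$ to at least~$h$. Consequently, if $t = h-1$ then $s_p = h-1 < h \le s_{g_j}$ for some~$j$, contradicting that $p$ wins; therefore $t = h$ and $s_p = h$. Finally, $p$ winning forces $s_{g_j} \le h$ for every edge~$e_j$, i.e.\ each edge receives at most one extra voter, i.e.\ each edge has at most one endpoint among the $h$ selected vertices — precisely the statement that these vertices form a size-$h$ independent set.

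The main obstacle, modest as it is, lies in pinning down that a winning solution must select exactly~$h$ bundles rather than fewer: the delicate point is that each~$g_j$ starts only one point below the target of~$h$, so the argument hinges on the connectivity of~$G$ (no isolated vertices) to rule out $t = h-1$, and on the disjointness of the bundles to guarantee that the budget~$k$ effectively counts vertices rather than individual voters. Everything else reduces to routine score bookkeeping.
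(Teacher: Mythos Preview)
Your proposal is correct and follows essentially the same approach as the paper's proof: both reduce from \indset{} with the identical construction, and in the backward direction both use connectivity of~$G$ to argue that any selected bundle pushes some~$g_j$ to score at least~$h$, forcing the full budget~$k=h$ to be spent, after which independence follows because no~$g_j$ may gain more than one point. The only cosmetic difference is that the paper collapses your two-step argument ($t\ge h-1$, then rule out $t=h-1$) into a single observation that some non-$p$-candidate necessarily reaches score~$k$, hence $s_p=k$.
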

\begin{proof}
	We provide a parameterized reduction from the \Wone{}-complete problem
	\indset{} (parameterized by the ``solution size''), defined as follows:
        \probDef{\indset}
        {An undirected Graph $ G = (V(G), E(G)) $ 
          and a natural number $ h \in \N $.}
        {Does $ G $ admit a size-$ h $ \emph{independent set}, 
          that is, all vertices in $U$ are pairwise non-adjacent.?}

	Let $ (G, h) $ be an \indset{} instance with $ E(G) = \{ e_1, \dots,$ $e_{m-1} \} $ and $V(G) = \{ u_1, \dots, u_n \} $. 
	Without loss of generality, we assume that  $ G $ is connected and $ h \ge 3 $.
        We construct an election $ E = (C, V) $
	with candidate set~$C \coloneqq \{ p \} \cup \Set{g_j}{e_j \in E(G)}$.
	For each edge~$e_j\in E$, we construct $h-1$ registered voters that all prefer $g_j$ most.
        In total, $V$ consists of $(h-1)(m-1)$ voters.
       
	The unregistered voter set $ W $ is constructed as follows:
	For each vertex $ u_i \in V(G) $, add a $ p $-voter $ p_i $, 
	and for each edge~$e_j$ incident with $ u_i $, add a $ g_j $-voter $ a_j^{(i)} $.
	The voters constructed for each vertex $ u_i $ are bundled by $ \kappa $.
	More formally, for each~$u_i \in V(G)$ and each $e_j\in E(G)$ with $u_i \in e_j$, it holds that 
        \begin{align*}
		\kappa(p_i) =  \kappa(a_j^{(i)}) \coloneqq \{ p_i \} \cup \{a_j^{(i)}\mid e_j \in E(G) \land u_i \in e_j\}.
	\end{align*}
        To finalize the construction, we set $ k \coloneqq h $.
	
        The construction is both a polynomial-time and a parameterized reduction, 
	and all bundles are disjoint. 
        To show the correctness, 
        we note that $p$ can only win if only if her score can be increased to at least $h$ without giving any other candidate more than one more point.
        The solution corresponds to exactly to a subset of $h$ vertices that are pairwise non-adjacent.

	Now, it remains to show that $ G $ has a size-$ h $ independent set if and only if 
	$ ((C, V), W, \kappa, p, k) $ is a yes-instance for \cccav{}.

	For the ``only if'' part, suppose that $ U \subseteq V(G) $ is a size-$h$
	independent set for $ G $. 
	Define the subset $ W' $ as the voters $ p_i \in W $ corresponding to the vertices
	$ u_i \in U $.
	Obviously, $ |W'| = h = k $. 
	As per definition, $ U $ does not contain adjacent vertices, 
	each candidate $ g_j $ may achieve a score increase of at most one, while 
	$ p $ achieves a score increase of $ k $. 
	As the initial difference in scores between $ p $ and every candidate $ g_j $
	is $ k - 1 $, $ p $ co-wins the final election with 
	each candidate~$g_j$ such that $u_i \in e_j$ for some voter~$p_i \in W'$.

	For the ``if'' part, suppose that there is a subset $ W' $ of size at most~$k$ 
	such that $ p $ is a winner of the Plurality election $ (C, V \cup \kappa(W')) $.
        First, we claim that $|W'|=k$ and $\kappa(W')$ contains exactly $k$ $p$-voters:
        Since the original score difference between $p$ and any other candidate~$g_j$ is $k-1$
        and since each unregistered voter's bundle contains exactly one $p$-voter and at least one non-$p$-voter,
        it follows that in the final election at least one non-$p$-candidate has score of at least $k$, but $p$ can have a score increase of at most $k$.
        This means 
        \begin{enumerate}[(1)]
          \item\label{sol=k} that $p$'s final score must increase to $k$, that is, $\kappa(W')$ must have exactly $k$ $p$-voters,
          and 
          \item \label{scoreinc<2} that no other candidate can have a score increase of more than one.
        \end{enumerate}
     
        Now, define $U\coloneqq \{u_i \mid p_i \in \kappa(W')\}$. Due Observation~\eqref{sol=k} we have that $|U|=k=h$.
        To show that $U$ is also an independent set,
        we consider two arbitrary vertices~$u_i, u_{\ell}\in U$. 
        Suppose for the sake of contradiction that $\{u_i,u_{\ell}\}\in E(G)$, denote this edge by $e_j$.
        By the construction of our bundling function, it must hold that $a^{(i)}_j, a^{(\ell)}_j \in \kappa(W')$.
        Then, $g_j$ would have a score increase of at least two---a contradiction to our observation~\eqref{scoreinc<2}.
\qed\end{proof}

\begin{lemma}
\label{lem:disjoint_hard_cccdv}
	\cccdv{} is \Wtwo{}-hard, even for disjoint bundling function.
\end{lemma}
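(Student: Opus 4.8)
The plan is to reduce from \domset{} parameterized by the size~$h$ of the sought dominating set, which is \Wtwo{}-hard, and to set the solution size $k := h$. The guiding idea mirrors the disjoint \indset{}-reduction of \cref{lem:disjoint_hard_cccav}: since the bundles must be pairwise disjoint, no voter can sit in two bundles, so the combinatorial coupling of the source instance has to be carried by shared \emph{candidates} rather than by shared voters. Concretely, for an input graph~$G$ with $V(G)=\{v_1,\dots,v_n\}$ I will give each vertex~$v_\ell$ its own candidate~$c_\ell$, and arrange that deleting the bundle associated with a vertex~$u_i$ lowers the score of $c_\ell$ by one for exactly those~$\ell$ with $v_\ell\in N[u_i]$; a vertex is then ``dominated'' precisely when its candidate has been knocked down.

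For the construction, let $\Delta:=\max_\ell|N[v_\ell]|$ and $d_\ell:=|N[v_\ell]|$. The candidate set is $\{p\}\cup\{c_\ell \mid v_\ell\in V(G)\}$. I give $p$ exactly $\Delta-1$ dedicated $p$-voters, each in its own singleton bundle. For each vertex~$u_i$ I create a bundle $B_i:=\{a_\ell^{(i)}\mid v_\ell\in N[u_i]\}$, where each $a_\ell^{(i)}$ is a private $c_\ell$-voter; because these voters carry the index~$i$, the bundles $B_1,\dots,B_n$ are pairwise disjoint, and deleting~$B_i$ removes one $c_\ell$-voter for every $v_\ell\in N[u_i]$. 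This endows $c_\ell$ with exactly $d_\ell$ such voters, so I top it up to score exactly~$\Delta$ by adding $\Delta-d_\ell\ge 0$ further $c_\ell$-voters, each placed in its own singleton padding bundle. Finally I set $k:=h$. The construction is polynomial and parameter-preserving, and all bundles are disjoint.

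For correctness, note first that deleting bundles never raises any score and that neither the $B_i$ nor the padding bundles contain a $p$-voter; hence I may assume the agent deletes no $p$-voter, so $p$'s score stays~$\Delta-1$, and $p$ is a winner if and only if every~$c_\ell$ is pushed down to at most~$\Delta-1$, i.e.\ loses at least one voter. Since each~$c_\ell$ starts at exactly $\Delta=(\Delta-1)+1$, a single deletion touching~$c_\ell$ already suffices. Thus a solution is a family~$\mathcal{D}$ of bundles such that every~$c_\ell$ has one of its voters inside $\bigcup\mathcal{D}$, and the heart of the proof is the identity that the minimum size of such a family equals the domination number~$\gamma(G)$. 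For the lower bound, given any solution~$\mathcal{D}$ put $T:=\{i\mid B_i\in\mathcal{D}\}$ and $Q:=\{\ell\mid\text{some padding bundle of }c_\ell\in\mathcal{D}\}$; every~$v_\ell$ is served either by some $B_i$ with $v_\ell\in N[u_i]$ (so $v_\ell\in N[T]$) or by its own padding bundle (so $\ell\in Q$), whence $T\cup\{v_\ell\mid \ell\in Q\}$ is a dominating set of size at most~$|\mathcal{D}|$. Conversely, a dominating set~$D$ yields the solution $\{B_i\mid i\in D\}$ of the same size. Therefore $p$ can be made a winner by deleting at most $h=k$ bundles if and only if $\gamma(G)\le h$, which establishes \Wtwo{}-hardness even for disjoint bundling functions.

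The delicate point, and the main obstacle, is the padding: the singleton padding voters are themselves deletable, so at first sight the agent could knock a candidate down ``for free'' instead of genuinely dominating its vertex, which would make the election-side optimum drop below~$\gamma(G)$ and break the reduction. The resolution is exactly the identity above: deleting a padding bundle of~$c_\ell$ costs one bundle and is no more powerful than placing~$v_\ell$ itself into the dominating set (self-domination), whereas deleting~$B_i$ dominates all of~$N[u_i]$ at the same unit cost; consequently the minimum number of deletions coincides with~$\gamma(G)$ rather than undershooting it. Making this conversion between solutions and dominating sets precise is the only nontrivial verification; the remaining budget and score bookkeeping is routine.
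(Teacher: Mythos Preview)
Your proof is correct and follows essentially the same route as the paper: reduce from \domset{}, introduce one candidate~$c_\ell$ per vertex, and for each vertex~$u_i$ create one disjoint bundle whose deletion removes exactly one supporter of~$c_\ell$ for every $v_\ell\in N[u_i]$; pad the scores so that each~$c_\ell$ sits exactly one point above~$p$, so that making~$p$ win is equivalent to hitting every candidate once, i.e., dominating every vertex.

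The only real difference is how the padding is packaged. You place each padding voter (and each $p$-voter) in its own singleton bundle and then argue that deleting a $c_\ell$-padding singleton is no more powerful than putting~$v_\ell$ itself into the dominating set; this gives your conversion $\mathcal{D}\mapsto T\cup\{v_\ell\mid \ell\in Q\}$ and shows the election optimum equals~$\gamma(G)$. The paper instead throws \emph{all} padding voters together with \emph{all} $p$-voters into a single large bundle~$D$; since in~$D$ candidate~$p$ has at least as many supporters as any~$c_\ell$, removing~$D$ from a solution can only help~$p$, so without loss of generality~$D$ is never touched and the padding issue disappears outright. Your conversion argument is sound; the paper's packaging simply sidesteps the need for it.
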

\begin{proof}
	We provide a parameterized reduction from the \Wtwo{}-complete problem \domset{}
	(see \cref{app:w2} for the corresponding definition).

	Let $ (G = (V(G), E(G)), h) $ be a \domset{} instance with $ E(G) = \{ e_1, \dots,$ $e_{m'} \} $ and $V(G) = \{ u_1, \dots, u_{n'} \} $. 
	Let $ \Delta_{max} $ denote the maximum degree of $ G $.
        We construct an election $ E = (C, V) $ with candidate set~$C = \{ p, g_1, \dots, g_{n'} \} $.
        
	The voter set $ V $ consists of two groups:
        \begin{enumerate}
          \item For each vertex $ u_i \in V(G) $, add a $ g_i $-voter, denote as \emph{vertex voter}~$v_i^{(i)} $,
          and for each of its neighbors~$ u_j \in N(u_i)$ 
          add a$ g_j $-voter, denote as \emph{neighbor voter}~$v_j^{(i)}$. 
          Note that, for each vertex~$u_i$, we have added exactly~$|N[u_i]|$ voters that all most prefer~$g_{i}$.
          The vertex voters and neighbor voters constructed due to vertex $u_i$ are contained in one bundle. 
          Formally, the bundling function for these voters looks as follows.
          \begin{align*}
            \forall u_i \in V(G)\colon & \kappa(v_i^{(i)}) \coloneqq \{ v_j^{(i)} \mid u_j \in N[u_i] \}\text{, and } \\
                                       &		\forall u_j \in N(u_i)\colon \kappa(v_j^{(i)}) \coloneqq \kappa(v_i^{(i)})
          \end{align*}
          \item For each vertex~$ u_i \in V(G) $, add $ \Delta_{max} + 1 - |N[u_i]| $ $ g_i $-voters. 
          Furthermore, add $ \Delta_{max} $ $ p $-voters to $ V $.
          All these voters are bundled together. 
          Note that in this bundle, each candidate~$g_i$ has exactly $|N[u_i]|$ points less than $p$.
        \end{enumerate}
        We use $D$ to denote the set of all voters constructed in the second group.
	Finalizing the construction, set $ k \coloneqq h $.

	The construction is both a polynomial-time and a parameterized reduction, 
	and the bundling function is disjoint. 
	Note that for all candidates $ g_i $ have score~$ s_{g_i}(V) = \Delta_{max} + 1 $, and 
	the difference in score between every $ g_i $ and the candidate $ p $ is
	$ s_{g_i}(V) - s_p(V) = \Delta_{max} + 1 - \Delta_{max} = 1 $.

	It remains to show that there is a dominating set of size at most $ h $
	if and only if there is a subset $ V' $ of voters of size at most $ k $, 
	such that if their respective bundles are deleted from the election, $ p $
	becomes a Plurality winner of the election.

	For the ``only if'' part, given a dominating set $ U $ of size at most $ h $, 
	we define $ V' $ to be the corresponding voter set, that is, 
	$ V' \coloneqq \{v_i^{(i)}\mid u_i \in U\} $. 
	It is easy to verify that $ | V' | \le h = k $ and $ p $ has a score of 
	$ \Delta_{max} $, while all other candidates have a score of at most $ \Delta_{max} $.
	Thus, $ p $ is a Plurality-winner of the election
	$ (C, V \setminus \kappa(V')) $.

	For the ``if'' part, suppose that there is a subset $ V' \subseteq V $ 
	of size at most $ k $ such that $ p $ is a winner of the Plurality election 
	$ ( C, V \setminus \kappa(V') ) $.
	First of all,  since all voters in $D$ are bundled together such that $p$ has more supporters than any other candidate in this bundle,
        by the disjoint property of the bundles, we know that  $p$ will also win the election~$(C, V\setminus \kappa(V'\setminus D)$.
        Thus,  we can assume that $V'$ does not contain any voter from the second group~$D$
        This means that in the final election, $p$ will have score $\Delta_{\max}$,
        implying that each  $g_i$ has to lose at least one point.
        In other words,
        \begin{align*}&\kappa(V') \text{ contains at least one } g_i\text{-voter for each candidate~} g_i. & (*)\end{align*}

        Now, we define the vertex subset~$U \coloneqq \{u_i \mid u^{(i)}_{i} \in \kappa(V')\}$ and show that it is a dominating set of size at most $h$.
        Since all bundles are disjoint and $|V'|\le k$, it is clear that $|U|\le k=h$.
        To show that $U$ is a dominating set, 
        we consider an arbitrary vertex~$u_j \notin U$.
        By $(*)$, we know that $\kappa(V')$ contains a $g_j$-voter; note that he is from the first group.
        Let this voter be $v^{(i)}_j$.
        Due to the construction of our bundling function,
        it follows that $\kappa(V')$ also contains $v^{(i)}_i$,
        implying that $U$ contains $u_i$.
        Thus $u_j$ is dominated by $u_i\in U$.
\qed\end{proof}
}

For destructive control, it is sufficient to guess a potential defeater~$d$ out of $m-1$ possible candidates
that will have a higher score than $p$ in the final election
and use a greedy strategy similar to the one used for \cref{thm:b2symm}
to obtain the following result.
	
\newcommand{\btwosymmdes}{%
  \cdcav{} and \cdcdv{} with a symmetric bundling function and disjoint bundles
  can be solved in $ \ON(m\cdot |I|) $ time, where $|I|$ is the input size and $m$ the number
  of candidates.%
}

\begin{theorem}
	\label{thm:symmdes}
        \btwosymmdes
\end{theorem}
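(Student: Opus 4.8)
The plan is to reduce destructive control to a family of single-defeater subproblems, one per candidate, and to solve each subproblem by a greedy selection over the disjoint bundles. Recall that $p$ fails to win the final election exactly when some candidate $d\neq p$ ends up with a strictly larger score than $p$. Hence I would iterate over all $m-1$ candidates $d\neq p$ and, for each, decide whether we can pick at most $k$ bundles so that in the resulting election $s_d>s_p$; the original instance is a yes-instance if and only if at least one of these subproblems is. The ``if'' direction is immediate, since $s_d>s_p$ already certifies that $p$ is not a winner. For ``only if'', any successful solution of size at most $k$ makes some concrete $d$ overtake $p$, and that very solution is then a feasible pick witnessing that $d$'s subproblem is a yes-instance with the same budget.

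First I would exploit disjointness. Because the bundles partition the voter set into independent groups and all voters of a group share the same bundle (using $x\in\kappa(x)$), selecting any one voter of a group adds (resp.\ deletes) exactly that whole group, and selecting more than one voter from the same group never helps. Thus each group acts as an atomic unit of cost one, and a solution amounts to choosing at most $k$ groups. For a fixed defeater $d$, writing $B_c$ for the set of $c$-voters inside a group $B$, I assign to $B$ a gain capturing its effect on $s_d-s_p$: in \cdcav{} this is $\mathrm{gain}_d(B)=|B_d|-|B_p|$, and in \cdcdv{} it is $\mathrm{gain}_d(B)=|B_p|-|B_d|$, reflecting that deleting $p$-voters helps the defeater while deleting $d$-voters hurts it. With $\delta_0=s_d(V)-s_p(V)$ the initial difference, the subproblem is feasible precisely when we can choose at most $k$ groups whose gains sum to at least $1-\delta_0$; note that only $s_d$ and $s_p$ matter here, as all other candidates' scores are irrelevant for this defeater.

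Because the groups are pairwise independent and their gains combine additively, the maximum total gain achievable under the budget $k$ is simply the sum of the $\min(k,q)$ largest strictly positive gains, where $q$ is the number of positive-gain groups; a group with non-positive gain is never worth taking. Hence the greedy rule ``take the highest-gain groups, as long as they help and the budget allows'' is optimal, and feasibility reduces to comparing this maximum against $1-\delta_0$. This mirrors the greedy used for \cref{thm:b2symm}, now applied after fixing the defeater, and it is the only genuinely algorithmic step.

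For the running time, I would, for each of the $m-1$ defeaters, compute all group gains in a single pass over the voters in $\ON(|I|)$ time, and then extract and sum the top $k$ positive gains using a linear-time selection procedure, again in $\ON(|I|)$ time; summing over all defeaters yields the claimed $\ON(m\cdot|I|)$ bound. I do not expect a hard obstacle here: the main care is bookkeeping, namely arguing cleanly that treating whole bundles as atomic units loses no generality (disjointness plus $x\in\kappa(x)$), that for a fixed defeater the problem genuinely decouples into independent per-group choices, and that the strict requirement $s_d>s_p$ becomes the integer target $1-\delta_0$.
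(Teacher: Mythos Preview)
Your proposal is correct and follows essentially the same approach as the paper: guess the defeater $d\in C\setminus\{p\}$, then greedily pick the bundles that maximally improve $s_d-s_p$, exploiting that disjoint bundles contribute additively and independently. Your write-up is in fact more careful than the paper's sketch---you spell out the gain function for both the adding and deleting cases, justify why non-positive-gain groups are never useful, and explain the linear-time selection step underpinning the $\ON(m\cdot|I|)$ bound---but the underlying idea is identical.
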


\appendixproofwtext{thm:symmdes}{\btwosymmdes}{
\begin{proof}

  We consider \cdcav{} first. 
  Let $I=(E=(C,V),W,\kappa, p,k)$ be a \cdcav{} instance with $k$ being symmetric.
  To make $p$ lose, it is enough to add voters such that there is some other candidate~$c$ that has a higher score than $p$ in the final election.
  Due to this observation, we ``guess'' the defeater~$c\in C\setminus \{p\}$
  and we use an even simpler greedy strategy than the one used for \cref{thm:b2symm} (see Theorem~6 by \citet{bulteau2015combinatorial})
  to add voters to maximize the score difference between $c$ and $p$:
%
  Since the bundles are disjoint, we greedily add bundles that maximally improve the
  score difference between~$c$ and~$p$ (in favor of $c$).
  If using up $k$ bundles, we can make $c$ have at least the same score as $p$,
  then return yes; otherwise we continue with the next defeater.
  It is easy to verify that this approach is correct and the running time is $O(m\cdot |I|)$.

  The strategy for \cdcdv{} works analogously.
\end{proof}
}

\section{Controlling Voters with Unlimited Budget} 
\label{sec:inapproximability}
To analyze election control, 
it is interesting to know whether a solution exist at all, without bounding its size.
Indeed, \citet{bartholdi1992hard} already considered the case of unlimited solution size for the constructive candidate control problem.
They showed that the problem is already $\NP$-hard, even if the solution size is not bounded.
(The non-combinatorial destructive control by adding unlimited amount of candidates 
is shown to be also $\NP$-hard by \citet{hemaspaandra2007anyone}.)
In contrast, the non-combinatorial voter control variants
are linear-time solvable via simple greedy algorithms~\cite{bartholdi1992hard}. 
This leads to the question whether the combinatorial structure increases the complexity.
To this end, we relax the four problem variants so that the solution can be of arbitrary size
and call these problems
\cccauv{}, 
\cdcauv{}, 
\cccduv{} and
\cdcduv{}.


First of all, we observe that \cccduv{} becomes trivial
if no unique winner is required.

\newcommand{\consdeleasy}{
  Let $I = (E=(C,V),\kappa,p)$ be a \cccduv{} instance.
  Then $I$ is a yes-instance.%
}
\begin{lemma}\label{lem:consdeleasy}
\consdeleasy
\end{lemma}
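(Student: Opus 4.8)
The plan is to exploit the fact that the solution size is unrestricted in the unlimited variant by simply deleting \emph{every} voter. Concretely, I would take the candidate solution $V' \coloneqq V$. Since the bundling function satisfies $x \in \kappa(x)$ for every voter $x$, we have $V \subseteq \kappa(V) \subseteq V$ and hence $\kappa(V') = \kappa(V) = V$. Deleting these bundles therefore yields the empty election $(C, V \setminus \kappa(V')) = (C, \emptyset)$.

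The key steps, in order, are then: (1) observe that $\kappa(V) = V$ from the reflexivity requirement $x \in \kappa(x)$; (2) conclude that the resulting election has no voters; and (3) note that in the empty election every candidate $c \in C$ has Plurality score $s_c(\emptyset) = 0$, so in particular $s_p(\emptyset) = 0$ is the maximum score and $p$ attains the highest score. Because \cccduv{} places no bound on $|V'|$, the set $V' = V$ is a legitimate solution, and I would conclude that $I$ is a yes-instance.

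The only point requiring care — and the reason the statement is explicitly flagged as relying on ``no unique winner being required'' — is the winner convention: since all candidates tie at score $0$, $p$ wins only in the co-winner sense. This matches the definition of winner from \cref{sec:preliminaries} (highest score, with ties allowed), so no genuine obstacle arises. The hard part is thus essentially nonexistent; once the tie-breaking convention is fixed, the argument reduces to the single observation that deleting all voters always makes $p$ a co-winner.
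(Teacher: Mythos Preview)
Your proposal is correct and follows essentially the same approach as the paper: set $V' \coloneqq V$, observe that the resulting election $(C, V \setminus \kappa(V'))$ has no voters, and conclude that all candidates tie so $p$ is a (co-)winner. Your treatment is slightly more explicit about why $\kappa(V) = V$ (via the reflexivity $x \in \kappa(x)$) and about the co-winner convention, but the argument is the same.
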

\appendixproofwtext{lem:consdeleasy}{\consdeleasy}{
\begin{proof}
Let $I = (E=(C,V),\kappa,p,k)$ be a \cccduv{} instance.
Set $V' \coloneqq V$.
Since there is no voter in the election $(C,V\setminus \kappa(V'))$, 
all candidates have the same score.
Thus $p \in$ \plur$(C, V\setminus V')$.
\qed\end{proof}
}

If we consider a voting rule $ \mathcal{R} $ that only returns unique winners, 
then \cdcduv{} also becomes tractable since we only need to delete all voters. 

For the constructive adding voters case, we obtain $\NP$-hardness.
The idea for the reduction derives from the \Wone{}-hardness proof of \cccav{} shown by
\citet{bulteau2015combinatorial}.

\newcommand{\unlimitedhard}{\cccauv{} is $\NP$-hard.}
\begin{lemma}\label{lem:limitedhard}
  \unlimitedhard
\end{lemma}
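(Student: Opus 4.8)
The plan is to give a polynomial-time many-one reduction from an $\NP$-complete satisfiability variant, reusing the overlapping-bundle gadgetry that \citet{bulteau2015combinatorial} employed in their \Wone{}-hardness proof of \cccav{}. A convenient source problem is \probTTSAT{}, which is already shown to be $\NP$-complete and is used elsewhere in this paper (see the proof of \cref{prop:des_hard}); its rigid structure (each variable occurs twice positively and twice negatively) is exactly what the per-variable cycle gadget is built to exploit. The conceptual shift compared to the budgeted setting is that the budget bound $k$ can no longer be the device that forces the agent to pick ``exactly one bundle per variable gadget''. Instead, I would engineer the candidate scores so that the \emph{winning condition alone} forces a consistent, satisfying selection, while simultaneously making it impossible to abuse the unbounded budget by simply adding every available bundle.

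Concretely, I would keep the preferred candidate~$p$, introduce one clause candidate~$c_j$ per clause~$f_j$, and (as in \cref{prop:des_hard}) build for each variable~$x_i$ a cycle of symmetric bundles alternating between $p$-voters and clause voters, so that a valid selection along the cycle corresponds to choosing a truth value for $x_i$ and each bundle that is added contributes one point to $p$ and one point to the clause candidate of the clause whose literal it represents. The initial scores would be tuned so that $p$ trails each clause candidate by exactly one point, so that $p$ needs, for every clause~$f_j$, at least one added literal-bundle that advances $p$ past~$c_j$; this is precisely the requirement that every clause be satisfied. The role formerly played by~$k$ is taken over by a \emph{saturation gadget}: I would attach extra (dummy) voters to the cycles so that adding \emph{both} the positive and the negative bundles of a variable, or adding any bundle off the intended alternating pattern, pushes some candidate's score strictly above~$p$'s maximal achievable score. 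Because every bundle is symmetric and of bounded size, and because the cycle forces neighbours to be co-added, inconsistent over-addition is penalised locally and cannot be repaired elsewhere.

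With this in hand the two correctness directions should be routine. For the forward direction, a satisfying assignment selects, in each variable cycle, the alternating class of bundles matching the truth value; this adds the same number of points to $p$ as to the clause candidates it touches, and the satisfaction of every clause guarantees $p$ at least ties each~$c_j$, so $p$ wins $(C, V \cup \kappa(W'))$. For the backward direction, I would argue that any winning subset~$W'$ must, by the saturation gadget, respect the alternating pattern of every cycle (otherwise some candidate overtakes $p$), hence induces a well-defined assignment; and since $p$ wins, each clause candidate is matched, forcing each clause to contain a satisfied literal.

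The main obstacle is exactly this replacement of the cardinality constraint by purely score-based constraints: I must choose the initial scores and the saturation voters so that the winning condition is \emph{tight enough} to rule out both under-selection (leaving some clause candidate ahead of~$p$) and harmful over-selection (inconsistent or redundant bundles that an unlimited budget would otherwise invite), while still admitting the intended consistent selection. Verifying that the cycle gadget continues to enforce variable consistency without any help from~$k$ — i.e.\ that no clever mixture of bundles beats the honest alternating choice — is the delicate technical core, and the rest reduces to a bounded score bookkeeping that I would not expand here.
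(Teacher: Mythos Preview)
Your proposal correctly locates the difficulty --- replacing the budget~$k$ by a purely score-based constraint --- but then defers exactly that step, and the parts you do spell out do not hold together. If each literal bundle adds one point to~$p$ and one to the corresponding clause candidate~$c_j$, and~$p$ initially trails every~$c_j$ by one, then adding such a bundle leaves the gap~$p$ versus~$c_j$ unchanged (both gain one) while~$p$ gains a full point on every \emph{other} clause candidate; two bundles for two different clauses already let~$p$ tie everyone, satisfiable or not. The clause-coverage requirement you want (``$p$ needs at least one literal-bundle that advances~$p$ past~$c_j$'') is therefore not what your scores enforce. Worse, with no budget nothing in your construction stops the agent from simply adding every bundle on every cycle: the cycle gadget of \cref{prop:des_hard} enforces consistency only because the tight bound~$k=2n$ forces every selected bundle to hit two \emph{fresh} $d$-voters, and that leverage vanishes once~$k$ is unlimited. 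Your ``saturation gadget'' would have to supply an additional candidate whose score grows strictly faster than~$p$'s under over-selection; that is the actual content of the lemma, not bookkeeping, and you have not provided it.

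The paper sidesteps all of this by reducing from \clique{} rather than \probTTSAT{} and using only three candidates~$p,g,x$. Each edge bundle contains one~$p$-voter and two fresh~$x$-voters, and pulls in its two endpoint vertex-voters, which are~$g$-voters \emph{shared} across incident edges. The~$2{:}1$ ratio of~$x$-voters to~$p$-voters in every edge bundle caps the number of usable edge bundles at~$\binom{h}{2}$ via the winning condition against~$x$ alone --- this is the budget replacement you were looking for. Given that cap, beating~$g$ then forces the~$\binom{h}{2}$ chosen edges to span at most~$h$ vertices, which is exactly a clique. No cycle gadgets, no per-variable consistency argument, and no unspecified saturation device are needed.
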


\appendixproofwtext{lem:limitedhard}{\unlimitedhard}{
\begin{proof}
	To prove $\NP$-hardness, we extend the proof idea for \Wone{}-hardness from \citet{bulteau2015combinatorial}.
	Let $(G=(V(G),E(G)),h)$ be a \clique{} instance. 
	(The definition of \clique{} can be found in \cref{app:w1}.)
	Without loss of generality, we assume $ h \ge 4 $. 
	We construct an election $E\coloneqq(C,V)$ where $C\coloneqq\{p,g,x\}$. 
	$V$ consists of ${h \choose 2}$ $p$-voters and $(2 \cdot {h \choose 2} - h)$ $g$-voters.
	The set of unregistered voters $W$ is composed as follows:
	\begin{compactitem}[-]
		\item For each vertex $u \in V(G)$, add a $g$-voter $w_u \in W$ with $\kappa(w_u) = \{ w_u \}$.
			We call $w_u$ a \emph{vertex-voter}.
		\item For each edge $e = \{ u,u'\} \in E(G)$ we add a $p$-voter $w_e \in W$ 
			and two $x$-voters $w_e',w_e'' \in W$ such that 
			$\kappa(w_e) = \{ w_e, w_u, w_{u'}, w_e', w_e'' \}$, 
			$\kappa(w_e') = \{ w_e'\}$ 
			and $\kappa(w_e'') = \{ w_e'' \}$.
			We call $w_e$ an \emph{edge-voter}.
	\end{compactitem}
	Obviously, our construction is a polynomial reduction. 
	It remains to show that there is a clique of size at least $ h $ if and only if 
	there is a subset $ W' $ such that $ p $ is a winner of the Plurality election
	$ (C, V \cup \kappa(W')) $.

	For the ``if'' part, suppose that there is a subset $ W' \subseteq W $ such that 
	$ p \in \text{\plur}(C, V \cup \kappa(W')) $.
	We show that the vertex set $ U \coloneqq \Set{u\in V(G)}{w_e \in W' \land u \in e} $ is a 
	clique of size $ h $ in $ G $.
	First, we observe that a solution $W' \subseteq W$ cannot have more than 
	$h \choose 2$ edge-voters, otherwise $x$ achieves a higher score than $p$.
	Second, a solution $W'$ must have $h \choose 2$ edge-voters which have only
	$h$~different vertex-voters in its bundles (a clique in $G$).
	For a precise argumentation see the proof of \cref{thm:w1}.


%
%
%

	For the ``only if'' part, suppose that $ U \subseteq V(G) $ is a size-$ h $ clique for $ G $.
	We construct the subset $ W' $ by adding any edge voter $ w_e $ with $ e \in E(G[U']) $. 
	Now it is easy to check that $ p \in \text{\plur} (C, V \cup \kappa(W'))$.
	(Compare with the proof of \cref{thm:w1}.)
\qed\end{proof}
}

\cref{lem:limitedhard} immediately implies the following inapproximability result for the optimization
variant of \cccav (denoted as \mcccav{}), aiming at minimizing the solution size. 
\newcommand{\thminapproxtext}{There is no polynomial-time approximation algorithm for \mcccav{}, unless $\PP=\NP$.}
 \begin{theorem}\label{thm:inapproximability}
\thminapproxtext
\end{theorem}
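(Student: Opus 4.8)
The plan is to leverage the $\NP$-hardness of the feasibility question \cccauv{} from \cref{lem:limitedhard} directly. The essential observation is that \cccauv{} is precisely the question of whether the optimization problem \mcccav{} has \emph{any} feasible solution: an instance of \cccauv{} is a yes-instance if and only if there exists some subset $W' \subseteq W$ with $p$ winning $(C, V \cup \kappa(W'))$, which is exactly the condition that \mcccav{} on the same instance has a finite optimum.

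First I would suppose, for the sake of contradiction, that some polynomial-time $\alpha$-approximation algorithm $A$ for \mcccav{} exists, for an arbitrary (even instance-dependent) ratio $\alpha \geq 1$. I would then use $A$ to decide \cccauv{} in polynomial time as follows: given an instance of \cccauv{}, interpret it as an instance of \mcccav{}, run $A$ on it, and check in polynomial time whether the set $W'$ returned by $A$ is actually feasible, that is, whether $p$ wins the election $(C, V \cup \kappa(W'))$. Both the call to $A$ and the feasibility check take polynomial time.

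The correctness argument splits into two cases. If the instance is a yes-instance of \cccauv{}, then a feasible solution exists and the optimum is finite, so by the approximation guarantee $A$ must return a feasible solution of size at most $\alpha \cdot \mathrm{OPT}$; hence the feasibility check succeeds and we correctly output ``yes''. If the instance is a no-instance, then no feasible solution exists at all, so $A$ cannot possibly output one, the feasibility check fails, and we correctly output ``no''. In either case we decide \cccauv{} in polynomial time, which by \cref{lem:limitedhard} is impossible unless $\PP = \NP$.

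I expect the only real subtlety, rather than a genuine obstacle, to be making explicit that the inapproximability is \emph{ratio-independent}: because the hardness stems entirely from deciding feasibility---whether \emph{any} solution exists at all---no finite approximation ratio can be achieved in polynomial time unless $\PP = \NP$. This is precisely why the statement rules out approximation algorithms of \emph{every} ratio, and not merely those below some threshold, and it is what lets a single reduction via \cref{lem:limitedhard} yield the full inapproximability claim.
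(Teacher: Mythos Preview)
Your proposal is correct and follows essentially the same approach as the paper: both argue that any polynomial-time approximation algorithm for \mcccav{} would in particular produce \emph{some} feasible solution whenever one exists, thereby deciding \cccauv{} in polynomial time and contradicting \cref{lem:limitedhard}. Your version is in fact slightly more careful, as you make the feasibility check and the no-instance case explicit, whereas the paper leaves these implicit.
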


\appendixproofwtext{thm:inapproximability}{\thminapproxtext}{
\begin{proof}
Assume towards a contradiction that there is an approximation algorithm 
$A_\alpha$ for \mcccav{}
which runs in polynomial time and provides a solution $W_\alpha$ such that 
$\alpha \cdot |OPT| \geq |W_\alpha|$, where $OPT$ is a solution of minimum size.
Let $(E,W,\kappa,p)$ be an \cccauv{} instance.
One can create a \mcccav{} instance $(E,W,\kappa,p)$ and compute a solution 
$W_\alpha$ of size $\alpha \cdot |OPT|$ in polynomial time.
$W_\alpha$ is a solution for the instance $(E,W,\kappa,p)$ of the $\NP$-hard problem \cccauv{}.
This is a contradiction unless $ \PP = \NP $.
\qed\end{proof}
}

\section{Conclusion}
\label{sec:conclusion}



We extend the study of combinatorial voter control problems introduced by \citet{bulteau2015combinatorial} and obtain that the destructive control variants tend to be computationally easier than their constructive cousins.

Our research leads to several open questions and further research opportunities.
First,  we have shown hardness results for the adding candidate case:
if the bundling function consists of disjoint cliques, 
then parameterized by the solution size, \cccav{} is \Wone{}-hard and \cdcav{} is \Wtwo-hard. 
If one could also determine the complexity upper bound, that is, under the given restrictions,
if \cccav{} would be contained in \Wone{}, then this would yield another difference in complexity between the 
destructive and the constructive variants. 
This also leads to the question whether the problem variants in their general setting
are not only \Wtwo{}-hard, but \Wtwo{}-complete. 

Second, we have only shown that \mcccav{} is inapproximable
and \mcdcdv{} is trivially polynomial-time solvable.
For the other two problem variants, we do not know whether they 
can be approximated efficiently or not.

Another open question is whether there are \fpt{}-results for any natural combined parameters. 
As a starting point, we conjecture that all problem variants can be formulated
as a monadic second-order logic formula with length of at most $ f(k, b, m) $
(where $ k $ is the solution size, $ b $ is the maximum bundle size, $ m $ is the 
number of candidates, and $f$ is a computable function).
\Citet{courcelle2012graph} showed that every graph problem 
expressible as a monadic second-order logic formula $ \rho $ can be solved in 
$ g(|\rho|, \omega) \cdot |I| $ time, where $ \omega $ is the treewidth of the input graph 
and $|I|$ is the input size. 
Our conjecture would provide us with a fixed-parameter tractability result with
respect to the solution size, the maximum bundle size, the number of candidates,  
and the treewidth of our bundling graph~$G_\kappa$.

We have studied the Plurality rule exclusively.
Thus it is still open which of our results also hold for other
voting rules, especially for the Condorcet rule. 
Since with two candidates, the Condorcet rule is equivalent to the strict majority rule, 
we can easily adapt some of our results to work for the Condorcet rule as well. 
Other results (i.e., the Turing reductions) 
cannot be easily adapted to work for the Condorcet rule. 

\bibliographystyle{abbrvnat}

\newpage
\appendix

\begin{center}
  \Large \textbf{Appendix}
\end{center}

\section{Similarities in Complexity between the Problem Variants}
\label{app:similarities}
In this section we provide the theorems and proofs for results in which 
the four problem variants behave similarly in complexity which are
summarized in the main text as \cref{thm:similar_summary}. 

First, we provide hardness results with different constraints on the parameters of
the problem variants. 
Then, we show that the problem variants are fixed-parameter tractable with 
respect to the number of candidates. 

For this appendix, we introduce the Condorcet voting rule. 

A candidate $ c $ is a \emph{Condorcet winner} if it wins against 
every other candidate in a head-to-head contest \cite{condorcet1785essai}. Formally, $ c $ is a Condorcet winner if 
$ \forall c' \in C \setminus \{ c \}: \; 
|\Set{ v \in V}{ c \succ_v c' }| > |\Set{ v \in V}{ c' \succ_v c }| $.
Condorcet's voting rule returns a set consisting of the unique Condorcet winner if it exists.
Otherwise, it returns the empty set. 

Note that, for the Condorcet rule, the problem definitions stated in
\cref{sec:central_problem} need to be modified as the preferred winner
(loser) $ p $ needs to win (lose) the election evaluated by the Condorcet rule. 

\subsection{\Wtwo{}-Hard for the Solution Size}
\label{app:w2}

\citet{bulteau2015combinatorial} originally stated that \cccav{} is
\textsf{W[2]}-hard with respect to the solution size for the Plurality
and for the Condorcet voting rule. 
\cref{thm:w2} uses their proof concept and extends their result for 
the other three variants of the combinatorial voter control problem. 

\begin{theorem} \label{thm:w2}
	For both Plurality and Condorcet, \cccdv{}, \cdcav{} and \cdcdv{} are
	all \emph{\textsf{W[2]}}-hard with respect to the solution size $ k $, even if
	there are only two candidates and even if the bundling function $ \kappa $ is
	symmetric.
\end{theorem}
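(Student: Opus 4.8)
The plan is to give a single parameterized reduction from \domset{} (which is \Wtwo{}-complete parameterized by the solution size) and to show that, up to recoloring voters and a small additive offset in the scores, one gadget witnesses \Wtwo{}-hardness for all three variants \cccdv{}, \cdcav{}, and \cdcdv{}, and for both Plurality and Condorcet. Note that \cref{thm:tur} goes the wrong way here (it transfers hardness from the deleting to the adding, and from the destructive to the constructive variants), so a direct construction is needed. Let $(G,h)$ with $V(G)=\{u_1,\dots,u_n\}$ be a \domset{} instance. The core gadget uses two candidates $p,q$ and $n$ \emph{vertex voters} $a_1,\dots,a_n$, one per vertex, with the bundling function $\kappa(a_i)\coloneqq\{a_j\mid u_j\in N[u_i]\}$. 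This $\kappa$ is symmetric precisely because $u_j\in N[u_i]\Leftrightarrow u_i\in N[u_j]$ in an undirected graph, and the key counting identity is that $\kappa(\{a_i\mid i\in S\})=\{a_j\mid u_j\in\bigcup_{i\in S}N[u_i]\}$ has size $n$ if and only if $\{u_i\mid i\in S\}$ is a dominating set. Setting $k\coloneqq h$ makes this a polynomial-time parameterized reduction.

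First I would establish the \cccav{} case with a clean reduction from \domset{} (matching \citet{bulteau2015combinatorial}): the $a_i$ are $p$-voters placed in the unregistered set $W$, and $V$ is set so that $q$ leads $p$ by exactly $n$ (Plurality) or $n-1$ (Condorcet); then adding at most $h$ bundles lets $p$ reach the winning threshold against $q$ if and only if the chosen bundles cover all $n$ vertex voters, i.e.\ if and only if $G$ has a dominating set of size at most $h$. The remaining three variants reuse this gadget after relocating and recoloring its voters, with the initial gap (or a single anchor voter) chosen so that covering all $n$ vertex voters just crosses the relevant threshold. For \cdcav{} the $a_i$ become $q$-voters in $W$, so that adding $h$ dominating bundles lets $q$ overtake $p$. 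For \cccdv{} the $a_i$ become $q$-voters already registered in $V$, so that deleting $h$ dominating bundles removes all $n$ of them and lets $p$ win. For \cdcdv{} the $a_i$ become registered $p$-voters, so that deleting $h$ dominating bundles erases enough of $p$'s score for $q$ to win. In every case the crucial point is the counting identity: the $n$ vertex voters are distinct, so any $h$ bundles move the decisive score by at most $n$, with equality exactly when the selected vertices dominate $G$; hence the threshold is unreachable without a genuine dominating set.

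The only rule-dependent subtlety is the gap between the Plurality co-winner condition $s_p\ge s_q$ and the strict \cond{} condition $s_p>s_q$ (with two candidates, \cond{} coincides with strict majority). Because covering all $n$ vertex voters shifts the decisive score by exactly $n$ whereas any incomplete cover shifts it by at most $n-1$, this slack of at least one always separates the two thresholds, and it suffices to insert or omit a single anchor voter of the appropriate candidate so that ``dominating set'' falls on the winning side of whichever inequality the rule imposes and ``non-dominating'' falls on the losing side; constructive and destructive control simply push this anchor in opposite directions. I expect the main work to be not any single step but the uniform bookkeeping: verifying for all four (variant, rule) pairs that no cheap shortcut exists—one cannot beat the bound by exploiting bundle overlaps, and deleting or adding anchor voters never helps—and that the $\pm1$ offsets are set consistently. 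All of this reduces to the one counting identity for unions of closed neighborhoods, which is exactly why a single gadget handles every case and yields \Wtwo{}-hardness.
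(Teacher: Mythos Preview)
Your proposal is correct and follows essentially the same approach as the paper: a parameterized reduction from \domset{} using one vertex voter per graph vertex with closed-neighborhood bundles (which are symmetric precisely because the graph is undirected), recoloring/relocating these voters between $p$ and $q$ and between $V$ and $W$ to handle each of the three variants, and inserting a single anchor voter with a singleton bundle to bridge the co-winner/strict-winner gap between Plurality and Condorcet. The paper's concrete offsets (e.g.\ $n-1$ registered $p$-voters for \cdcav{}, a lone $g$-voter for \cdcdv{}) match your ``initial gap chosen so that covering all $n$ vertex voters just crosses the threshold,'' and your counting identity $|\kappa(\{a_i\mid i\in S\})|=n \Leftrightarrow S$ dominates $G$ is exactly the paper's correctness argument.
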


\begin{proof}
	We first consider the Plurality rule and provide a parameterized reduction
	from the \textsf{W[2]}-complete problem \domset{} parameterized by the 
	solution size $ h $.
	
        \probDef{\domset}
        {An undirected graph $G = (V(G),E(G))$ and a natural number $h \in \N$.}
        {Is there a \emph{dominating set} of size at most $ h $, that is, 
          a vertex subset $ U \subseteq V(G) $ with $ |U| \le h $ such that
          each vertex from $ V(G) \setminus U $ is adjacent to at least one
          vertex from $ U $?}
	
	Let $(G,h)$ be a \domset{} instance. We construct an election
	$E=(C,V)$ with $ C = \{ p, g \} $, where $p$ is our preferred candidate.
	We define the voter set for our three problem variants differently.
	
	For \cdcav{}:
	\begin{compactitem}[\qquad -]
	  	\item The registered voter set $V$ consists of $|V(G)|-1$ $p$-voters
	(and no $g$-voters).
		\item The unregistered voter set $W$ consists of one $g$-voter
	$w_i$ for each vertex $u_i \in V(G)$.
	\end{compactitem}
	
	For \cccdv{}, we define the voter set $ V $ such that
	\begin{compactitem}[\qquad -]
		\item $V$ consists of one $g$-voter $w_i$ for each vertex $u_i \in V(G)$ and
		\item no $p$-voters.\footnote{For simplicity, we assume that every candidate
		is a winner in an election without voters. However, if one requires a non-empty
		voter set, then one can easily adjust the construction by adding one $p$-voter $w_p$
		and one $g$-voter $w_g$ with $\kappa(w_p)=\{w_p\}$ and $\kappa(w_d)=\{w_d\}$.
		This makes the argumentation in the proof slightly more complicated.}
	\end{compactitem}
	
	For \cdcdv{}, we define the voter set $V$ such that
	\begin{compactitem}[\qquad -]
		\item $V$ consist of one $p$-voter $w_i$ for each vertex $u_i \in V(G)$ and
		\item one g-voter $w'$ with bundle $\kappa(w')=\{w'\}$.
	\end{compactitem}
	
	We define the bundle $\kappa(w_i)$ as the closed neighborhood of $u_i$,
	formally $\kappa(w_i) = {w_i} \cup \{w_j \mid \{u_i,u_j\} \in E(G)\}$. 
	Finalizing our construction, we set $ k \coloneqq h $.
	
	It is clear that our construction is both a polynomial reduction and a
	parameterized reduction with respect to $k$. Also, it is obvious that the
	bundling function used in the construction is symmetric.

	We exemplary show for \cccdv{} that there is a dominating set of size $h$ if
	and only if there is a subset $V'$ of size at most $k$. The other two 
	variants can be proven analogously.
	
	For the ``only if'' part, given a dominating set $U$ of size at most $h$, we
	define $V'$ to be the corresponding voter set, that is, $V'\coloneqq \{w_i \mid u_i \in U
	\}$. 
	It is clear that $|V'| \leq h = k$ and $p$ as well as $g$ becomes winner,
	because $g$ loses $|V(G)|$ points and has the same score as $p$.
	
	For the ``if'' part, given a subset of the voters $V' \subseteq V$ of size at
	most $k$ such that p is a winner of $E' = (C, V \setminus \kappa(V'))$, we
	define $U$ to be the set of vertices corresponding to the voters from $V'$, that is, $U\coloneqq \{ u_i \mid w_i \in V' \}$.
	It follows that $|U| \leq k = h$ and for each vertex $u_i \in V(G) \setminus U$ 
	there must be a vertex $u_j \in U$ which is a neighbor of $u_i$, 
	since otherwise we will still have some $g$-voters and $p$ will not become a
	winner.
	
	Finally, we need to consider the Condorcet rule. For two candidates, the 
	Condorcet rule is equivalent to the strict majority rule and, hence, the
	proof is analogous to the proof for the Plurality rule. 
	We only need to adapt the \cdcav{}
	such that the candidate $ p $ is the only winner. We can accomplish that by
	adding to the $ V $ one $p$-voter $w'$ with bundle $\kappa(w') = \{w'\}$.
\qed\end{proof}

\subsection{\Wone{}-Hard for the Solution Size}
\label{app:w1}
From \citet{bulteau2015combinatorial} we know that \cccav{} is
\textsf{W[1]}-hard with respect to the solution size for the
Plurality voting rule, even when the maximum bundle size is three.
We show this for three other variants of the combinatorial
voter control and additional for Condorcet's voting rule.

\begin{theorem} \label{thm:w1}
	For both Plurality and Condorcet, \cccav{}, \cccdv{}, \cdcav{} and \cdcdv{} are
	all \emph{\textsf{W[1]}}-hard with respect to the solution size $ k $, even
	if the maximum bundle size b is three and there are only two candidates.
\end{theorem}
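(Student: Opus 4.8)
The plan is to reduce from the \Wone-complete \clique{} problem, parameterized by the solution size~$h$; the produced solution size is $k=\binom{h}{2}$, which is a function of~$h$ only, so the reduction is parameterized. Everything rests on a single size-three gadget together with a counting identity. Encode each edge $e=\{u,v\}$ of the input graph~$G$ by a bundle $\{w_e,w_u,w_v\}$ whose leader $w_e$ is an \emph{edge voter} and whose two other members $w_u,w_v$ are \emph{endpoint voters}; the endpoint voters are shared between the bundles of all edges at~$u$ and~$v$. Selecting the bundles of an edge set~$F$ then touches exactly $|F|$ edge voters and $|V(F)|$ endpoint voters, and the key fact is that over all $F$ with $|F|\le\binom{h}{2}$ the quantity $|F|-|V(F)|$ is maximized, with value $\binom{h}{2}-h$, and this maximum is attained \emph{only} by $F=E(K_h)$ for an $h$-clique: $\binom{h}{2}$ edges span at least $h$ vertices with equality iff they form $K_h$, and any competing dense configuration already exceeds the edge budget for $h\ge 4$. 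Since \cref{thm:tur} gives $\cdcdv{}\le\cdcav{}$ and $\cdcdv{}\le\cccdv{}\le\cccav{}$, all of whose reductions keep two candidates and never increase the bundle size, it actually suffices to reduce \clique{} to \cdcdv{} directly; I nevertheless first treat the adding variants to expose the mechanism transparently.

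For the adding variants the argument is clean because registered voters are only ever added to. For \cccav{} (this is essentially \citeauthor{bulteau2015combinatorial}'s reduction) take candidates $\{p,g\}$, make the edge voters $p$-voters and the endpoint voters $g$-voters, and give $g$ a registered lead of exactly $\binom{h}{2}-h$; adding the bundles of~$F$ raises $s_p$ by $|F|$ and $s_g$ by $|V(F)|$, so $p$ wins iff $|F|-|V(F)|\ge\binom{h}{2}-h$, i.e.\ iff $F$ encodes an $h$-clique. For \cdcav{} I dualize, letting the edge voters be $g$-voters and the endpoint voters $p$-voters and giving $p$ a registered lead of $\binom{h}{2}-h-1$, so that $p$ loses iff $|F|-|V(F)|\ge\binom{h}{2}-h$, again iff $G$ has an $h$-clique. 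In both cases the lead lives among registered voters that the controller cannot remove, so no shortcut exists.

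The essential case is \cdcdv{}, using the same gadget but now deleting bundles. Let the edge voters be $p$-voters and the endpoint voters $g$-voters; deleting the bundles of~$F$ lowers $s_p$ by $|F|$ and $s_g$ by $|V(F)|$. Since deleting a $g$-voter only strengthens~$p$, an optimal controller never deletes endpoint voters on their own, so without loss of generality the deleted set is a union of edge bundles. Then $p$ loses iff $s_p(V)-|F|<s_g(V)-|V(F)|$, i.e.\ iff the intrinsic lead $s_p(V)-s_g(V)$ is strictly smaller than $|F|-|V(F)|$; calibrating the lead to $\binom{h}{2}-h-1$ makes this equivalent to $|F|-|V(F)|\ge\binom{h}{2}-h$, hence to the existence of an $h$-clique. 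The remaining three variants then follow by feeding this instance through the reductions of \cref{thm:tur}.

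The hard part is precisely this calibration: in the deletion setting the voters that create the lead are themselves deletable, so I must stop the controller from cheaply manufacturing the required gap by deleting a handful of isolated voters instead of a whole clique. The fix is to read the entire gap off the edge/vertex count, demanding $s_p(V)-s_g(V)=|E(G)|-|V(G)|=\binom{h}{2}-h-1$, and to tune it only in the \emph{safe} direction by adding $g$-voters, whose deletion can only help~$p$ and is therefore never used; when $G$ is too sparse for the untuned gap to dominate, I would precompose with a clique-number-preserving padding so that no cheap or combinable alternative reaches the threshold. Finally, for \cond{} I use that with two candidates the Condorcet winner is exactly the strict-majority winner, and I force a unique winner by adding a single tie-breaking voter, exactly as in the proof of \cref{thm:w2}; the four constructions above then carry over verbatim.
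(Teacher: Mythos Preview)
Your approach is essentially the paper's: reduce from \clique{} with the same size-three gadget (edge voters leading bundles that contain the two endpoint voters), set $k=\binom{h}{2}$, and exploit the identity that $|F|-|V(F)|$ is maximized over $|F|\le\binom{h}{2}$ precisely at $F=E(K_h)$. The paper spells out three direct constructions (one per new variant), swapping the roles of $p$- and $g$-voters and adjusting the initial score gap with singleton-bundle dummy voters; you instead prove \cdcdv{} directly and invoke \cref{thm:tur} to cover the rest, which is a legitimate shortcut since with two candidates the Turing reduction degenerates to a single many-one instance and both reductions in \cref{thm:tur} never enlarge a bundle.

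Where your write-up is actually \emph{more} careful than the paper is the calibration in the deletion variants. The paper simply says ``add a set $D_p$ of dummy $p$-voters and a set $D_g$ of dummy $g$-voters'' to hit the target gap, and its correctness argument then tacitly assumes the solution contains only edge voters; but if dummies of the ``wrong'' type are present (e.g.\ dummy $g$-voters in \cccdv{}), the controller can delete them one-for-one and bypass the clique argument. You spot this and propose adding only safe dummies plus a clique-number-preserving padding of $G$ when the untuned gap points the wrong way. That is the right idea, but you leave the padding abstract; to close the argument you should name a concrete pad, e.g.\ disjoint copies of $K_{h-1}$ (each contributes $\binom{h-1}{2}-(h-1)>0$ to $|E|-|V|$ once $h\ge 5$) or, to cover $h=4$ as well, copies of $K_{h-1,h-1}$, which have clique number~$2$ and contribute $(h-1)(h-3)>0$. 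With any such pad the universal bound $|F|-|V(F)|\le\binom{h}{2}-h$ (equality only at $K_h$) applies to the padded graph unchanged, so no ``combinable alternative'' exists.
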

\begin{proof}
	We first consider the Plurality rule and provide a parameterized reduction
	from the \textsf{W[1]}-complete problem \clique{} parameterized by the 
	solution size $ h $.
	
        \probDef{\clique}
        {An undirected graph $G = (V(G),E(G))$ and a natural number $h \in \N$.}
        {Is there a \emph{clique} of size at least $ h $, that is, a complete subgraph with $ h$ vertices?}

	Let $(G,h)$ be a \clique{} instance. Without loss of generality, we assume $h
	> 3$. (If not, it can be solved in polynomial time.) We construct an
	election $E=(C,V)$ with $ C = \{ p, g \} $, where $p$ is our preferred
	candidate.
	For each vertex $u \in V(G)$ we define one \emph{vertex voter} $w_u$ with the bundle
	$\kappa(w_u) = \{ w_u \}$.
	For each edge $e=\{ u, u' \} \in E(G)$ we define one \emph{edge voter} $w_e$
	with the bundle $\kappa(w_e) = \{ w_e, w_u, w_{u'} \}$.
	Now, we finalize the voter set definition which slightly differs for our three problem variants.
	
	For \cccdv{}, we define the voter set $V$ such that:
	\begin{compactitem}[\qquad -]
		\item Vertex voters are $p$-voters.
		\item Edge voters are $g$-voters.
		\item We add a set $D_p$~of dummy $p$-voters and a set $D_g$~of dummy $g$-voters
			with $\kappa(w) = \{ w \}$ for each $w \in D_p \cup D_g$.
			We set set the cardinalities of~$D_p$ and~$D_g$
			such that $ s_g(V) - s_p(V) = \binom{h}{2} - h$.
	\end{compactitem}
	
	For \cdcdv{}, we define the voter set $V$ such that
	\begin{compactitem}[\qquad -]
		\item Vertex voters are $g$-voters.
		\item Edge voters are $p$-voters.
		\item We add a set $D_p$~of dummy $p$-voters and a set $D_g$~of dummy $g$-voters
			with $\kappa(w) = \{ w \}$ for each $w \in D_p \cup D_g$.
			We set set the cardinalities of~$D_p$ and~$D_g$
			such that $ s_p(V) - s_g(V) = \binom{h}{2} - h - 1$.
	\end{compactitem}
	
	For \cdcav{}:
	\begin{compactitem}[\qquad -]
		\item Vertex voters are unregistered $g$-voters.
		\item Edge voters are unregistered $p$-voters.
		\item There are no other voters in the unregistered voter set $W$.
		\item The registered voter set $V$ consist of 
			$\binom{h}{2} - h - 1$ $p$-voters.
	\end{compactitem}
	
	Finally, we set $k\coloneqq\binom{h}{2}$.
	
	It is clear that our construction is both a polynomial reduction and a
	parameterized reduction with respect to $k$.
	
	We exemplary show for \cccdv{} that there is a clique $U$	of size at least
	$h$ if and only if there is a subset $V'$ of size at most $k$ such that p
	becomes a winner of $E$. The other two variants can be proven analogously.
	
	For the ``only if'' part, given a clique~$U$ of size~$h$, we construct 
	$V'$ by adding to it any edge voter $w_e$ with $e \in E(G[U])$.
	It is clear that $|V'| \leq \binom{h}{2} = k$.
        Observe that candidate~$p$ as well as candidate~$g$ become winners.
	Candidate~$g$ loses $\binom{h}{2}$~points and candidate~$p$ loses $h$~points.
        Thus, $g$ and $p$ have the same score.
		
	For the ``if'' part, given a subset of the voters $V' \subseteq V$ of size at
	most $k$ such that p is a winner of $E' = (C, V \setminus \kappa(V'))$, we
	define $U$ to be the set of vertices corresponding to the voters from $V'$,
	that is, $U\coloneqq \{ u \in V(G) \mid w_e \in V'$ and $ u \in e \}$.
	We observe that deletion of vertex voters doesn't reduce the score of $g$
	and removing a vertex voter from~$V'$ would lead to a smaller solution.
	Hence, we can assume that~$V'$ does not contain any vertex voters.
	In order to reduce the score of $g$, enough edge voters must be removed,
	but a certain amount of vertex voters will be removed as well since they
	are in the bundles of the edge voters.
	We denote the number of indirectly removed vertex voters be~$x$.
	Clearly $x \le h$, because otherwise $p$ loses more than $h$~points,
	$g$~loses at most $\binom{h}{2}$~points, and
	$g$~remains the only winner.
	Assume towards a contradiction that $x \le h-2$.
	The score of~$g$ decreases by at least $\binom{h}{2} - h$ in $E'$ (compared to~$E$)
	so that $V'$ contains at least $\binom{h}{2} - h$ edge voters.
	However, $\binom{x}{2}\le \binom{h-2}{2} < \binom{h}{2} - h$ for any $h>3$.
	Hence, $x\ge h-2$ implying that there are at least $\binom{h}{2} - 2$ edge voters.
	Now, assume towards a contradiction that $x = h-1$.
	Then, $\binom{x}{2}\le \binom{h-1}{2} < \binom{h}{2} - 2$ for any $h>3$.
	Thus, $x=h$ implying that there are exactly $k=\binom{h}{2}$ edge voters in $V'$
	with altogether $h$~different vertex voters in their bundles.
	In this case, $U$ is a clique of size $h$, since otherwise we cannot have $\binom{h}{2}$ edges
	incident to $h$~vertices.

	Finally, we need to consider the Condorcet rule. For two candidates, the 
	Condorcet rule is equivalent to the strict majority rule and, hence, the
	proof is analogous to the proof for the Plurality rule. 
	We only need to adapt the \cdcav{}
	such that candidate $ p $ is the only winner. We can accomplish that by
	adding to the $ V $ one $p$-voter $d$ with bundle $\kappa(d) = \{d\}$. 	
	\ccccav{} can be proved analogously.
\qed\end{proof}

\subsection{Fixed-Parameter Tractability for the Number of Candidates}
\label{app:ilp}
\citet{bulteau2015combinatorial} provide an integer linear program (ILP)
that solves Plurality- and Condorcet-\cccav{} for the case when the bundling function is
anonymous (see \cref{sec:preliminaries} for the corresponding definition) and exploit Lenstra's theorem to show fixed-parameter tractability with respect to the number of candidates.
Their idea is to utilize the fact that for anonymous bundling functions, 
voters with the same preference order ``lead'' the same bundle of voters and ``follow'' the same voter as well.
Thus, with $m$~candidates, we will have at most $m!$ different bundles.
By this observation, they introduce $\ON(m!)$ variables, one variable for a bundle, 
to encode whether to select a bundle to the solution.  
Indeed, as long as the bundling function is anonymous, 
the same idea applies to the remaining three combinatorial voter control variants.
Although the technique is analogous, 
we provide the corresponding ILPs for \cref{thm:ilp,thm:ilp2,thm:ilp3} and show the correctness for the sake of completeness.

\begin{theorem} \label{thm:ilp}
	For both Plurality and Condorcet, \cccdv{} is 
	fixed-parameter tractable with respect to the number $ m $
	of candidates, if the bundling function $ \kappa $ is anonymous.
\end{theorem}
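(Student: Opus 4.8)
The plan is to follow the template that \citet{bulteau2015combinatorial} used for \cccav{} and phrase \cccdv{} as an integer linear program (ILP) whose number of variables depends only on~$m$, so that Lenstra's algorithm~\cite{lenstra1983integer} yields fixed-parameter tractability. First I would exploit anonymity to reduce the instance to type-level data. For Plurality, group the registered voters by their favorite candidate; there are at most $m$ such types. For Condorcet, group them by their full preference order, giving at most $m!$ types. By the anonymity condition (see \cref{sec:preliminaries}, read with ``same preference order'' for the Condorcet case), all voters of one type share the same bundle, and any two voters of the same type lie in exactly the same bundles. An easy consequence is that the bundle of a type-$c$ voter is a union of \emph{complete} types $S_c\subseteq C$ (all-or-nothing), so that selecting any single type-$c$ voter as a leader deletes precisely the voters whose type lies in $S_c$.

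Next I would set up the ILP. For each type $c$ introduce a binary variable $x_c$ indicating whether the solution $V'$ contains a leader of type $c$, and an auxiliary binary variable $z_c$ indicating whether all voters of type $c$ are deleted. The overlap of bundles is captured by the linking constraints
\begin{align*}
 z_{c'} \ge x_c \quad \text{for all types } c,c' \text{ with } c'\in S_c, \qquad
 z_{c'} \le \sum_{c\,:\,c'\in S_c} x_c \quad \text{for all types } c',
\end{align*}
which together pin down $z_{c'}$ to be $1$ exactly when some selected leader's bundle covers type $c'$. The solution size is bounded by $\sum_c x_c \le k$. For Plurality, the final score of candidate $c$ is $n_c(1-z_c)$, where $n_c$ is its initial score, and I would add for every $c\ne p$ the constraint $n_c(1-z_c)\le n_p(1-z_p)$, which states that $p$ has at least the score of every opponent and hence wins. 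For Condorcet I would instead require, for every $c\ne p$,
\begin{align*}
 \sum_{\pi\,:\,p\succ_\pi c} n_\pi(1-z_\pi) \;>\; \sum_{\pi\,:\,c\succ_\pi p} n_\pi(1-z_\pi),
\end{align*}
where $\pi$ ranges over preference types and $n_\pi$ is the number of voters of type $\pi$; by integrality the strict inequality is an honest linear constraint, and it states exactly that $p$ beats every opponent head-to-head, i.e.\ is the unique Condorcet winner.

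I would then prove the equivalence in both directions: a feasible assignment yields a solution $V'$ by picking one leader per type with $x_c=1$ (such a voter exists since the type is non-empty), and the linking constraints guarantee that $\kappa(V')$ deletes exactly the types with $z_c=1$, so $p$ wins the remaining election; conversely, a genuine solution induces a feasible assignment in which redundant same-type leaders are dropped without increasing the size, so one leader per triggered type suffices. Since the number of variables is $O(m)$ for Plurality and $O(m!)$ for Condorcet---in both cases a function of $m$ only---and all coefficients are polynomially bounded in the instance size, Lenstra's theorem decides feasibility of the ILP in $f(m)\cdot|I|^{O(1)}$ time, establishing the claim.

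The main obstacle I anticipate is the correctness of the ILP rather than its size: one must argue carefully that the all-or-nothing deletion forced by anonymity is faithfully modeled by the linking constraints, so that $z_c$ equals $1$ precisely when type $c$ is actually deleted, and that counting one leader per triggered type neither undercounts nor overcounts the true size $|V'|$. For Condorcet one additionally has to confirm that the strict pairwise-majority constraints really capture ``$p$ is the unique Condorcet winner'', including the degenerate case of an empty electorate, which can be handled exactly as in the \cccav{} construction of \citet{bulteau2015combinatorial}.
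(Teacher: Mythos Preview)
Your proposal is correct and follows essentially the same approach as the paper: an ILP with one ``select a leader of this type'' variable~$x$ and one ``this type gets deleted'' variable (your~$z$, their~$y$) per preference type, linking constraints that force $z$ to track the union of selected bundles, a winning constraint tailored to the rule, and Lenstra's theorem for the FPT bound. The only notable difference is that for Plurality you group by favorite candidate and thus use $O(m)$ variables, whereas the paper uniformly groups by full preference order and uses $O(m!)$ variables for both rules; your version is tighter but otherwise identical in structure. One small point the paper makes explicit that you leave implicit is the constraint $x_i\le N_i$ preventing selection of a leader from an empty type; you cover this informally by saying types are derived from the registered voters, but it is worth stating as an ILP constraint.
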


\begin{proof}
	Given a \cccdv{} instance~$((C,V), \kappa, p, k)$ with $m$ candidates and anonymous bundling function~$\kappa$, 
        we construct an integer linear program (ILP) with at most $ \ON(m!)$ variables and at most $ \ON(|V|+m) $ constraints for \cccdv{} similar to
	\citet{bulteau2015combinatorial} for \cccav{}.
	Fixed-parameter tractability follows because every ILP with $ \rho $ variables and 
	$L$ input bits is solvable in $O(\rho^{2.5 \rho + o(\rho)} L)$ time \cite{lenstra1983integer}.

	Since $m$ candidates are given, there are at most $m!$ voters with pairwise
	different preference orders. We denote these as $ \succ_1, \succ_2,
	\dots, \succ_{m!} $ and note that there are at most $m!$ different
	bundles, because $\kappa$ is leader-anonymous.

	We will use the following notation for the construction of the ILP:
	\begin{compactenum}
	\item Define $ \kappa(\succ_i) $ as the set of preference orders of the voters
		included in the bundle of the voters with preference order $\succ_i$; note that by the anonymity, if $\kappa(\succ_i)$ contains a preference order~$\succ$, 
                then every voter with preference order~$\succ$ is in the bundle of the voter that has preference order~$\succ_i$. 
	\item Define $\kappa^{-1}(\succ_j) \coloneqq \{\succ_i \mid \succ_j \in \kappa(\succ_i)\} $ as the set of preference orders that include $\succ_j$ in their bundles.
	\item Define $ N_i $ as the number of voters with preference order $ \succ_i $ in $ V $.
	\item For each candidate~$a\in C$, define $ F(a) $ as the set of preference orders in which $ a \in C $ is ranked first and
	\item let $ s(a) $ be the initial score in election~$(C, V)$.
	
	\end{compactenum}
        To encode a solution~$W'$, 
        for each preference order $ \succ_i, i \in [m!] $, we introduce two Boolean
	variables, $x_i$ and $y_i$.
	The intended meaning of $x_i = 1$ is that the
	sought solution contains a voter with preference order $ \succ_i $.
	The intended meaning of $ y_i = 1 $ is that $ \kappa(V') $
	contains a voter with preference order $ \succ_i $.

	Now, we are ready to state the integer linear problem.
	Note that it suffices to find a feasible solution. Thus, we 
	do not need to specify any objective function.
  \allowdisplaybreaks
\begin{flalign}
    \label{ilp:solutionbound}
    \sum_{i\in [m!]}{x_{i}} &\le k, & \\
    \label{ilp:voterbound}
    x_{i} &\le N_i, & \forall i \in [m!]\quad\quad \\
    \label{ilp:y-one}
    \sum_{\mathord{\succ_i} \in \kappa^{-1}(\mathord{\succ_j})}x_i &\leq m! \cdot y_j, & \forall j \in [m!]\quad\quad \\
    \label{ilp:y-two}
    \sum_{\mathord{\succ_i} \in \kappa^{-1}(\mathord{\succ_j})}x_i &\geq y_j, & \forall j \in [m!]\quad\quad \\
    \label{ilp:winning}
    s(p) - \sum_{\succ_j \in F(p)} N_j\cdot y_j &\geq s(a) - \sum_{\succ_j \in F(a)} N_j\cdot y_j,& \forall a \in C\setminus \{p\}\quad\quad\\
    \label{ilp:binary_vars}
    x_{i}, y_{i} & \in \{0,1\}, & \forall i \in [m!]\quad\quad 
  \end{flalign}

	Constraint~\eqref{ilp:solutionbound} ensures that at most $k$ voters are added to the
	solution.
	Constraint~\eqref{ilp:voterbound} ensures that the voters added to the
	solution are indeed present in $ V $.
	Constraints \eqref{ilp:y-one} and \eqref{ilp:y-two} ensure that variables 
	$ y_j, 1 \le j \le m! $, have correct values. Indeed, if for some preference
	order $ \succ_i $ we have $ x_i=1 $ and $ \succ_j \in \kappa(\succ_i) $, then
	constraint~\eqref{ilp:y-one} ensures that $ y_j=1 $.
	On the other hand, if for some
	preference order $\succ_j$ we have that for each preference order $\succ_i$
	with $ \succ_j \in \kappa(\succ_i) $ it holds that $ x_i=0 $, then 
	constraint~\eqref{ilp:y-two} ensures that $y_j=0$.
	Constraint~\eqref{ilp:winning} ensures that $ p $ has a (plurality) score which
	is at least as high as the score of every candidate (which makes $ p $ a
	winner).
	Clearly, there is a solution for this integer linear program if and only if
	there is a solution for \cccdv{} with the Plurality rule.

	For the case of the Condorcet rule, we need to define the following additional
	parameters:
	Let $ s(a, b) \coloneqq | \{ v \in V \mid a \succ_v b \} | $ denote the number of
	voters that prefer candidate $ a $ over candidate $	b $ and $ P(a, b) $ denote
	the set of preference orders in which $ a $ is preferred to $ b $.
	We modify only constraint \eqref{ilp:winning} as follows:	
	\begin{align}
		s(p,a) - \sum_{\mathclap{\succ_j \in P(p,a)}} N_j \cdot y_j & \ge s(a,p) - \sum_{\mathclap{\succ_j
		\in P(a,p)}} N_j \cdot y_j +1
	& \forall a \in C \setminus \{p\}	
	\label{ineq:ilp13}
	\end{align}
	This ensures that $p$ can beat every other given candidate $ a $ in a 
	head-to-head contest if and only if there is a solution to the ILP.

        As for the running time, it is clear that both ILPs have $O(m\!)$ variables and $O(|V|+m\!)$ constraints.
        By the famous result of \citet{lenstra1983integer}, fixed-parameter tractability follows.
\qed\end{proof}

\begin{theorem}
	\label{thm:ilp2}
	For both Plurality and Condorcet, \cdcav{} is fixed-parameter tractable
	with respect to the number $ m $ of candidates, if the bundling function
	$ \kappa $ is anonymous.
\end{theorem}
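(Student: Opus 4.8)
The plan is to mirror the ILP construction from the proof of \cref{thm:ilp}, again exploiting that an anonymous bundling function has at most $m!$ distinct bundles (one per preference order $\succ_1,\dots,\succ_{m!}$), so that a faithful integer program needs only $\ON(m!)$ variables. The one genuine difference is the objective: destructive control succeeds once \emph{some} candidate overtakes $p$, and such a disjunction is not directly expressible as a single ILP. I would therefore ``guess'' the defeater, solving one ILP for each candidate $a\in C\setminus\{p\}$ and accepting iff at least one of these $m-1$ programs is feasible. Since each program is solvable in \fpt{}-time in its number of variables by \citet{lenstra1983integer}, running $m-1$ of them keeps the whole procedure fixed-parameter tractable in $m$.

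For a fixed guessed defeater $a$, I would set up the following program. Let $M_i$ be the number of voters in $W$ with preference order $\succ_i$, let $F(c)$ be the set of preference orders ranking $c$ first, and let $s(c)$ be $c$'s score in the initial election $(C,V)$. As before I introduce Boolean variables $x_i$ (``the solution $W'$ contains a leader with preference $\succ_i$'') and $y_j$ (``$\kappa(W')$ contains a voter with preference $\succ_j$''); note that by anonymity a single chosen leader of type $\succ_i$ already pulls in \emph{all} $W$-voters whose preference lies in $\kappa(\succ_i)$, so selecting at most one leader per preference order is without loss of optimality and the cost $\sum_i x_i$ equals $|W'|$. I keep the budget constraint $\sum_i x_i\le k$, the availability constraint $x_i\le M_i$, and the two linking constraints \eqref{ilp:y-one} and \eqref{ilp:y-two} that tie the $y_j$ to the $x_i$ exactly as in \cref{thm:ilp}. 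Because we add (rather than delete) voters, scores go \emph{up}, and the sole new constraint encodes that $a$ strictly beats $p$ in the final election:
\begin{equation*}
 s(a)+\sum_{\succ_j\in F(a)} M_j\, y_j \;\ge\; s(p)+\sum_{\succ_j\in F(p)} M_j\, y_j + 1.
\end{equation*}
A feasible assignment yields a solution $W'$ of size at most $k$ after which $a$ outscores $p$, so $p$ is not a Plurality winner; conversely any destructive solution activates some defeater $a$ and induces a feasible assignment for that $a$'s program.

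For the Condorcet rule I would keep the same variables and all structural constraints, replacing only the winning inequality by a head-to-head condition. Using $s(a,b)=|\Set{v\in V}{a\succ_v b}|$ and letting $P(a,b)$ be the set of preference orders ranking $a$ above $b$, the guessed defeater $a$ merely has to avoid losing to $p$, which makes $p$ fail to be the strict Condorcet winner; this is captured by
\begin{equation*}
 s(a,p)+\sum_{\succ_j\in P(a,p)} M_j\, y_j \;\ge\; s(p,a)+\sum_{\succ_j\in P(p,a)} M_j\, y_j .
\end{equation*}
Each of the $m-1$ programs again has $\ON(m!)$ variables and $\ON(|W|+m)$ constraints, so \fpt{}-membership follows from Lenstra's theorem. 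I expect the main thing to get right is not any single constraint but the reduction of the destructive objective to a bounded family of constructive-style programs: one must argue that guessing the defeater is both sound (a feasible program gives a genuine defeat of $p$) and complete (every successful addition makes some specific candidate overtake $p$), and, for Condorcet, that ``$p$ is not the unique Condorcet winner'' is correctly witnessed by a single candidate $a$ that ties or beats $p$.
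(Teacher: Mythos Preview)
Your proposal is correct and shares the paper's core strategy: exploit anonymity to bound the number of distinct bundles by $m!$, encode the problem as an integer program with $\ON(m!)$ variables, and invoke Lenstra's theorem. The difference lies in how the existential over the defeater is handled. The paper builds a \emph{single} ILP by introducing, for each $a\in C\setminus\{p\}$, a binary indicator~$\alpha_a$ and writing the score constraint as $\alpha_a\bigl(s(p)+\sum_{\succ_j\in F(p)}N_jy_j+1\bigr)\le\alpha_a\bigl(s(a)+\sum_{\succ_j\in F(a)}N_jy_j\bigr)$ together with $\sum_a\alpha_a\ge 1$; this keeps everything in one program but literally produces bilinear terms $\alpha_a\cdot y_j$ that require a routine (but unstated) linearization before Lenstra applies. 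Your external loop over the $m-1$ possible defeaters yields $m-1$ genuinely linear programs and sidesteps this issue at the cost of a harmless extra factor~$m$ in the running time. For Condorcet you also, correctly, only require that the guessed candidate tie-or-beat $p$ head-to-head, since a tie already prevents $p$ from being the unique Condorcet winner; the paper's corresponding constraint demands a strict win by~$a$, which is slightly too strong.
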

\begin{proof}
	As in the proof for \cref{thm:ilp}, we construct an integer linear program (ILP)
	with at most $ \ON(m!) $ variables and constraints, and use the 
	same notation, except the following two:
        \begin{itemize}
          \item $ N_i $ denotes the number of voters with preference order $ \succ_i $ in
          $ W $.
          \item The intended meaning of $ y_i = 1 $ is that $ \kappa(W') $
          contains a voter with preference order $ \succ_i $.
        \end{itemize}
        
        The constraints for the ILP are as follows: 
        \allowdisplaybreaks
	\begin{flalign}
	\sum_{i \in [m!]} x_i & \le k, & \\	
	x_i & \le N_i, & \forall i \in [m!] \\
	\sum_{\succ_i \in \kappa^{-1}(\succ_j)} x_i & \le m! \cdot y_i, &  \forall j \in [m!]\\
	\sum_{\succ_i \in \kappa^{-1}(\succ_j)} x_i & \ge y_j, & \forall j \in [m!]\\
        \alpha_a \Big( s(p) + \sum_{\mathclap{\succ_j \in F(p)}} N_j \cdot y_j +1 \Big)
		& \le \alpha_a \Big( s(a) + \sum_{\mathclap{\succ_j \in F(a)}} N_j \cdot y_j \Big),
		& \forall a \in C \setminus \{p\} \label{ineq:ilp8}	\\
		\sum_{a \in C \setminus \{p\} } \alpha_a & \ge 1, \label{ineq:ilp9}\\
	x_i, y_i, \alpha_a & \in \{0,1\},  & \forall i \in [m!]	\label{ineq:ilp51}
      \end{flalign}
	If $ \alpha_a = 0 $, constraint~\eqref{ineq:ilp8} is valid. If $ \alpha_a = 1
	$, constraint~\eqref{ineq:ilp8} ensures that $ p $ loses against candidate $ a
	$.
	Constraint~\eqref{ineq:ilp9} ensures that at least one of the Boolean variables 
	$ \alpha_a $ has value $ 1 $ and, therefore, there exists at least one
	candidate $ a \in C $ such that $ a $ wins against $ p $.

	For the Condorcet rule, alter constraint~\eqref{ineq:ilp8} as
	follows:
		
	\begin{align}
		\alpha_a \Big( s(p,a) + \sum_{\mathclap{\succ_j \in O(p,a)}} N_j \cdot y_j +1 \Big)
		& \le \alpha_a \Big( s(a,p) + \sum_{\mathclap{\succ_j \in O(a,p)}} N_j \cdot y_j \Big),
		& \forall a \in C \setminus \{p\}
		\label{ineq:ilp12}
	\end{align}
	
	Here $ s(a, b) = | \{ w \in W \mid a \succ_w b \} | $ is the number of unregistered voters that prefer
	candidate $ a $ over candidate $ b $.
	
        Constraint~\eqref{ineq:ilp9} ensures that at least one of the Boolean
        variables $ \alpha_a $ has value $ 1 $ while constraint~\eqref{ineq:ilp12}
        ensures that $ p $ loses against at least one of the candidates $ a $ in a head-to-head contest.

        We omit the reasoning for the running time as it is the same as the one shown for \cref{thm:ilp}.
\qed\end{proof}
\begin{theorem}
	\label{thm:ilp3}
	For both Plurality and Condorcet, \cdcdv{} is fixed-parameter-tractable
	with respect to the number $ m $ of candidates, if the bundling function
	$ \kappa $ is anonymous. 
\end{theorem}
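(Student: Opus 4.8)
The plan is to adapt the integer-linear-program approach of the two preceding theorems, combining the \emph{deletion} machinery of \cref{thm:ilp} with the \emph{destructive} defeater machinery of \cref{thm:ilp2}. As before, anonymity of $\kappa$ guarantees that there are at most $m!$ distinct preference orders $\succ_1,\dots,\succ_{m!}$ and at most $m!$ distinct bundles, so an ILP with $\ON(m!)$ variables and constraints suffices; fixed-parameter tractability with respect to $m$ then follows directly from Lenstra's theorem \cite{lenstra1983integer}.

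I would reuse verbatim the notation of \cref{thm:ilp} for the deleting setting: $N_i$ is the number of $\succ_i$-voters in~$V$, the Boolean variable $x_i$ indicates that a $\succ_i$-voter is placed into the solution~$V'$, and $y_j$ indicates that $\kappa(V')$ contains a $\succ_j$-voter. The constraints $\sum_{i\in[m!]} x_i \le k$, $x_i \le N_i$, and the two linking constraints $\sum_{\succ_i \in \kappa^{-1}(\succ_j)} x_i \le m!\cdot y_j$ and $\sum_{\succ_i \in \kappa^{-1}(\succ_j)} x_i \ge y_j$ carry over unchanged, since the selection-and-bundling structure is identical to \cccdv{}. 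The only genuinely new ingredient is the winning condition: instead of forcing $p$ to beat everyone, I introduce the defeater variables $\alpha_a$ of \cref{thm:ilp2} and, because deleting voters \emph{lowers} scores, write the gated constraint
\begin{align*}
\alpha_a\Big(s(p) - \sum_{\succ_j \in F(p)} N_j\cdot y_j + 1\Big) \le \alpha_a\Big(s(a) - \sum_{\succ_j \in F(a)} N_j\cdot y_j\Big), \quad \forall a \in C\setminus\{p\},
\end{align*}
together with $\sum_{a\in C\setminus\{p\}} \alpha_a \ge 1$. When $\alpha_a = 1$ this forces $a$'s post-deletion score to strictly exceed $p$'s, and the last constraint guarantees that at least one such defeater exists; when $\alpha_a = 0$ the corresponding constraint is vacuous. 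For the Condorcet rule I would, exactly as in the earlier proofs, replace the plurality-score defeater constraint by its head-to-head analogue built from $s(a,b)$ and the sets $P(a,b)$ of preference orders ranking $a$ above $b$, so that $\alpha_a = 1$ certifies that $p$ loses the pairwise contest against $a$ after deletion.

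I expect the main subtlety to be getting the sign and direction of the defeater inequality right: the deletion setting subtracts the $y_j$-terms (as in \cref{thm:ilp}) while the destructive goal reverses the comparison (as in \cref{thm:ilp2}), so one must fold in both modifications simultaneously without letting them cancel. A secondary point is the $\alpha_a$-gating, which is quadratic as written; as in \cref{thm:ilp2} it is to be read as the standard conditional ``if $\alpha_a=1$ then $\dots$'' constraint, linearisable by a big-$M$ term. Given these two template proofs, the correctness argument---that a feasible ILP solution corresponds exactly to a size-at-most-$k$ bundle deletion making $p$ a non-winner, and conversely---is then routine, and the running time is identical to that of \cref{thm:ilp,thm:ilp2}.
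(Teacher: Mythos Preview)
Your proposal is correct and essentially identical to the paper's own proof: the paper also takes the ILP from \cref{thm:ilp2}, swaps the ``$+$'' for ``$-$'' in the defeater constraint to account for deletion, redefines $N_i$ and $y_i$ over $V$ instead of $W$, and treats the Condorcet case by the analogous head-to-head substitution. Your explicit remark about linearising the $\alpha_a$-gating via a big-$M$ term is a small clarification the paper leaves implicit.
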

\begin{proof}
	This ILP is almost the same as the one for \cdcav{}. The only
	difference is that we use ``$-$'' instead of ``+'' in
	constraint~\eqref{ineq:ilp8}:
	
	\begin{align}
	\alpha_a \Big( s(p) - \sum_{\mathclap{\succ_j \in F(p)}} N_j \cdot y_j +1 \Big)
		\le \alpha_a \Big( s(a) - \sum_{\mathclap{\succ_j \in F(a)}} N_j \cdot y_j \Big),
		 \forall a \in C \setminus \{p\}	
		\label{ineq:ilp10}
	\end{align}
	
	Note that, in this problem variant, we do not try to find a subset of unregistered 
	voters $ W' \subseteq W $ to add to the election, but a subset of the 
	registered voters $ V' \subseteq V $ to remove from the election. 
	The definitions for $ N_i $ and $ y_i $ change accordingly.

	For the Condorcet rule, alter 
	constraint~\eqref{ineq:ilp10} as follows:
	\begin{align}
		\alpha_a \Big( s(p,a) - \sum_{\mathclap{\succ_j \in O(p,a)}} N_j \cdot y_j +1 \Big)
		 \le \alpha_a \Big( s(a,p) - \sum_{\mathclap{\succ_j \in O(a,p)}} N_j \cdot y_j \Big), \forall a \in C \setminus \{p\}
	\end{align}
	Here $ s(a, b) = | \{ w \in W \mid a
	\succ_w b \} | $ is the number of unregistered voters that prefer
	candidate $ a $ over candidate~$b$.
        We omit the reasoning for the running time as it is the same as the one shown for \cref{thm:ilp}.
\qed\end{proof}

\section{Missing Proofs}
\appendixProofText

\label{app:inapproximability}

\label{app:disjoint}

\end{document}